\newtheorem{thm}{Theorem}
\newtheorem{prop}{Proposition}
\newtheorem{lem}{Lemma}
\newtheorem{cor}{Corollary}
\def \Zl {{\mathbbm Z}}
\def \Rl {{\mathbbm R}}
\def \Cl {{\mathbbm C}}
\def \e {{\bf e}}
\def \e {{\mathbf e}}
\newcommand{\ob}[1]{\left(#1\right)}
\newcommand{\cb}[1]{\left\lbrace #1\right\rbrace}
\newcommand{\tb}[1]{\left[#1\right]}
\newcommand{\mb}[1]{\left|#1\right|}
\title{\it Quantum Pair State Transfer on Isomorphic Branches}
\author[1]{Hiranmoy Pal}
\author[2]{Sarojini Mohapatra}
\affil[1,2]{National Institute of Technology Rourkela, India-769008 palh@nitrkl.ac.in}
\date{\today}
\begin{document}
	
	\maketitle

	
		\begin{abstract}
      The evolution of certain pair state in a quantum network with isomorphic branches, governed by the Heisenberg $XY$ Hamiltonian, depends solely on the local structure, and it remains unaffected even if the global structure is altered. All graphs which enable high-fidelity vertex state transfer can be considered as isomorphic branches of a quantum network to exhibit high-fidelity pair state transfer. The results are used to unveil the existence of pair state transfer in various graphs, including paths, cycles, and others.\\
      
    \noindent{\it Keywords:}  Spectra of graphs, Equitable partition, Continuous-time quantum walk, Perfect state transfer. \\\\
    {\it MSC: 15A16, 05C50, 12F10, 81P45.}
	\end{abstract}

	
	\newpage
  \section{Introduction}
The continuous-time quantum walk \cite{farhi} on a quantum spin network, modeled by a graph $G$ with the Heisenberg $XY$ Hamiltonian as described in \cite{bac, bose, chr1}, is governed by the transition matrix $U_G(t)=\exp{\left(itA\right)},$ where $t\in\Rl$ and $A$ is the adjacency matrix of $G.$ The Laplacian matrix $L$ can be used instead of the adjacency matrix when defining the transition matrix $U_G(t)$ whenever $XYZ$ interaction model \cite{bose1} is adopted. However, for a regular graph, analyzing state transfer using either the adjacency or Laplacian model is equivalent, as both considerations yield the same information. The state associated with a vertex $a$ in $G$ is considered to be the characteristic vector $\e_a.$ Perfect state transfer (PST) \cite{bose, chr1} occurs at time $\tau$ between two distinct vertices $a$ and $b$ in $G$ whenever the fidelity of transfer $\mb{\e_a^TU_G(\tau)\e_b}^2$ attains its maximum value $1.$ Since PST is a rare phenomenon \cite{god2}, a relaxation known as pretty good state transfer (PGST) was introduced in \cite{god1, vin}. A graph $G$ is said to have PGST between the vertices $a$ and $b$ whenever the fidelity $\mb{\e_a^TU_G(t)\e_b}^2$ comes arbitrarily close to $1$ for appropriate choices of $t.$ The pair state associated with a pair of vertices $(a,b)$ is considered to be $\frac{1}{\sqrt{2}}\left(\e_a-\e_b\right)$. In the case of perfect pair state transfer (PPST), the fidelity of transfer between two linearly independent pair states assumes the maximum value $1.$ As like PGST, one may consider pretty good pair state transfer (pair-PGST) where pair states are considered instead of the vertex states. In the past two decades, several network topologies having high fidelity vertex state transfer has been observed, such as the Cartesian powers of the path on two or three vertices \cite{chr1}, the path $P_n$ on $n$ vertices \cite{chr1, god4, god1, bom}, circulant graphs \cite{ange1, bas1, pal4, pal6, pal7}, cubelike graphs \cite{ber, che}, Cayley graphs \cite{cao3, cao, pal3, tan}, distance regular graphs \cite{cou2}, Hadamard diagonalizable graphs \cite{joh1}, signed graphs \cite{brow}, corona products \cite{ack1}, blow-up graphs \cite{pal9}, etc. We find that these graphs can be considered as isomorphic branches to a larger network to enable various properties associated to pair state transfer. Using the Laplacian model, pair state transfer was first introduced by Chen et al. \cite{che1}, where it is shown that among paths and cycles, only the paths on three or four vertices and the cycle on four vertices admit Laplacian PPST, provided at least one of the pairs is an edge. Further investigation on pair state transfer can be found in cubelike graphs \cite{cao2}, Cayley graphs \cite{cao1, luo}, paths \cite{wang2}, vertex coronas \cite{wang}. Kim et al. \cite{kim} later generalized the notion of pair state transfer using $s$-pair states of the form $\e_a+s\e_b,$ where $s$ is a non-zero complex number. They analyzed the existence of perfect state transfer between $s$-states in complete graphs, cycles, and antipodal distance regular graphs having perfect vertex state transfer.

In this article, we study continuous-time quantum walks on graphs relative to the adjacency matrix. The subsequent sections are organized as follows. In Section \ref{2s2}, we present several results on the existence of PPST and pair-PGST in graphs, and briefly introduce equitable partition for weighted graphs. A non-trivial relationship between the transfer of vertex states and pair states is established in Section \ref{2s_3}, which provides a framework for constructing infinite families of graphs, such as trees, unicyclic graphs, and others, that exhibit PPST. Section \ref{2s3} includes the characterization of pair state transfer on paths and cycles. Although not all paths or cycles admit PPST, we show that addition of a few edges to it enable PPST in the resulting graph. In Section \ref{2s5}, we introduce perfect $(m,L)$-state transfer in graphs, and uncover a few related observations.

\section{Preliminaries}\label{2s2}
Let $G$ be an undirected and weighted graph having a vertex set $V(G)$ and a weight function $w: V(G)\times V(G)\to \mathbb{R}$ which is symmetric. If $G$ is a simple graph then $w(a,b)=1,$ whenever $a$ and $b$ are adjacent in $G$, otherwise $w(a,b)=0.$ The adjacency matrix $A$ of a graph $G$ is a square matrix whose rows and columns are indexed by the vertices of $G$ and $A_{a,b}=w(a,b).$ A graph $G$ is said to have perfect pair state transfer (PPST) between two linearly independent pair states $\frac{1}{\sqrt{2}}\ob{\e_a-\e_b}$ and $\frac{1}{\sqrt{2}}\ob{\e_c-\e_d}$ associated to $(a,b)$ and $(c,d),$ respectively, if there exists $\tau \in \mathbb{R}$ such that
\begin{equation}\label{equ1}
     U_G(\tau)\ob{\e_a-\e_b}=\gamma\ob{\e_c-\e_d}~~~\text{for some}~\gamma\in\mathbb{C}.
 \end{equation}
 In which case, we simply say that PPST occurs between pairs $(a,b)$ and $(c,d).$ If PPST occurs from $(a,b)$ to itself at time $\tau ~(\neq 0),$ then $G$ is said to be periodic at $(a,b).$ The spectrum $\sigma(G)$ is the set of all eigenvalues of the adjacency matrix of $G.$ Suppose $E_{\lambda}$ is the orthogonal projection onto the eigenspace corresponding to eigenvalue $\lambda.$ If $\lambda_1, \lambda_2,\ldots, \lambda_d$ are the distinct eigenvalues of $A,$ then the spectral decomposition of $A$ is given by
 \[A=\sum\limits_{j=1}^d\lambda_j E_{\lambda_j}.\] The idempotents satisfy $E_{\lambda_j}E_{\lambda_k}=0,$ whenever $j\neq k,$ and $E_{\lambda_1}+E_{\lambda_2}+\cdots+E_{\lambda_d}=I,$ where $I$ is the identity matrix. The eigenvalue support of the pair state  $\frac{1}{\sqrt {2}}\left(\e_a-\e_b\right)$ associated to $(a,b)$ is defined by $\sigma_{ab}=\{\lambda\in \sigma(G):E_{\lambda}(\e_a-\e_b)\neq0\}.$ If PPST occurs between $(a,b)$ and $(c,d),$ then \eqref{equ1} implies that the support of $(a,b)$ and $(c,d)$ are identical. The states associated to $(a,b)$ and $(c,d)$ are called strongly cospectral if and only if 
\[E_{\lambda}(\e_a-\e_b)=\pm E_{\lambda}(\e_c-\e_d)\]
holds for all $\lambda\in \sigma(G)$. It is well known that if PPST occurs between $(a,b)$ and $(c,d)$ in $G$ then the associated states are strongly cospectral \cite[Theorem 2.3]{kim}. The strong cospectrality condition together with the spectral decomposition of $A$ yields 
\begin{eqnarray*}
    \ob{\e_a-\e_b}^T A^k\ob{\e_a-\e_b}= \ob{\e_c-\e_d}^TA^k\ob{\e_c-\e_d},
\end{eqnarray*}
for all non-negative integers $k$. If $G$ is simple then $(a,b)$-th entry of $A^k$ represents the number of walks of length $k$ from $a$ to $b$. Let $N(a)$ denote the set of all neighbours of vertex $a.$ Considering $k=2,$  we arrive at the following conclusion.
    \begin{prop}\label{2c1}
     If $(a,b)$ and $(c,d)$ are strongly cospectral in a simple graph $G,$ then 
     \[\mb{N(a)\cup N(b)}-\mb{N(a)\cap N(b)}=\mb{N(c)\cup N(d)}-\mb{N(c)\cap N(d)}.\]
    \end{prop}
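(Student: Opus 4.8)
The plan is to specialize the walk-counting identity already derived from strong cospectrality to the case $k=2$ and then read off each term combinatorially. Expanding the quadratic form gives
\[
\ob{\e_a-\e_b}^TA^2\ob{\e_a-\e_b}=\ob{A^2}_{a,a}+\ob{A^2}_{b,b}-2\ob{A^2}_{a,b},
\]
so the whole computation reduces to identifying the diagonal and off-diagonal entries of $A^2$ for a simple (in particular, loop-free) graph. Since $a\neq b$ and $c\neq d$ (the pair states are nonzero), these entries are genuinely the ones we want.

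Next I would interpret the entries: $\ob{A^2}_{a,a}$ counts closed walks of length $2$ based at $a$, which in a simple loop-free graph is exactly $\deg(a)=\mb{N(a)}$; likewise $\ob{A^2}_{b,b}=\mb{N(b)}$. The off-diagonal entry $\ob{A^2}_{a,b}$ counts walks of length $2$ from $a$ to $b$, i.e.\ common neighbours, so $\ob{A^2}_{a,b}=\mb{N(a)\cap N(b)}$. Substituting,
\[
\ob{\e_a-\e_b}^TA^2\ob{\e_a-\e_b}=\mb{N(a)}+\mb{N(b)}-2\mb{N(a)\cap N(b)}.
\]

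Then I would apply inclusion–exclusion, $\mb{N(a)}+\mb{N(b)}=\mb{N(a)\cup N(b)}+\mb{N(a)\cap N(b)}$, to rewrite the right-hand side as $\mb{N(a)\cup N(b)}-\mb{N(a)\cap N(b)}$, and do the same for $(c,d)$. Equating the two expressions via the $k=2$ instance of the strong-cospectrality walk identity yields the claimed equality. There is no real obstacle here beyond bookkeeping; the only point that needs a word of care is invoking that $G$ is simple and loop-free so that $\ob{A^2}_{a,a}=\deg(a)$ and that length-$2$ walks between distinct vertices coincide with common neighbours, which is precisely the hypothesis of the proposition.
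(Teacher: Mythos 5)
Your proposal is correct and follows exactly the route the paper takes: the strong-cospectrality walk identity $\ob{\e_a-\e_b}^TA^k\ob{\e_a-\e_b}=\ob{\e_c-\e_d}^TA^k\ob{\e_c-\e_d}$ specialized to $k=2$, with $\ob{A^2}_{a,a}=\deg(a)$ and $\ob{A^2}_{a,b}=\mb{N(a)\cap N(b)}$, followed by inclusion--exclusion. Your write-up simply makes explicit the bookkeeping that the paper leaves to the reader.
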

 It is worth mentioning that periodicity is necessary for a pair state to exhibit PPST \cite[Theorem 2.5]{kim}. 
 The following result provides necessary and sufficient conditions for a pair state to be periodic.
\begin{thm}\cite{kim}\label{2t_1}
   Let $G$ be a graph with adjacency matrix $A.$ Suppose $S$ is the eigenvalue support of a pair of vertices $(a,b)$ in $G.$ Then $(a,b)$ is periodic if and only if either:
\begin{enumerate}
\item All eigenvalues in $S$ are integers, or
\item There exists a square-free integer $\Delta>1$ and an integer $c$ such that each eigenvalue
 in $S$ is in the form $\frac{c+d\sqrt{\Delta}}{2}$ for some integer $d.$
\end{enumerate} 
\end{thm}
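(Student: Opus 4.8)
The plan is to reduce periodicity of the pair state to a statement about the eigenvalue support $S$ and then invoke a classical number-theoretic characterization of when a set of algebraic integers has the property that their pairwise differences are integer multiples of a common "period". First I would recall that $(a,b)$ is periodic exactly when $U_G(\tau)(\e_a-\e_b)=\gamma(\e_a-\e_b)$ for some $\tau\neq 0$ and $\gamma\in\mathbb{C}$; applying the spectral decomposition $A=\sum_j\lambda_j E_{\lambda_j}$ and using that the vectors $E_{\lambda_j}(\e_a-\e_b)$ for $\lambda_j\in S$ are nonzero and mutually orthogonal, this forces $e^{i\tau\lambda_j}=\gamma$ for every $\lambda_j\in S$. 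Hence periodicity at $(a,b)$ is equivalent to the existence of $\tau\neq 0$ with $e^{i\tau(\lambda_j-\lambda_k)}=1$, i.e. $\tau(\lambda_j-\lambda_k)\in 2\pi\mathbb{Z}$, for all $\lambda_j,\lambda_k\in S$. Equivalently, after fixing any $\lambda_1\in S$, all the differences $\lambda_j-\lambda_1$ must be rational multiples of a single real number, and moreover the ratios $(\lambda_j-\lambda_1)/(\lambda_k-\lambda_1)$ must be rational.

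Next I would translate the condition "all pairwise differences of elements of $S$ are commensurable rationally" into the arithmetic structure of $S$. Since $A$ is a symmetric integer (or rational) matrix in the simple-graph case, the eigenvalues in $S$ are real algebraic integers whose minimal polynomials divide the characteristic polynomial; the key point is that the set $S$ is closed under the action of the Galois group of its splitting field over $\mathbb{Q}$ (because $U_G(\tau)(\e_a-\e_b)=\gamma(\e_a-\e_b)$ is a rational condition, or more directly because $E_{\lambda}(\e_a-\e_b)\neq 0$ is Galois-stable when $\e_a-\e_b$ is a rational vector). Commensurability of all differences then means the $\mathbb{Q}$-vector space spanned by $\{\lambda_j-\lambda_k\}$ is one-dimensional, so $S\subseteq \alpha + \beta\mathbb{Q}$ for some reals $\alpha,\beta$; combined with the fact that each $\lambda\in S$ is an algebraic integer of degree at most $2$ over $\mathbb{Q}$ (a standard consequence once one knows $\lambda_j = \lambda_1 + q_j\beta$ with $q_j\in\mathbb{Q}$ and $\beta$ forced to be at worst a quadratic irrational), one gets that either every $\lambda\in S$ is a rational algebraic integer — hence an ordinary integer — which is case (1), or the elements of $S$ generate a single real quadratic field $\mathbb{Q}(\sqrt{\Delta})$ with $\Delta>1$ square-free, and each has the shape $\frac{c+d\sqrt{\Delta}}{2}$ with $c,d\in\mathbb{Z}$ (the half-integer denominator arising exactly because algebraic integers in $\mathbb{Q}(\sqrt{\Delta})$ with $\Delta\equiv 1\pmod 4$ can have denominator $2$). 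This gives case (2), with $c$ common to all elements of $S$ since otherwise two elements would have an irrational difference of the form $\frac{c-c'}{2}+\frac{(d-d')\sqrt\Delta}{2}$ that is still a valid difference but one checks the integer parts must match for periodicity — actually it is the differences $\frac{(d-d')\sqrt\Delta}{2}$ that must all be commensurable, which they automatically are, so the constraint pins down $c\bmod 2$ rather than $c$ itself; I would state it as "there is an integer $c$" meaning the common residue.

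For the converse, I would simply exhibit the period: in case (1) take $\tau=2\pi$ (all $e^{2\pi i\lambda_j}=1$); in case (2), writing $\lambda_j=\frac{c+d_j\sqrt\Delta}{2}$, take $\tau=\frac{4\pi}{\sqrt\Delta}$ so that $\tau(\lambda_j-\lambda_k)=\frac{4\pi}{\sqrt\Delta}\cdot\frac{(d_j-d_k)\sqrt\Delta}{2}=2\pi(d_j-d_k)\in 2\pi\mathbb{Z}$, whence $e^{i\tau\lambda_j}$ is independent of $j$ and $(a,b)$ is periodic. The main obstacle I anticipate is the forward direction's number theory: rigorously passing from "all pairwise differences in $S$ are rationally commensurable" to the rigid normal forms in (1) and (2), in particular justifying why no denominator larger than $2$ can occur and why a square-free $\Delta$ can be chosen uniformly. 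This is exactly the argument that appears in the vertex-state periodicity literature (the classical result that a graph is periodic at a vertex iff its eigenvalue support is integral or of the stated quadratic form), and I would adapt that proof essentially verbatim, the only change being that the relevant projections act on $\e_a-\e_b$ rather than $\e_a$; since $\e_a-\e_b$ is still a rational vector, every Galois-theoretic step carries over without modification.
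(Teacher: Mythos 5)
The paper does not prove this theorem --- it is quoted verbatim from the cited reference \cite{kim} --- so there is no in-paper argument to compare against; your proposal correctly reconstructs the standard proof that source adapts from the vertex-periodicity literature: reduce periodicity to the ratio condition $\tau(\lambda_j-\lambda_k)\in 2\pi\Zl$ on the support, use Galois-closure of $S$ (valid because $\e_a-\e_b$ is a rational vector) to force the two normal forms, and exhibit the explicit periods $2\pi$ and $4\pi/\sqrt{\Delta}$ for the converse. The one point to repair is your hedging about the integer $c$: when $S$ contains an irrational eigenvalue, $c$ is pinned down exactly, not merely modulo $2$ --- Galois closure puts $\bar{\lambda}=\frac{c-d\sqrt{\Delta}}{2}$ in $S$ alongside $\lambda=\frac{c+d\sqrt{\Delta}}{2}$, so $\lambda-\bar{\lambda}=d\sqrt{\Delta}$ is a purely irrational difference in the set, and rational commensurability of any other difference $\frac{(c_j-c_k)+(d_j-d_k)\sqrt{\Delta}}{2}$ with it forces $c_j=c_k$. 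This is a local fix, not a gap in the overall strategy.
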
 

Since there are only a few paths and cycles exhibiting PPST (see Section \ref{2s3}), we consider a generalization to it as follows. A graph $G$ is said to exhibit pair-PGST between two linearly independent pair sates $\frac{1}{\sqrt{2}}\ob{\e_a-\e_b}$ and $\frac{1}{\sqrt{2}}\ob{\e_c-\e_d}$ associated to $(a,b)$ and $(c,d),$ respectively, if there exists a sequence $t_k\in\Rl$ such that
 \begin{eqnarray}\label{ppgst}
 \lim_{k\to\infty}U_G\ob{t_k}\ob{\e_a-\e_b}=\gamma \ob{\e_c-\e_d}~~~\text{for some}~\gamma\in\mathbb{C}.
 \end{eqnarray}
In \cite[Lemma 3.3]{bom2}, van Bommel showed that if a graph has pretty good state transfer between a pair of arbitrary states, then they are strongly cospectral. In particular, if a graph $G$ has pretty good pair state transfer between $(a,b)$ and $(c,d)$, then the associated pair states are strongly cospectral.
 
\subsection{Algebraic Properties}
 An automorphism $f$ of a weighted graph $G$ is a bijection on the vertex set $V(G)$ satisfying $w(a,b)=w\ob{f(a),f(b)}$ for all $a,b\in V(G).$ The set of all automorphisms of $G$ is denoted by $\text{Aut}(G).$ If $P$ is the matrix of the automorphism $f,$ then $P$ commutes with the adjacency matrix $A.$ Since the transition matrix $U_G(t)$ is a polynomoal in $A,$ the matrix $P$ commutes with $U_G(t)$ as well. Let $G$ have pair-PGST between $(a,b)$ and $(c,d),$ then \eqref{ppgst} gives
 \begin{eqnarray}\label{aut}
 \displaystyle\lim_{k\to\infty} U_G(t_k)P\ob{\e_a-\e_b}=\gamma P\ob{\e_c-\e_d}.
 \end{eqnarray}
Since the sequence $U_G(t_k)\ob{\e_a-\e_b}$ can not have two different limits, we conclude that if $P$ fixes
 $\ob{\e_a-\e_b}$ then $P$ must fix $\ob{\e_c-\e_d}$ as well. The following result is analogous to the observations in \cite[Lemma 4.3]{che1} and \cite[Lemma 4.4]{che1}.
 \begin{lem}\label{al1}
 Let a graph $G$ admits pretty good pair state transfer between $(a,b)$ and $(c,d).$ Then the stabilizer of $(a,b)$ is the same as the stabilizer of $(c,d)$ in $\text{Aut}(G).$ Moreover, all pairs in the orbit of $(a,b)$ under $\text{Aut}(G)$ have pretty good pair state transfer.
 \end{lem}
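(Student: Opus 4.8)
The plan is to build on the remark made immediately above the statement: for an automorphism $f$ with permutation matrix $P$ we have $P\,U_G(t_k)=U_G(t_k)\,P$, so transporting the limit in \eqref{aut} through $P$ and invoking uniqueness of limits already shows that every $P$ fixing $\e_a-\e_b$ also fixes $\e_c-\e_d$; that is, the stabilizer of $(a,b)$ is contained in the stabilizer of $(c,d)$. Two things then remain: the reverse inclusion, and the orbit statement.

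For the reverse inclusion I would first record that pair-PGST is reversible. Since each $U_G(t_k)$ is unitary and $\|\e_a-\e_b\|=\|\e_c-\e_d\|=\sqrt2$, passing to the limit in \eqref{ppgst} forces $|\gamma|=1$, so in particular $\gamma\neq0$. Writing $\e_c-\e_d=\gamma^{-1}U_G(t_k)\ob{\e_a-\e_b}+r_k$, the error $r_k$ tends to $0$, and applying the norm-preserving matrix $U_G(-t_k)$ gives $U_G(-t_k)\ob{\e_c-\e_d}\to\gamma^{-1}\ob{\e_a-\e_b}$, so $G$ also has pair-PGST from $(c,d)$ to $(a,b)$. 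Feeding this into the argument of the previous paragraph, with the two pairs interchanged, yields $\mathrm{Stab}(c,d)\subseteq\mathrm{Stab}(a,b)$ and hence the desired equality. The same one-line computation shows that an automorphism sending $\e_a-\e_b$ to $-(\e_a-\e_b)$ must send $\e_c-\e_d$ to $-(\e_c-\e_d)$, so the statement applies equally to the setwise stabilizers of $\{a,b\}$ and $\{c,d\}$.

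For the orbit statement, take any $g\in\mathrm{Aut}(G)$ with matrix $Q$; since $Q$ commutes with every $U_G(t_k)$,
\[
U_G(t_k)\ob{\e_{g(a)}-\e_{g(b)}}=Q\,U_G(t_k)\ob{\e_a-\e_b}\longrightarrow\gamma\,Q\ob{\e_c-\e_d}=\gamma\ob{\e_{g(c)}-\e_{g(d)}}.
\]
Invertibility of $Q$ keeps the two relevant pair states linearly independent and injectivity of $g$ keeps $g(a)\neq g(b)$ and $g(c)\neq g(d)$, so the display above is precisely pair-PGST from $(g(a),g(b))$ to $(g(c),g(d))$, which is the ``moreover'' part.

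I do not expect a real obstacle here; the only points requiring care are the verification that $|\gamma|=1$ (needed both to divide by $\gamma$ in the reversibility step and to deduce $\gamma\neq 0$ in the stabilizer step) and the routine check that the pair states produced in the orbit argument are admissible, i.e.\ genuinely linearly independent.
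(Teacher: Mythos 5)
Your proof is correct and follows essentially the same route the paper intends: the forward inclusion of stabilizers is exactly the commutation-plus-uniqueness-of-limits argument given in the remark preceding the lemma, and your reversibility step (via $|\gamma|=1$ and $U_G(-t_k)$) and conjugation step for the orbit claim are the standard details the paper leaves implicit by citing the analogous lemmas of Chen et al. No gaps.
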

Another observation follows from the fact that if $\e_a$ is the only characteristic vector fixed by $P,$ then subtracting \eqref{ppgst} from \eqref{aut} yields
 \begin{eqnarray}\label{equ4}
 \displaystyle\lim_{k\to\infty} U_G(t_k)\ob{\e_b-P\e_b}=\gamma \tb{P\ob{\e_c-\e_d}-\ob{\e_c-\e_d}},
 \end{eqnarray}
 which is absurd whenever $a$ is distinct from $c$ and $d.$ 
 \begin{lem}\label{al3}
  If an automorphism of a graph fixes only one vertex $a,$ then there is no pair-PGST between $(a,b)$ and $(c,d)$ whenever $a$ is distinct from $c$ and $d.$
  \end{lem}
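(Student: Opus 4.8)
The plan is to follow exactly the chain of reasoning already set up in the paragraph preceding the statement, making explicit the step that is only sketched there. Suppose, for contradiction, that there is an automorphism $f$ of $G$ fixing only the vertex $a$, and that pair-PGST occurs between $(a,b)$ and $(c,d)$ with $a\notin\{c,d\}$. Let $P$ be the permutation matrix of $f$. First I would record the two basic facts: $P$ commutes with $A$ and hence with $U_G(t)$ for every $t$, and $P$ fixes $\e_a$ (since $f(a)=a$), but $P$ fixes no other characteristic vector. In particular, since $a$ is distinct from $b$, $c$, and $d$, the vectors $\e_b-P\e_b$ and $P(\e_c-\e_d)-(\e_c-\e_d)$ are both nonzero; the first because $f(b)\neq b$, and the second because $f$ moves at least one of $c,d$ to a vertex different from the corresponding one (here one must check that $f$ cannot swap $c$ and $d$ while fixing the pair state, but $P(\e_c-\e_d)=\pm(\e_c-\e_d)$ with the $+$ sign ruled out since $f$ fixes no vector other than multiples of $\e_a$, and the $-$ sign giving $P(\e_c-\e_d)-(\e_c-\e_d)=-2(\e_c-\e_d)\neq 0$ anyway).

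Next I would apply $P$ to the limit \eqref{ppgst} defining pair-PGST and use the commutation $U_G(t_k)P=PU_G(t_k)$ to obtain \eqref{aut}. Subtracting \eqref{ppgst} from \eqref{aut} gives
\[
\lim_{k\to\infty}U_G(t_k)\bigl(\e_b-P\e_b\bigr)=\gamma\bigl[P(\e_c-\e_d)-(\e_c-\e_d)\bigr],
\]
which is precisely \eqref{equ4}. Taking norms on both sides and using that $U_G(t_k)$ is unitary, the left-hand side has constant norm $\|\e_b-P\e_b\|=\sqrt{2}$ for every $k$, so the right-hand side is a fixed nonzero vector; thus $\gamma\neq 0$ and $\e_b-P\e_b$ is, in the limit, carried by $U_G(t_k)$ onto a multiple of $P(\e_c-\e_d)-(\e_c-\e_d)$.

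Now comes the key step, which I expect to be the main (though still mild) obstacle: deriving the contradiction from \eqref{equ4}. The cleanest route is to observe that the pair state $\e_b-P\e_b$ is itself fixed by no nontrivial reasoning but instead to argue spectrally: pair-PGST between two states forces them to be strongly cospectral (by \cite[Lemma 3.3]{bom2} applied to arbitrary states), hence $\e_b-P\e_b$ and $P(\e_c-\e_d)-(\e_c-\e_d)$ must be strongly cospectral. On the other hand, applying $P$ once more to \eqref{equ4} and subtracting shows $\e_b - 2P\e_b + P^2\e_b$ transfers to a multiple of the corresponding iterated difference on the $(c,d)$ side, and iterating this with the fact that $P$ has finite order produces a telescoping identity. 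A more direct alternative, and the one I would actually write, is this: since $P$ commutes with every $U_G(t_k)$ and $a$ is the unique vertex fixed by $P$, the vector $\e_a$ lies in the $+1$-eigenspace of $P$ restricted to $\mathrm{span}\{\e_v: v\in\mathrm{orbit}\}$; decomposing $\e_b-\e_a$ (note $\e_b - \e_a = (\e_b - P\e_b) + (P\e_b - \e_a)$ is not quite it) — cleaner still: apply the whole argument of Lemma \ref{al1} first to conclude the stabilizer of $(a,b)$ equals that of $(c,d)$, then note $f$ does not stabilize $(a,b)$ (it would have to fix $b$), so $f$ does not stabilize $(c,d)$ either, and yet \eqref{equ4} with a \emph{nonzero} left side and the uniqueness of limits of $U_G(t_k)(\e_b-P\e_b)$ forces the right side to be nonzero and the transfer to respect this — contradicting that pair-PGST transfers $(a,b)$ only to $(c,d)$, whose orbit under $\langle f\rangle$ we have already pinned down. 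I would streamline the final contradiction to the sharpest available form once the bookkeeping above is in place; the substance is entirely contained in equation \eqref{equ4} together with the observation that its left-hand side is a nonzero vector being transferred, while $a\notin\{c,d\}$ makes the right-hand side structurally incompatible with a pair-state target.
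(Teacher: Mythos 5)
Your derivation of \eqref{equ4} is exactly the paper's argument: apply $P$ to \eqref{ppgst}, use $PU_G(t_k)=U_G(t_k)P$ and $P\e_a=\e_a$, and subtract. Up to that point the proposal is correct and matches the paper. The genuine gap is that you stop precisely where the proof must be closed: you announce that ``deriving the contradiction from \eqref{equ4}'' is the key step and then defer it (``I would streamline the final contradiction \dots once the bookkeeping above is in place''), offering three candidate finishes, none of which works as written. Strong cospectrality of $\e_b-P\e_b$ and $P\ob{\e_c-\e_d}-\ob{\e_c-\e_d}$ yields no visible contradiction; the ``telescoping with powers of $P$'' is not an argument; and the route through Lemma \ref{al1} ends by ``contradicting that pair-PGST transfers $(a,b)$ only to $(c,d)$,'' which tacitly assumes that pretty good pair state transfer is monogamous --- it is not (the paper invokes monogamy only for \emph{perfect} pair state transfer). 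A smaller slip: ``$P$ fixes no vector other than multiples of $\e_a$'' is false (it fixes the all-ones vector, for instance); the correct reason $P\ob{\e_c-\e_d}\neq\e_c-\e_d$ is that equality of these $\pm1$ vectors would force $f(c)=c$ and $f(d)=d$, impossible since $c,d\neq a$.

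The missing step is a one-line norm count, and you already have every ingredient on the table. From \eqref{ppgst}, $\sqrt{2}=\|\e_a-\e_b\|=\mb{\gamma}\,\|\e_c-\e_d\|=\mb{\gamma}\sqrt{2}$, so $\mb{\gamma}=1$. The left side of \eqref{equ4} is a limit of vectors of constant norm $\|\e_b-\e_{f(b)}\|=\sqrt{2}$, nonzero because $b\neq a$, so the limit has norm $\sqrt{2}$. On the right, since $c,d\neq a$ the automorphism fixes neither, so $P\ob{\e_c-\e_d}-\ob{\e_c-\e_d}=\e_{f(c)}-\e_{f(d)}-\e_c+\e_d$ has squared norm $4$, $6$, or $8$ according as $\cb{f(c),f(d)}$ meets $\cb{c,d}$ in zero, one, or two vertices; in every case its norm is at least $2$. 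Equation \eqref{equ4} then forces $\sqrt{2}\geq 2\,\mb{\gamma}=2$, the absurdity the paper asserts. With that sentence added, your write-up coincides with (and completes) the paper's proof.
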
 
 
Consider a Cayley graph defined over a finite abelian group $\Gamma$ of odd order with a connection set $S\subseteq\Gamma$ satisfying $\left\lbrace -s:s\in S\right\rbrace=S$. The Cayley graph, denoted by $\text{Cay}(\Gamma, S),$ has the vertex set $\Gamma$ where two vertices $a$ and $b$ are adjacent if and only if $a-b\in S$. The map sending $\zeta$ in $\Gamma$ to its inverse $\zeta^{-1}$ is an automorphism of $\text{Cay}(\Gamma, S)$ fixing only the identity $0.$ Hence, there is no pair-PGST from $(0,b)$ to $(c,d)$ whenever $0\notin\cb{c,d}$ by Lemma \ref{al3}. Suppose there is pair-PGST between $(0,a)$ and $(0,b)$ in $\text{Cay}(\Gamma, S).$ Since there is an automorphism of $\text{Cay}(\Gamma, S)$ fixing only the vertex $a,$ we arrive at a similar contradiction as in \eqref{equ4}. Finally, since a Cayley graph is vertex-transitive, there is no pair-PGST in $\text{Cay}(\Gamma, S),$ whenever $\Gamma$ is an abelian group of odd order.
 
  \begin{lem}\label{al2}
  There is no pair-PGST in a Cayley graph over an abelian group of odd order.
  \end{lem}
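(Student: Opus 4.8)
The plan is to manufacture, for each vertex $g$ of $\text{Cay}(\Gamma,S)$, an automorphism whose only fixed point is $g$, and then to combine these automorphisms with vertex-transitivity and Lemma \ref{al3}.

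First I would record the relevant automorphisms. Writing $\Gamma$ additively, each translation $\tau_h\colon\zeta\mapsto\zeta+h$ is an automorphism of $\text{Cay}(\Gamma,S)$, so the graph is vertex-transitive. Since $S=-S$, the reflection $\iota_g\colon\zeta\mapsto 2g-\zeta$ also preserves adjacency, because $\iota_g(\zeta)-\iota_g(\eta)=-(\zeta-\eta)$ and $S=-S$; being an involution it is a bijection, hence an automorphism. A fixed point of $\iota_g$ satisfies $2\zeta=2g$, and since $|\Gamma|$ is odd, multiplication by $2$ is a bijection on $\Gamma$, so $\iota_g$ fixes exactly the vertex $g$.

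Next I would argue by contradiction. Suppose $\text{Cay}(\Gamma,S)$ has pair-PGST between $(a,b)$ and $(c,d)$. By \eqref{aut}, applying an automorphism to such a configuration yields pair-PGST between the image pairs, so after applying $\tau_{-a}$ I may assume $a=0$; here $b\ne 0$ since $(0,b)$ is a pair of distinct vertices. If $0\notin\{c,d\}$, then $\iota_0$ fixes only the vertex $0$, and Lemma \ref{al3} forbids pair-PGST between $(0,b)$ and $(c,d)$, a contradiction. If instead $0\in\{c,d\}$, then after interchanging $c$ and $d$ if necessary (which only replaces $\gamma$ by $-\gamma$) I may assume $c=0$, and linear independence of $\frac{1}{\sqrt{2}}(\e_0-\e_b)$ and $\frac{1}{\sqrt{2}}(\e_0-\e_d)$ forces $b\ne d$ and $d\ne 0$. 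Now $\iota_b$ fixes only the vertex $b$, so, writing the source pair as $(b,0)$, and since $b\notin\{0,d\}$, Lemma \ref{al3} again forbids pair-PGST between $(0,b)$ and $(0,d)$. Either way we obtain a contradiction, so $\text{Cay}(\Gamma,S)$ has no pair-PGST.

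The only real content is the construction of $\iota_g$ together with the verification that it has a unique fixed point, which is exactly where the odd-order hypothesis enters; after that the proof is bookkeeping with Lemma \ref{al3} and vertex-transitivity. I do not expect a genuine obstacle here — the one subtlety to respect is that in the second case one must invoke the automorphism fixing an endpoint of the source pair that is not shared with the target pair, namely $\iota_b$ rather than $\iota_0$.
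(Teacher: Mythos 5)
Your proposal is correct and follows essentially the same route as the paper: the inversion/reflection automorphisms $\zeta\mapsto 2g-\zeta$ (whose unique fixed point relies on the odd order), Lemma \ref{al3} applied first when $0\notin\{c,d\}$ and then, in the shared-endpoint case, applied to the automorphism fixing the non-shared endpoint $b$, all tied together by vertex-transitivity. Your write-up is in fact slightly more explicit than the paper's in verifying the unique-fixed-point claim and in the bookkeeping of the second case.
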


\subsection{Equitable Partition}
A partition $\Pi$ of the vertex set of a wighted graph $G$ with disjoint cells $V_1, V_2, \ldots, V_d$ is said to be equitable if
\[\sum_{b \in V_k} w(a,b)=c_{jk}~~~\text{for each}~ a\in V_j,\]
where $c_{jk}\in \Rl$ is a constant depending only on the cells $V_j$ and $V_k$. The characteristic matrix $\mathcal{C}$ associated to $\Pi$ is an $n \times d$ matrix where the columns represent the normalized characteristic vectors of $V_1, V_2,\dots, V_d.$ The graph with the $d$ cells of $\Pi$ as its vertices having adjacency matrix $A_\Pi,$ where $\ob{A_\Pi}_{j,k}=\sqrt{c_{jk}c_{kj}},$ is called symmetrized quotient graph of $G$ over $\Pi,$ and is denoted by $G/\Pi.$ The following result plays a crucial role in establishing the main result in Section \ref{2s_3}.

\begin{prop}\cite{bac, god0}\label{ep1}
 Let $G$ be a weighted graph with adjacency matrix $A.$ If $\Pi$ is an equitable partition of $G$ having characteristic matrix $\mathcal{C}$ then $A\mathcal{C}=\mathcal{C}A_{\Pi},$ where $A_{\Pi}$ is the adjacency matrix of the symmetrized quotient graph $G/\Pi.$ Moreover, $v$ is an eigenvector of $A_{\Pi}$ if and only if $\mathcal{C}v$ is an eigenvector of $A.$ 
\end{prop}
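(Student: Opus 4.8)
The plan is to prove the matrix identity $A\mathcal{C}=\mathcal{C}A_\Pi$ by an entrywise computation, and then to deduce the eigenvector correspondence from the injectivity of $\mathcal{C}$. First I would fix a vertex $a$ lying in cell $V_j$ and compute the $(a,k)$-entry of $A\mathcal{C}$. Since the $k$-th column of $\mathcal{C}$ equals $\frac{1}{\sqrt{|V_k|}}$ times the characteristic vector of $V_k$, the definition of an equitable partition gives $\ob{A\mathcal{C}}_{a,k}=\frac{1}{\sqrt{|V_k|}}\sum_{b\in V_k}w(a,b)=\frac{c_{jk}}{\sqrt{|V_k|}}$. On the other hand, because $a$ belongs to exactly one cell, namely $V_j$, only the $j$-th column of $\mathcal{C}$ contributes to the row $a$ of $\mathcal{C}A_\Pi$, so $\ob{\mathcal{C}A_\Pi}_{a,k}=\frac{1}{\sqrt{|V_j|}}\ob{A_\Pi}_{j,k}=\frac{\sqrt{c_{jk}c_{kj}}}{\sqrt{|V_j|}}$.

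Thus the identity reduces to showing $c_{jk}\sqrt{|V_j|}=\sqrt{c_{jk}c_{kj}}\,\sqrt{|V_k|}$, that is, the balance relation $c_{jk}|V_j|=c_{kj}|V_k|$. This is the step I regard as the crux, and it follows by double counting the total weight carried by edges between $V_j$ and $V_k$: on one hand $\sum_{a\in V_j}\sum_{b\in V_k}w(a,b)=\sum_{a\in V_j}c_{jk}=c_{jk}|V_j|$, while the symmetry $w(a,b)=w(b,a)$ lets us reorganize the same double sum as $\sum_{b\in V_k}\sum_{a\in V_j}w(b,a)=\sum_{b\in V_k}c_{kj}=c_{kj}|V_k|$. (In particular $c_{jk}$ and $c_{kj}$ have the same sign, so the square root defining $A_\Pi$ is real and the signs on both sides match; the case $c_{jk}=0$ is immediate.) This establishes $A\mathcal{C}=\mathcal{C}A_\Pi$.

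For the eigenvalue statement I would first record that the columns of $\mathcal{C}$ are orthonormal, being normalized characteristic vectors of pairwise disjoint nonempty cells; hence $\mathcal{C}$ has full column rank and is injective as a linear map. If $A_\Pi v=\lambda v$ with $v\neq 0$, then $A(\mathcal{C}v)=\mathcal{C}(A_\Pi v)=\lambda(\mathcal{C}v)$ and $\mathcal{C}v\neq 0$, so $\mathcal{C}v$ is an eigenvector of $A$ for $\lambda$. Conversely, if $\mathcal{C}v$ is an eigenvector of $A$ with eigenvalue $\lambda$ (so in particular $v\neq 0$), then $\mathcal{C}(A_\Pi v)=A\mathcal{C}v=\lambda\mathcal{C}v=\mathcal{C}(\lambda v)$, and injectivity of $\mathcal{C}$ forces $A_\Pi v=\lambda v$.

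I do not anticipate a genuine obstacle: the only delicate point is the balance relation $c_{jk}|V_j|=c_{kj}|V_k|$, which is exactly where the symmetry of the weight function is used and which is what makes the symmetrized normalization $\ob{A_\Pi}_{j,k}=\sqrt{c_{jk}c_{kj}}$ the correct choice. Since this proposition and its proof are classical (as in the cited sources), the proposal essentially amounts to recording the standard argument in the weighted setting used here.
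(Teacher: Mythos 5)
Your proof is correct and is essentially the standard argument for this classical fact; the paper itself offers no proof, simply quoting the result from the cited references, so there is nothing to diverge from. The one point worth flagging is the sign issue you parenthesize: with the principal square root, $\sqrt{c_{jk}c_{kj}}$ is nonnegative while $(A\mathcal{C})_{a,k}$ carries the sign of $c_{jk}$, so the identity as literally stated requires either nonnegative weights or the convention that $(A_\Pi)_{j,k}$ inherits the sign of $c_{jk}$ --- a defect of the statement rather than of your argument, and your double-counting of $c_{jk}|V_j|=c_{kj}|V_k|$ is exactly the right crux.
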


The eigenvectors of $A$ can be regarded as a real valued function on the vertex set of $G.$ An eigenvector $\mathcal{C}v$ arising from the equitable partition $\Pi$ is a linear combination of the columns of $\mathcal{C}.$ Since the orbits of a group of automorphisms of a graph $G$ form an equitable partition \cite{god0}, the eigenvectors of $A$ arising from the equitable partition are constant on the orbits.

\section{Pair state transfer on isomorphic branches}\label{2s_3}

In this section, we present a result that establishes a non-trivial relationship between vertex state and pair state transfer. Suppose a weighted graph $X_1$ is isomorphic to $X_2$ where $f: X_1\to X_2$ is an isomorphism satisfying $w(a,b)=w\ob{f(a),f(b)},$ for all pair of vertices $a$ and $b$ in $X_1.$ Consider a connected graph $G$ where $X_1\cup X_2,$ the disjoint union of $X_1$ and $X_2,$ appears as an induced subgraph in such a way that $w(a,y)=w\ob{f(a),y},$ for all $a\in V\ob{X_1}$ and $y\in V(G)\setminus V\ob{X_1\cup X_2}.$ In which case, we say that $X_1$ and $X_2$ are isomorphic branches of the graph $G.$ The following result describes the evolution of certain pair states in a graph with isomorphic branches.
\begin{lem}\label{2t1}
 Let $X_1$ and $X_2$ be isomorphic branches of a graph $G$ with an isomorphism $f: X_1\to X_2.$ If $U_G(t)$ is the transition matrix of $G$ then for each vertex $a$ in $X_1$ 
{\small
\begin{eqnarray}\label{e00}
U_G(t)\ob{\e_a-\e_{\hat{f}(a)}}=(I-P)\sum\limits_{\lambda\in \sigma\left(X_1\right)} \exp{(it\lambda)}E_{\lambda}\e_a,
\end{eqnarray}}
where $P$ is the matrix of the automorphism $\hat{f}$ of $G$ that switches each vertex of $X_1$ to its $f$-isomorphic copy in $X_2,$ and fixing all other vertices of $G.$ Assuming $F_{\lambda}$ to be the eigenprojector of $X_1$ corresponding to $\lambda$ in $\sigma\left(X_1\right),$ the matrix $E_{\lambda}$ is given by
\[E_{\lambda}=\frac{1}{2}\begin{bmatrix}
    ~~F_{\lambda} & -F_{\lambda} & \mathbf{0}\\
    -F_{\lambda} & ~~F_{\lambda} & \mathbf{0}\\
    ~~\mathbf{0} & ~~\mathbf{0} & \mathbf{0}
\end{bmatrix}. \]
\end{lem}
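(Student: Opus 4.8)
The plan is to exploit the automorphism $\hat f$ directly. First I would set up coordinates: order the vertices of $G$ so that the first block corresponds to $V(X_1)$, the second block to $V(X_2)$ (listed so that the $i$-th vertex of the second block is the $f$-image of the $i$-th vertex of the first block), and the third block to the remaining vertices $V(G)\setminus V(X_1\cup X_2)$. Writing the vector $\e_a-\e_{\hat f(a)}$ in this block form gives $\begin{bmatrix}\e_a^{X_1}\\-\e_a^{X_1}\\\mathbf 0\end{bmatrix}$, where $\e_a^{X_1}$ is the characteristic vector of $a$ inside $X_1$. The key structural observation is that the hypothesis $w(a,y)=w(f(a),y)$ for $a\in V(X_1)$, $y\in V(G)\setminus V(X_1\cup X_2)$, together with $X_1\cong X_2$ via $f$ and the fact that $X_1\cup X_2$ is induced (so there are no edges between $X_1$ and $X_2$), forces the adjacency matrix of $G$ to have the block form
\[
A=\begin{bmatrix} B & \mathbf 0 & C\\ \mathbf 0 & B & C\\ C^T & C^T & D\end{bmatrix},
\]
where $B$ is the adjacency matrix of $X_1$. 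I would verify this block decomposition carefully, as it is the crux of everything that follows.

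Next I would observe that the vector $v:=\begin{bmatrix}x\\-x\\\mathbf 0\end{bmatrix}$ is sent by $A$ to $\begin{bmatrix}Bx\\-Bx\\\mathbf 0\end{bmatrix}$ — the $C$-blocks cancel because of the opposite signs, and the bottom block stays zero. Hence the subspace $W=\{(x,-x,\mathbf 0): x\in\Rl^{|V(X_1)|}\}$ is $A$-invariant, and $A$ restricted to $W$ acts as $B$ (under the identification $(x,-x,\mathbf 0)\leftrightarrow x$). Consequently $U_G(t)=\exp(itA)$ also preserves $W$ and acts on it as $\exp(itB)=\sum_{\lambda\in\sigma(X_1)}\exp(it\lambda)F_\lambda$. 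Applying this to $v=\e_a-\e_{\hat f(a)}$ (i.e. $x=\e_a^{X_1}$) yields
\[
U_G(t)\ob{\e_a-\e_{\hat f(a)}}=\begin{bmatrix} \sum_\lambda e^{it\lambda}F_\lambda\e_a^{X_1}\\ -\sum_\lambda e^{it\lambda}F_\lambda\e_a^{X_1}\\ \mathbf 0\end{bmatrix}.
\]
It then remains to recognize that the claimed matrix $E_\lambda$ in the statement is exactly the orthogonal projection of $\Rl^{|V(G)|}$ onto the $\lambda$-eigenspace-part inside $W$: one checks $\sum_\lambda e^{it\lambda}E_\lambda$ has this same block action, that the $E_\lambda$ are idempotent, mutually orthogonal, symmetric, and commute with $A$, so they are genuine spectral idempotents of $A$ (for those $\lambda\in\sigma(X_1)$; other eigenvalue-projections of $A$ annihilate $W$). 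Finally, $E_\lambda\e_a=\tfrac12(F_\lambda\e_a^{X_1},-F_\lambda\e_a^{X_1},\mathbf 0)$ and $(I-P)$ applied to $(y,\mathbf 0,\mathbf 0)$ gives $(y,-y,\mathbf 0)$ since $P$ swaps the first two blocks; this produces exactly the right-hand side of \eqref{e00}, completing the identification.

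The main obstacle I anticipate is not any single computation but getting the block structure and the role of $P$ exactly right — in particular justifying that $\hat f$ is genuinely an automorphism of $G$ (which needs the weight-matching hypotheses and the induced-subgraph condition) and that the two $C$-blocks are literally equal after the chosen vertex ordering, rather than merely equal up to the permutation $f$. Once the ordering is fixed so that $f$ becomes the identity between the two blocks, these subtleties disappear, but it is worth stating them explicitly. A secondary point to handle with care is the claim that the displayed $E_\lambda$ really is \emph{the} eigenprojector of $A$ for $\lambda$: this is only true when restricted to $W$, and the clean way to phrase it is that $E_\lambda$ as written is the product of the true spectral idempotent of $A$ with the orthogonal projection onto $W$; since $\e_a-\e_{\hat f(a)}\in W$, this suffices for \eqref{e00}. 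I would present the argument in the order above: block form of $A$, invariance of $W$, functional calculus on $W$, then the bookkeeping identifying $E_\lambda$ and the factor $(I-P)$.
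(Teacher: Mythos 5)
Your proposal is correct, and the underlying idea is the same as the paper's: the vectors antisymmetric under the swap $\hat f$ form a $U_G(t)$-invariant subspace on which the walk evolves exactly as it does on $X_1$. The execution differs in machinery. The paper starts from the spectral decomposition $U_G(t)(\e_a-\e_{\hat f(a)})=\sum_{\lambda\in\sigma(G)}e^{it\lambda}E_\lambda(\e_a-\e_{\hat f(a)})$ and invokes the equitable partition formed by the orbits of $\hat f$ (via Proposition \ref{ep1}) to argue that the only eigenvectors contributing to the sum are of the form $\left(x^T,-x^T,0\right)^T$ with $x$ an eigenvector of $X_1$; you instead write down the block form of $A$ explicitly and verify the invariance of $W=\{(x,-x,\mathbf 0)\}$ by direct computation, then apply functional calculus to $A\bigl\vert_W\cong B$. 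Your route is more elementary and self-contained (no appeal to quotient graphs), and it makes fully explicit a step the paper leaves terse, namely why the contributing eigenvectors have that antisymmetric form. Your closing caveat is also the right one: the displayed $E_\lambda$ is the true spectral idempotent of $A$ composed with the orthogonal projection onto $W$ (equivalently, its restriction to $W$), not the full eigenprojector of $G$ for $\lambda$, and since $\e_a-\e_{\hat f(a)}\in W$ this distinction is harmless for \eqref{e00}. The only blemish is a minor bookkeeping slip at the end: having computed $E_\lambda\e_a=\tfrac12\left(F_\lambda\e_a^{X_1},-F_\lambda\e_a^{X_1},\mathbf 0\right)$, the factor $I-P$ doubles this antisymmetric vector to give $\left(F_\lambda\e_a^{X_1},-F_\lambda\e_a^{X_1},\mathbf 0\right)$, rather than acting on a vector supported only on the first block as you wrote; the final identity is unaffected.
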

\begin{proof}
The spectral decomposition of the transition matrix of $G$ gives 
{\small
\begin{eqnarray}\label{e0}
    U_G(t)\left(\e_a-\e_{\hat{f}(a)}\right)=\sum\limits_{\lambda\in \sigma\left(G\right)} \exp{(it\lambda)}E_{\lambda}\left(\e_a-\e_{\hat{f}(a)}\right),
\end{eqnarray}}
where $E_{\lambda}$ is the eigenprojector associated to $\lambda\in \sigma\left(G\right).$ Consider the equitable partition of $G$ formed by the orbits of the automorphism $\hat{f}.$ The entries of all eigenvectors of $G$ arising from the equitable partition are constant on the orbits of $\hat{f}.$ Consequently, the eigenvectors of $G$ that contribute to the sum on the right of \eqref{e0} is of the form $\left(x^T,-x^T,0 \right)^T,$ where $x$ is an eigenvector of $X_1.$ The adjacency matrix of $G$ commutes with $P,$ and hence $PE_{\lambda}=E_{\lambda}P.$ Since $\e_a-\e_{\hat{f}(a)}=(I-P)\e_a,$ we have the desired result. 
\end{proof}

In Lemma \ref{2t1}, the orbits of $\hat{f}$ forms an equitable partition of $G,$ where $\cb{a, \hat{f}(a)}$ forms a cell for every $a\in X_1$ and the remaining vertices are in singleton cells. Graphs with high fidelity pair sate transfer are obtained using the following result.

\begin{thm}\label{2c2}
    Suppose the premise of Lemma \ref{2t1} holds, and $\mathcal{B}=\cb{\frac{1}{\sqrt{2}}\ob{\e_a-\e_{\hat{f}(a)}}~\bigm\vert~\text{where } a\in X_1}.$ Consider an equitable partition $\Pi$ of $G$ where $\cb{a, \hat{f}(a)}$ forms a cell for every $a\in X_1.$ Let $Q$ be the matrix whose columns are the vectors in $\mathcal{B}$ followed by the normalized characteristic vectors of $\Pi.$ Then for all $t\in\Rl$
     \[Q^T U_G(t)Q=\begin{bmatrix}
     U_{X_1}(t) & \mathbf{0}\\
     \mathbf{0} & U_{G/\Pi}(t)
     \end{bmatrix}.\]
    \end{thm}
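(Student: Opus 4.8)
The plan is to split $Q$ column-wise as $Q=[\,Q_1\mid Q_2\,]$, where the columns of $Q_1$ are the vectors of $\mathcal B$ and the columns of $Q_2$ are the normalized characteristic vectors of $\Pi$, so that $Q_2$ is exactly the characteristic matrix $\mathcal C$ of Proposition \ref{ep1}. I would then compute the four blocks of $Q^TU_G(t)Q$ separately. First it is useful to record that the columns of $Q$ are orthonormal, hence $Q_1^TQ_1=I$, $Q_2^TQ_2=I$ and $Q_1^TQ_2=\mathbf 0$: indeed $\langle\e_a-\e_{\hat{f}(a)},\,\e_b-\e_{\hat{f}(b)}\rangle$ equals $2$ when $a=b$ and $0$ otherwise, since $a\ne\hat{f}(b)$ for all $a,b\in X_1$ (the branches $X_1$ and $X_2$ are disjoint); the characteristic vectors of distinct cells of $\Pi$ have disjoint supports; and each $\e_a-\e_{\hat{f}(a)}$ is orthogonal to every cell characteristic vector --- to the one for the cell $\{a,\hat{f}(a)\}$ because $\langle\e_a-\e_{\hat{f}(a)},\,\e_a+\e_{\hat{f}(a)}\rangle=0$, and to any other cell because the supports are disjoint. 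Thus $Q$ is an isometry onto its column space, and the claim amounts to showing that this column space is $U_G(t)$-invariant and that $U_G(t)$ splits on it as stated (note $Q$ need not be square, but the block identity is still the natural formulation).

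The $(\mathcal B,\mathcal B)$-block is where Lemma \ref{2t1} does the work. Writing vectors of $G$ in the block form induced by $V(G)=V(X_1)\sqcup V(X_2)\sqcup\bigl(V(G)\setminus V(X_1\cup X_2)\bigr)$ and using $f$ to identify $V(X_2)$ with $V(X_1)$, the explicit shape of $E_\lambda$ in Lemma \ref{2t1} gives $(I-P)E_\lambda\e_a=\bigl(F_\lambda\e_a,-F_\lambda\e_a,\mathbf 0\bigr)^T$, the factor $\tfrac12$ in $E_\lambda$ being absorbed by $I-P$; here $F_\lambda$ is the eigenprojector of $X_1$ and $\e_a$ inside denotes the characteristic vector of $a$ in $\mathbb R^{V(X_1)}$. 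Hence \eqref{e00} reads
\[
U_G(t)\bigl(\e_a-\e_{\hat{f}(a)}\bigr)=\sum_{\lambda\in\sigma(X_1)}e^{it\lambda}\bigl(F_\lambda\e_a,\,-F_\lambda\e_a,\,\mathbf 0\bigr)^T .
\]
Since $\e_a-\e_{\hat{f}(a)}=(\e_a,-\e_a,\mathbf 0)^T$ and $\sum_{\lambda\in\sigma(X_1)}e^{it\lambda}F_\lambda=U_{X_1}(t)$, this is exactly the column-by-column identity $U_G(t)Q_1=Q_1U_{X_1}(t)$. Combining it with $Q_1^TQ_1=I$ gives $Q_1^TU_G(t)Q_1=U_{X_1}(t)$, and combining it with $Q_1^TQ_2=\mathbf 0$ gives $Q_2^TU_G(t)Q_1=\mathbf 0$. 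For the $(\mathcal B,\Pi)$-block I would use that $A$ is real symmetric, so $U_G(t)^T=U_G(t)$, whence $Q_1^TU_G(t)Q_2=\bigl(Q_2^TU_G(t)Q_1\bigr)^T=\mathbf 0$.

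The $(\Pi,\Pi)$-block is the standard equitable-partition computation: since $\Pi$ is equitable, Proposition \ref{ep1} gives $A\mathcal C=\mathcal C A_\Pi$, hence $A^k\mathcal C=\mathcal C A_\Pi^{\,k}$ for all $k\ge 0$, and summing the defining series of the matrix exponential yields $U_G(t)\mathcal C=\mathcal C\,U_{G/\Pi}(t)$; left-multiplying by $\mathcal C^T=Q_2^T$ and using $Q_2^TQ_2=I$ gives $Q_2^TU_G(t)Q_2=U_{G/\Pi}(t)$. Assembling the four blocks yields the stated identity. I expect the only delicate point to be the $(\mathcal B,\mathcal B)$-block: one has to balance the $\tfrac12$ coming from $E_\lambda$ against the $\tfrac1{\sqrt2}$ normalization built into $\mathcal B$ and verify that $I-P$ exactly doubles the antisymmetric part of $E_\lambda\e_a$, so that the block comes out as $U_{X_1}(t)$ itself rather than a scalar multiple of it; everything else is routine once the orthonormality of the columns of $Q$ is in place.
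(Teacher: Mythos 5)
Your proposal is correct and follows essentially the same route as the paper: Lemma \ref{2t1} gives the intertwining $U_G(t)Q_1=Q_1U_{X_1}(t)$ for the top-left block, Proposition \ref{ep1} gives $U_G(t)\mathcal{C}=\mathcal{C}U_{G/\Pi}(t)$ for the bottom-right, and orthogonality of the two column sets kills the off-diagonal blocks (the paper phrases this as $W$-invariance with $\mathcal{C}$ spanning $W^{\perp}$, while you compute the blocks directly, using symmetry of $U_G(t)$ for one of them). Your bookkeeping of the $\tfrac12$ in $E_\lambda$ against the $\tfrac1{\sqrt2}$ normalization is exactly the check the paper's entrywise computation performs via $\tfrac12(I-P)U_G(t)(I-P)=U_G(t)(I-P)$.
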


\begin{proof}
Let $\e_a$ and $\check{\e}_a$ be the characteristic vectors of a vertex $a$ in $G$ and $X_1$, respectively. Since the permutation matrix $P$ commutes with $U_G(t),$ for each vertex $b$ in $X_1$ 
\[\frac{1}{2}\ob{\e_b-\e_{\hat{f}(b)}}^T U_G(t) \ob{\e_a-\e_{\hat{f}(a)}}= \frac{1}{2}\e_b^T(I-P) U_G(t) (I-P)\e_a = \e_b^T U_G(t) \ob{\e_a-\e_{\hat{f}(a)}}.\]
Applying Lemma \ref{2t1} gives  
  \begin{align*}
     \e_b^T U_G(t) (\e_a-\e_{\hat{f}(a)})&= \sum_{\lambda\in\sigma(X_1)}\exp{(it\lambda)}~\e_b^TE_\lambda(I-P)\e_a \\ &= \frac{1}{2}\sum_{\lambda\in\sigma(X_1)}\exp{(it\lambda)}~\check{\e}_b^T\begin{bmatrix}
    F_{\lambda} & -F_{\lambda} & \mathbf{0}
\end{bmatrix} \begin{bmatrix}
    ~~\check{\e}_a\\ -\check{\e}_a\\ ~~0
\end{bmatrix}\\
&= \check{\e}_b^T U_{X_1}(t) \check{\e}_a.
  \end{align*}
  It follows that, if $W=\text{span}\ob{\mathcal{B}}$ then the transition matrix $U_G(t)$ is $W$-invariant. The matrix of the restriction operator $U_G(t)\bigl\vert_W$ relative to $\mathcal{B}$ becomes $U_{X_1}(t).$ Suppose $\mathcal{C}$ is the matrix whose columns are the normalized characteristic vectors of $\Pi.$ The columns of $\mathcal{C}$  forms a basis of $W^{\perp},$ the subspace orthogonal to $W.$ If the adjacency matrices of $G$ and $G/\Pi$ are $A$ and $A_{\Pi}$, respectively, then $A\mathcal{C}=\mathcal{C}A_{\Pi}.$ Consequently, we obtain $U_G(t)\mathcal{C}=\mathcal{C}U_{G/\Pi}(t),$ as observed in \cite{bac, ge}. This completes the proof.
\end{proof}

\begin{figure}
\begin{center}
\begin{tikzpicture}[scale=.5,auto=left]
                       \tikzstyle{every node}=[circle, thick, fill=white, scale=0.6]
                       
		        \node[draw] (1) at (1.5,0) {$u$};		        
		        \node[draw,minimum size=0.65cm, inner sep=0 pt] (2) at (-2.5, 1.5) {$a$};
		        \node[draw,minimum size=0.65cm, inner sep=0 pt] (3) at (-1.5, 2.5) {$b$};
		        \node[draw,minimum size=0.65cm, inner sep=0 pt] (4) at (-2.5, -1.5) {$c$};
		        \node[draw,minimum size=0.65cm, inner sep=0 pt] (5) at (-1.5, -2.5) {$d$};		        
	
				\node at (-4.25, 2) {$X_1$};
				\node at (-4.25,-2) {$X_2$};
				\node at (6.3,0.5) {$H$};
				
				\draw[dashed] (-2,2) circle (1.5 cm);
				\draw[dashed] (-2,-2) circle (1.5 cm);
				\draw[dashed] (3.2,0.5) circle (2.5 cm);
								
				\draw [thick, black!70] (1)--(3)--(2);
				\draw [thick, black!70] (1)--(5)--(4);

				\draw[thick, black!70] (1)..controls (3,4) and (8,0)..(1);

		        \node[draw,minimum size=0.65cm, inner sep=0 pt] (6) at (-7,0) {$u$};		        
		        \node[draw,minimum size=0.65cm, inner sep=0 pt] (7) at (-11,1.5) {$a$};
		        \node[draw,minimum size=0.65cm, inner sep=0 pt] (8) at (-10,2.5) {$b$};
		        \node[draw,minimum size=0.65cm, inner sep=0 pt] (9) at (-11,-1.5) {$c$};
		        \node[draw,minimum size=0.65cm, inner sep=0 pt] (10) at (-10,-2.5) {$d$};			

 	                         \draw[dashed] (-10.5,2) circle (1.5 cm);
				\draw[dashed] (-10.5,-2) circle (1.5 cm);
				
				\node at (-12.75, 2) {$X_1$};
				\node at (-12.75,-2) {$X_2$};
				
                \draw [thick, black!70] (7)--(8)--(6)--(10)--(9);				
				\end{tikzpicture}
				
\end{center}	
\caption{\label{fig1} The path $P_5$ and its perturbation.}
\end{figure}
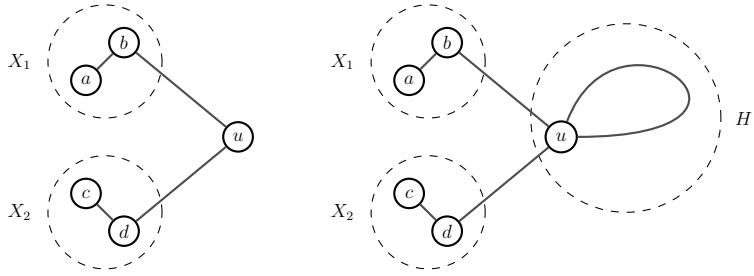

In Theorem \ref{2c2}, we observe that the evolution of certain pair states in a graph $G$ having isomorphic branches depends only on the local structure of $G.$ It is evident that all properties related to vertex state transfer in $X_1$ have natural consequences in terms of the pair states in $G.$ Consider the path $P_5$ as given in Figure \ref{fig1}, where $X_1$ and $X_2$ appear as isomorphic branches. It is well known that the path $X_1$ on two vertices admits PST at time $\frac{\pi}{2}$ between the end vertices $a$ and $b.$ Consequently, PPST occurs in $P_5$ between $(a,c)$ and $(b,d)$ at the same time. In fact, if the middle vertex of $P_5$ is identified with any vertex of a graph $H$ then PPST between $(a,c)$ and $(b,d)$ remains unaffected. In particular, considering $H$ a tree or an unicyclic graph, we have the following.

\begin{cor}
 There are infinitely many trees and unicyclic graphs exhibiting perfect pair state transfer. \end{cor}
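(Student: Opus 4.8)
The plan is to produce the two infinite families via the isomorphic‑branch machinery of Theorem~\ref{2c2}, using $P_2$ as the common branch exactly as in the discussion of Figure~\ref{fig1}. Fix any connected graph $H$ together with a distinguished vertex $u\in V(H)$, and form $G=G(H,u)$ by attaching two vertex‑disjoint pendant paths $u\!-\!b\!-\!a$ and $u\!-\!d\!-\!c$ at $u$; when $H$ is a single vertex this recovers the path $P_5$ of Figure~\ref{fig1}. Let $X_1$ be the subgraph of $G$ induced on $\cb{a,b}$ and $X_2$ the subgraph induced on $\cb{c,d}$, and let $f\colon X_1\to X_2$ be the isomorphism with $f(a)=c$ and $f(b)=d$.

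First I would verify that $X_1$ and $X_2$ are isomorphic branches of $G$ in the sense of Section~\ref{2s_3}. The induced subgraph of $G$ on $\cb{a,b,c,d}$ is precisely $X_1\cup X_2$, since $a$ and $c$ are pendant so the only edges among these four vertices are $ab$ and $cd$; and for every $y\in V(G)\setminus\cb{a,b,c,d}$ we have $w(a,y)=w(c,y)=0$ and $w(b,y)=w(d,y)$ (both equal $1$ if $y=u$ and $0$ otherwise), so the weight condition $w(x,y)=w(f(x),y)$ holds for all $x\in V(X_1)$. Hence $\hat f$, the automorphism of $G$ that swaps $a\leftrightarrow c$ and $b\leftrightarrow d$ while fixing $V(H)$, meets the hypotheses of Lemma~\ref{2t1}. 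Its orbit partition $\Pi=\cb{\cb{a,c},\cb{b,d}}\cup\cb{\cb{v}:v\in V(H)}$ is equitable and has $\cb{a,\hat f(a)}$ and $\cb{b,\hat f(b)}$ among its cells, so it is a legitimate choice for Theorem~\ref{2c2}.

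Next I would apply Theorem~\ref{2c2} with this $\Pi$: the transition matrix $U_G(t)$ leaves $W=\operatorname{span}\cb{\tfrac{1}{\sqrt2}(\e_a-\e_c),\,\tfrac{1}{\sqrt2}(\e_b-\e_d)}$ invariant, and the matrix of $U_G(t)\big|_W$ relative to this ordered basis equals $U_{X_1}(t)$. Since $X_1\cong P_2$ exhibits perfect state transfer between its two vertices at time $\tfrac{\pi}{2}$, with $U_{P_2}(\tfrac{\pi}{2})=iA(P_2)$, it follows that $U_G(\tfrac{\pi}{2})\ob{\e_a-\e_c}=i\ob{\e_b-\e_d}$; as the pair states $\tfrac{1}{\sqrt2}\ob{\e_a-\e_c}$ and $\tfrac{1}{\sqrt2}\ob{\e_b-\e_d}$ are linearly independent, this is PPST between $(a,c)$ and $(b,d)$ in $G(H,u)$.

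Finally, to obtain infinitely many examples I would let $H$ range over a suitable infinite family within each class: attaching two pendant paths creates no new cycle, so $G(H,u)$ is a tree whenever $H$ is a tree and is unicyclic whenever $H$ is unicyclic. Taking $H=P_n$ for $n\ge 1$ yields infinitely many pairwise non‑isomorphic trees admitting PPST, and taking $H=C_n$ for $n\ge 3$ yields infinitely many pairwise non‑isomorphic unicyclic graphs admitting PPST. There is essentially no obstacle beyond the routine verification of the branch hypotheses; the only points deserving care are that the weight condition $w(b,y)=w(d,y)$ genuinely persists for every $y$ outside $X_1\cup X_2$ and that the pendant‑path attachment preserves being a tree (resp.\ unicyclic), both of which are immediate.
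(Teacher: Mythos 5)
Your proposal is correct and follows essentially the same route as the paper: the paper obtains the corollary by viewing the two outer edges of $P_5$ as isomorphic $P_2$-branches, invoking Theorem~\ref{2c2} to convert the $P_2$ perfect state transfer at time $\frac{\pi}{2}$ into PPST between $(a,c)$ and $(b,d)$, and then identifying the middle vertex with an arbitrary vertex of a tree or unicyclic graph $H$ — exactly your construction $G(H,u)$. Your write-up merely makes explicit the verification of the branch hypotheses that the paper leaves implicit.
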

 
 It is worth mentioning that among all trees only the path on two or three vertices exhibits PST \cite{cou7}. In \cite{cou9}, the author finds that a graph with $m$ edges never have PST from vertex $a$ whenever the eccentricity $\epsilon_a$ of $a$ satisfies $(\epsilon_a)^3\geq 54m.$ However, in the case of PPST, there is no such bound on eccentricity of the vertices or the graph diameter. One may wonder whether there is a simple graph $G$ exhibiting PST between a pair of vertices $a$ and $b,$ and PGST between another pair $c$ and $d$ where there is no PST. The answer is positive when considering pair states instead of the vertex states. Consider the graph $G$ given in Figure \ref{2Fig_2}. One may apply Theorem \ref{2c2} and observe that PPST occurs in $G$ between $(1,3)$ and $(5,6)$. Here $G$ exhibits pair-PGST between $(2,4)$ and $(9,12)$ as well. Since there is no PST in $P_4,$ we have no PPST between $(2,4)$ and $(9,12).$ 

 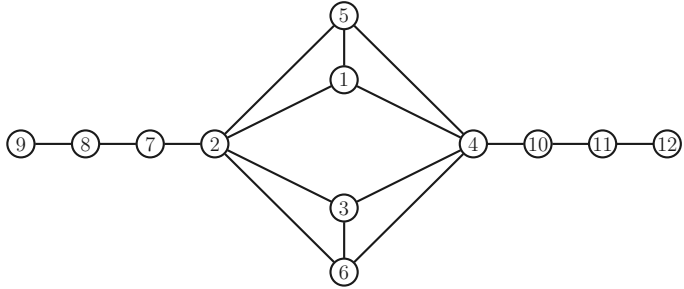
\begin{figure}
		\centering
                    \begin{tikzpicture}[scale=1.7,auto=left]
                       \tikzstyle{every node}=[circle, thick, black!90, fill=white, scale=0.65]

                \node[draw,minimum size=0.55cm, 
                inner sep=0 pt] (9) at (0, 0) {$9$};
				 \node[draw,minimum size=0.55cm, inner sep=0 pt] (8) at (0.5,0) {$8$};
				 \node[draw,minimum size=0.55cm, 
                 inner sep=0 pt] (7) at (1, 0) {$7$};
                 
                 \node[draw,minimum size=0.55cm, inner sep=0 pt] (2) at (1.5, 0) {$2$};
            
                   \node[draw,minimum size=0.55cm, 
                inner sep=0 pt] (4) at (3.5,0) {$4$};
				 \node[draw,minimum size=0.55cm, inner sep=0 pt] (10) at (4,0) {$10$};
				 \node[draw,minimum size=0.55cm, inner sep=0 pt] (11) at (4.5, 0) {$11$};
     \node[draw,minimum size=0.55cm, inner sep=0 pt] (12) at (5, 0) {$12$};
     
                 \node[draw,minimum size=0.55cm, inner sep=0 pt] (3) at (2.5, -0.5) {$3$};

                 \node[draw,minimum size=0.55cm, 
                inner sep=0 pt] (6) at (2.5, -1) {$6$};
				 \node[draw,minimum size=0.55cm, inner sep=0 pt] (1) at (2.5, 0.5) {$1$};
     \node[draw,minimum size=0.55cm, inner sep=0 pt] (5) at (2.5, 1) {$5$};

      \draw[thick, black!90] (1)-- (2)--(3)--(4)--(1)--(5);
  \draw[thick, black!90]  (3)--(6)--(4)--(10)--(11)--(12);
   \draw[thick, black!90] (6)-- (2)--(7)--(8)--(9);
  \draw[thick, black!90]  (2)--(5)--(4);
\end{tikzpicture}	
		\caption{A graph having PPST and pair-PGST.}
  \label{2Fig_2}
	\end{figure}

Suppose $K_n$ is a complete graph on $n$ vertices. All permutations of the $n$ vertices of $K_n$ forms an automorphism. One can deduce using Lemma \ref{al3} that there is no pair-PGST in $K_n.$ Let $a,$ $b,$ $c,$ $d$ be distinct vertices in $K_n,$ where $n\geq 5,$ which form a cycle $C_4$ with $a$ and $c$ as antipodal vertices. Observe that the edges $(a,c)$ and $(b,d)$ are appearing as isomorphic branches of $K_n\backslash C_4.$ Theorem \ref{2c2} applies here to find that PPST occurs between $(a,b)$ and $(c,d).$ In fact, we have the following conclusion.
\begin{cor}
   Let $K_n$ be a complete graph on $n$ vertices. The removal of any number of disjoint cycles on four vertices from $K_n,$ where $n\geq 5,$ gives PPST from every pair of non-adjacent vertices in the resulting graph. 
\end{cor}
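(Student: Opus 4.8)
The plan is to iterate the construction already used for a single removed $C_4$, verifying at each stage that the relevant induced-subgraph and weight conditions of Lemma \ref{2t1} continue to hold. Fix disjoint $4$-cycles $C^{(1)}, C^{(2)}, \dots, C^{(r)}$ in $K_n$ (with $n \ge 5$), and let $H = K_n \setminus \bigl(C^{(1)} \cup \cdots \cup C^{(r)}\bigr)$ be the graph obtained by deleting exactly the edges of these cycles. Let $a, b$ be non-adjacent vertices of $H$; non-adjacency means the edge $ab$ was deleted, so $a$ and $b$ are antipodal in one of the removed cycles, say $C^{(1)}$, with the other two vertices $c, d$ of $C^{(1)}$ also antipodal to each other. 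I claim the single edges $X_1 = \{a, c\}$ (i.e. the edge $ac$ of $H$) and $X_2 = \{b, d\}$ are isomorphic branches of $H$, with the isomorphism $f$ sending $a \mapsto b$ and $c \mapsto d$.

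First I would check the structural hypothesis. Since $C^{(1)} = a\,c\,b\,d\,a$ as a $4$-cycle in $K_n$ with antipodal pairs $\{a,b\}$ and $\{c,d\}$, the deleted edges incident to these four vertices within $C^{(1)}$ are precisely $ab$ and $cd$; thus in $H$ the pair $\{ac, bd\}$ forms an induced $2K_2$ (the edges $ac$ and $bd$ survive, while $ab$ and $cd$ are gone). Next, for any vertex $y \in V(H) \setminus \{a,b,c,d\}$, I must show $w_H(a,y) = w_H(b,y)$ and $w_H(c,y) = w_H(d,y)$. The only way this could fail is if some removed cycle $C^{(i)}$ contains exactly one of $a, b$ together with $y$ as an adjacent pair on that cycle, but not the other; however, the cycles are pairwise disjoint, so no $C^{(i)}$ with $i \ge 2$ contains $a$ or $b$ at all, and within $C^{(1)}$ the neighbours of $a$ are $c, d$ — the same as the neighbours of $b$ — so $a$ and $b$ lose edges to exactly the same external vertices (namely none). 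Hence $\hat f$, the transposition $(a\,b)(c\,d)$ fixing everything else, is an automorphism of $H$, and $X_1, X_2$ are isomorphic branches.

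With the hypothesis verified, Theorem \ref{2c2} applies to $H$: taking $\Pi$ the equitable partition whose non-trivial cells are $\{a,b\}$ and $\{c,d\}$, the transition matrix $U_H(t)$ block-decomposes, and the block on $\mathrm{span}\,\mathcal{B}$ with $\mathcal{B} = \bigl\{\tfrac{1}{\sqrt2}(\e_a - \e_b),\, \tfrac{1}{\sqrt2}(\e_c - \e_d)\bigr\}$ is exactly $U_{X_1}(t)$, the transition matrix of $X_1 \cong K_2$. Since $K_2$ has PST between its two vertices at time $\tau = \tfrac{\pi}{2}$, we get $U_H(\tau)(\e_a - \e_c) = \gamma(\e_b - \e_d)$ for a suitable unimodular $\gamma$ — no, more precisely, translating back through $Q$, we obtain $U_H(\tau)(\e_a - \e_b) = \gamma(\e_c - \e_d)$, which is exactly PPST between $(a,b)$ and $(c,d)$. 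Since $a, b$ was an arbitrary non-adjacent pair, the corollary follows. The only real point requiring care — and the place I would spend the most attention — is the disjointness argument showing $w_H(a,y) = w_H(b,y)$ for external $y$: one must be sure that overlapping or edge-sharing removed cycles are excluded by the hypothesis "disjoint," and that "non-adjacent in the resulting graph" genuinely forces $a, b$ to be an antipodal pair of one of the removed $C_4$'s rather than coincidentally non-adjacent for some other reason (in $K_n$ there is no other reason, so this is immediate, but it is worth stating).
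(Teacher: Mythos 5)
Your overall strategy is exactly the paper's: realize the two surviving diagonals of a removed $C_4$ as isomorphic branches, each isomorphic to $K_2$, apply Theorem~\ref{2c2}, and import the perfect state transfer of $K_2$ at time $\pi/2$. Your extra observation that vertex-disjointness of the removed cycles is what guarantees $w_H(a,y)=w_H(b,y)$ for every external $y$ is a worthwhile point the paper leaves implicit. However, your verification of the branch hypothesis contains a concrete error. You assert that a non-adjacent pair $a,b$ of $H$ must be an \emph{antipodal} pair of some removed cycle, and that ``the deleted edges incident to these four vertices within $C^{(1)}$ are precisely $ab$ and $cd$.'' Removing a $4$-cycle deletes all four of its edges, so a pair becomes non-adjacent in $H$ precisely when it is an \emph{edge} of a removed cycle, i.e.\ a pair of consecutive vertices --- not an antipodal pair. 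With the labelling you chose, $C^{(1)}=a\,c\,b\,d\,a$, the deleted edges are $ac$, $cb$, $bd$, $da$; the diagonals $ab$ and $cd$ survive, so $a$ and $b$ remain adjacent in $H$ (contradicting your own hypothesis), and $X_1=\{a,c\}$ is not an edge of $H$, so there is no $K_2$ on which the PST step could run. Your statement ``the edges $ac$ and $bd$ survive, while $ab$ and $cd$ are gone'' is exactly backwards for the cycle you wrote down.

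The repair is a relabelling, after which everything you wrote goes through. Take the removed cycle to be $a\,b\,c\,d\,a$, so that $a$ and $b$ are consecutive on it and hence non-adjacent in $H$; the four deleted edges are $ab$, $bc$, $cd$, $da$; the surviving diagonals $ac$ and $bd$ are the isomorphic branches with $f(a)=b$ and $f(c)=d$; and the induced subgraph of $H$ on $\{a,b,c,d\}$ is $2K_2$. Your disjointness argument then correctly shows $w_H(a,y)=w_H(b,y)=1$ and $w_H(c,y)=w_H(d,y)=1$ for all external $y$, and Theorem~\ref{2c2} yields $U_H(\pi/2)\ob{\e_a-\e_b}=i\ob{\e_c-\e_d}$, i.e.\ PPST between the non-adjacent pairs $(a,b)$ and $(c,d)$, which is the paper's conclusion.
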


It is well known that $C_4,$ the cycle of length $4,$ admits PST at $\frac{\pi}{2}$ between antipodal vertices. We find a family of Cayley graphs exhibiting PPST, where multiple copies of $C_4$ appear as isomorphic branches. Let $\Zl_m$ denote the group of integers modulo $m.$ Consider $S=\cb{(0,\pm 1),(\pm 1,0),(\pm 1,2),(\pm m,0)}\subset \mathbb{Z}_{4m}\times \mathbb{Z}_4$ with $m>1.$ The Cayley graph $\text{Cay}\ob{\mathbb{Z}_{4m}\times \mathbb{Z}_4, S}$ has the vertex set $\mathbb{Z}_{4m}\times \mathbb{Z}_4,$ where two vertices $a$ and $b$ are adjacent whenever $a-b\in S.$ The edges of $\text{Cay}\ob{\mathbb{Z}_{4m}\times \mathbb{Z}_4, S}$ associated to $(\pm m,0)\in S$ forms $4m$ disjoint cycles of length $4,$ where the vertices $(j,k)$ and $(2m+j,k),$ for all $(j,k)\in\mathbb{Z}_{4m}\times \mathbb{Z}_4,$ appear as antipodal vertices in each copy of $C_4.$ The vertices $(j,k)$ and $(j,k+2)$ are appearing in two distinct copies of $C_4$, say $X_1$ and $X_2.$ Both $(j,k)$ and $(j,k+2)$ are adjacent to $(j, k\pm 1),~(j\pm 1, k),~(j\pm 1, k+2)$ in $V(G)\setminus V\ob{X_1\cup X_2}.$ Now, Theorem \ref{2c2} applies to find a family of Cayley graphs having PPST (or Laplacian PPST) from a pair of non-adjacent vertices.
\begin{cor}
    Let $m>1$ be an integer, and let $S=\cb{(0,\pm 1),(\pm 1,0),(\pm 1,2),(\pm m,0)}$ be a subset of $ \mathbb{Z}_{4m}\times \mathbb{Z}_4.$ The graph $\text{Cay}\ob{\mathbb{Z}_{4m}\times \mathbb{Z}_4, S}$ exhibits perfect pair state transfer between $((j,k),(j,k+2))$ and $((2m+j,k),(2m+j,k+2)),$ for all $(j,k)\in\mathbb{Z}_{4m}\times \mathbb{Z}_4.$
\end{cor}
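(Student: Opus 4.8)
The plan is to realize $X_1$ and $X_2$ as isomorphic branches of $G:=\mathrm{Cay}\ob{\mathbb{Z}_{4m}\times\mathbb{Z}_4,S}$ and then transport the antipodal perfect state transfer of $C_4$ to $G$ via Theorem~\ref{2c2}. Fix $(j,k)$, set $H_0=\langle(m,0)\rangle=\cb{(0,0),(m,0),(2m,0),(3m,0)}$ (a subgroup of order $4$), and let $X_1$ be the $4$-cycle spanned by the coset $(j,k)+H_0=\cb{(j,k),(m{+}j,k),(2m{+}j,k),(3m{+}j,k)}$ together with the edges coming from $(\pm m,0)\in S$, so that $(j,k)$ and $(2m{+}j,k)$ are antipodal in $X_1$. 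Let $X_2=X_1+(0,2)$, with $(j,k{+}2)$ and $(2m{+}j,k{+}2)$ antipodal, and let $f\colon X_1\to X_2$ be translation by $(0,2)$; this is a bijection carrying the cycle edges of $X_1$ to those of $X_2$ since translations preserve all differences.

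Next I would verify the three requirements making $X_1,X_2$ isomorphic branches of $G$. Connectedness of $G$ is clear because $(1,0),(0,1)\in S$ generate $\mathbb{Z}_{4m}\times\mathbb{Z}_4$. That $X_1\cup X_2$ is an induced subgraph equal to the disjoint union of two $4$-cycles uses $m>1$: an $S$-generator joining two vertices of $X_1\cup X_2$ must have first coordinate in $\cb{0,\pm m,\pm 2m}$, hence (as $m>1$) must be $(0,\pm1)$ or $(\pm m,0)$; the former would require second coordinates differing by $1$ inside $\cb{k,k{+}2}$, which is impossible, while the latter yields exactly the eight cycle edges. The crucial point is the condition $w(a,y)=w\ob{f(a),y}$ for $a\in V(X_1)$ and $y\notin V(X_1\cup X_2)$; the cleanest route is to observe that replacing $S$ by $S+(0,2)$ affects only the two generators $(\pm m,0)$, namely
\[
S\,\triangle\,\ob{S+(0,2)}=\cb{(\pm m,0),(\pm m,2)}.
\]
Writing $a=(i,k)$ and $y=(p,q)$, the values $w\ob{(i,k),(p,q)}$ and $w\ob{(i,k{+}2),(p,q)}$ can differ only when $(i-p,k-q)\in\cb{(\pm m,0),(\pm m,2)}$; but then $i-p\equiv\pm m$ forces $p$ to be one of the four first-coordinates occurring in $X_1$, while $k-q\in\cb{0,2}$ forces $q\in\cb{k,k{+}2}$, so $(p,q)\in V(X_1\cup X_2)$, contrary to the choice of $y$. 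The same matching works at every vertex of $X_1$ by translating this local picture by multiples of $(m,0)$.

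Having established the branch structure, I would recall that $C_4$ admits perfect state transfer between antipodal vertices at $t=\tfrac{\pi}{2}$, so in $X_1$ one has $U_{X_1}\ob{\tfrac{\pi}{2}}\check{\e}_{(j,k)}=\gamma\,\check{\e}_{(2m{+}j,k)}$ for a unimodular $\gamma$. By Theorem~\ref{2c2} (equivalently, directly by Lemma~\ref{2t1}), the isometry $\tfrac{1}{\sqrt2}\ob{\e_b-\e_{f(b)}}\mapsto\check{\e}_b$, $b\in V(X_1)$, intertwines the restriction of $U_G(t)$ to $\mathrm{span}\,\mathcal{B}$ with $U_{X_1}(t)$; evaluating at $t=\tfrac{\pi}{2}$ gives
\[
U_G\ob{\tfrac{\pi}{2}}\ob{\e_{(j,k)}-\e_{(j,k{+}2)}}=\gamma\ob{\e_{(2m{+}j,k)}-\e_{(2m{+}j,k{+}2)}},
\]
which is precisely perfect pair state transfer between $\ob{(j,k),(j,k{+}2)}$ and $\ob{(2m{+}j,k),(2m{+}j,k{+}2)}$; the two pair states are linearly independent because $j\not\equiv 2m{+}j\pmod{4m}$, so their supports are disjoint. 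Since $(j,k)$ was arbitrary this proves the claim, and the Laplacian version follows at once since $G$ is $\mb{S}$-regular, for which the adjacency and Laplacian models are equivalent. I expect the main obstacle to be the branch-condition check, precisely because translation by $(0,2)$ is \emph{not} an automorphism of $G$ — it fails exactly on the $(\pm m,0)$-edges — so one must argue that this failure is invisible to every vertex outside $X_1\cup X_2$; the hypothesis $m>1$ is exactly what forces the two $4$-cycles to be disjoint and chordless and thereby confines all the offending adjacencies to $X_1\cup X_2$ itself.
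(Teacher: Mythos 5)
Your proposal is correct and follows essentially the same route as the paper: realize the two cosets of $\langle(m,0)\rangle$ as isomorphic $C_4$-branches related by translation by $(0,2)$, and transport the antipodal perfect state transfer of $C_4$ at $t=\frac{\pi}{2}$ through Theorem~\ref{2c2}. Your verification of the branch condition via the symmetric difference $S\,\triangle\,(S+(0,2))=\cb{(\pm m,0),(\pm m,2)}$ is merely a more explicit form of the paper's observation that $(j,k)$ and $(j,k+2)$ have identical neighbourhoods outside $V(X_1\cup X_2)$.
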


A vertex $a$ in a graph $G$ is said to be $C$-sedentary if for some constant $0 < C \leq 1,$
\[\inf_{t>0}~\mb{\e_a^TU_G(t)\e_a}\geq C.\]
The notion of sedentary family of graphs was first introduced by Godsil \cite{god5}. It is observed in \cite[Proposition 2]{mon} that a $C$-sedentary vertex in a graph does not have PGST. Theorem \ref{2c2} applies to identify graphs having $C$-sedentary pair states where there is no pair-PGST.

\begin{cor}\label{3c3}
    Suppose the premise of Lemma \ref{2t1} holds. Let $X_1$ has a vertex $a$ that is $C$-sedentary. Then there is no pretty good pair state transfer from $(a,f(a))$ in $G.$
\end{cor}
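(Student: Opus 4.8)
The plan is to transfer the sedentariness of the vertex $a$ in $X_1$ directly to the pair state $\frac{1}{\sqrt{2}}\ob{\e_a-\e_{\hat f(a)}}$ in $G$, using the block-diagonalization from Theorem \ref{2c2}, and then invoke the observation of \cite[Proposition 2]{mon} that a $C$-sedentary state cannot have pretty good state transfer. First I would fix notation: let $\check\e_a$ denote the characteristic vector of $a$ in $X_1$ and let $v=\frac{1}{\sqrt 2}\ob{\e_a-\e_{\hat f(a)}}\in W=\text{span}(\mathcal B)$. The key computation in the proof of Theorem \ref{2c2} shows that for every $b\in V(X_1)$,
\[
\frac{1}{2}\ob{\e_b-\e_{\hat f(b)}}^T U_G(t)\ob{\e_a-\e_{\hat f(a)}}=\check\e_b^T U_{X_1}(t)\check\e_a,
\]
so in the basis $\mathcal B$ the matrix of the restriction $U_G(t)\bigl\vert_W$ is exactly $U_{X_1}(t)$. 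In particular, taking $b=a$ gives
\[
v^T U_G(t)v=\check\e_a^T U_{X_1}(t)\check\e_a,
\]
and hence $\mb{v^T U_G(t)v}=\mb{\check\e_a^T U_{X_1}(t)\check\e_a}$ for all $t$. Since $a$ is $C$-sedentary in $X_1$, taking the infimum over $t>0$ yields $\inf_{t>0}\mb{v^T U_G(t)v}\geq C$, so the pair state $v$ is $C$-sedentary in $G$ in the natural sense.

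Next I would rule out pair-PGST from $(a,f(a))$. Suppose for contradiction that $G$ has pair-PGST from $(a,f(a))$ to some $(c,d)$; then there is a sequence $t_k$ with $U_G(t_k)v\to\gamma\cdot\frac{1}{\sqrt 2}\ob{\e_c-\e_d}$ for some $\gamma\in\mathbb C$, and since $U_G(t_k)$ is unitary we must have $\mb\gamma=1$. Taking the inner product of the limit with $v$ itself gives $v^T U_G(t_k)v\to \gamma\cdot\frac{1}{2}\ob{\e_c-\e_d}^T\ob{\e_a-\e_{\hat f(a)}}$. If $(c,d)$ is the pair $(a,f(a))$ itself (periodicity), then the right-hand side is $\gamma$ with $\mb\gamma=1$ and there is nothing to contradict by sedentariness alone — but in that case there is no \emph{proper} pair-PGST, i.e. no transfer to a genuinely different pair state, which is the content we want. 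If $(c,d)$ is a pair linearly independent from $(a,f(a))$, then I would argue, exactly as in \cite[Proposition 2]{mon}, that sedentariness forces the overlap $\mb{\frac12\ob{\e_c-\e_d}^T\ob{\e_a-\e_{\hat f(a)}}}\geq C>0$, while strong cospectrality of the two pair states (guaranteed by \cite[Lemma 3.3]{bom2} once pair-PGST holds) forces this overlap to be $0$ because the pair states would then be orthogonal — a contradiction. Concretely, I would phrase the argument so as to mirror the vertex-state proof verbatim, replacing $\e_a$ by $v$ throughout: the map $t\mapsto U_G(t)v$ has $v$-coordinate bounded away from $0$ in modulus, so its closure in the unit sphere avoids any unit vector orthogonal to $v$, and every pair state distinct from $v$ that is a valid PGST target is orthogonal to $v$.

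The main obstacle, and the one place where care is needed, is making precise the notion of a $C$-sedentary \emph{pair state} and the corresponding non-PGST statement, since \cite[Proposition 2]{mon} is stated for vertex states $\e_a$. The cleanest route is to observe that the argument of \cite{mon} never uses that the state is a standard basis vector: it only uses unitarity of $U_G(t)$ and the fact that the self-overlap $\mb{u^T U_G(t)u}$ stays $\geq C$, together with strong cospectrality of PGST partners. So I would either cite a pair-state version of that proposition or, more self-containedly, include the two-line argument: if $U_G(t_k)u\to\gamma w$ with $w$ a unit vector and $\mb\gamma=1$, then $\mb{u^Tw}=\lim\mb{u^T U_G(t_k)u}\geq C$, whence $w$ cannot be orthogonal to $u$; since any PGST target pair state linearly independent from $u$ must be strongly cospectral with $u$ and hence (being a $\pm1$ combination of the same eigenprojections applied to orthogonal-at-each-eigenvalue data) orthogonal to $u$, no such target exists. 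I expect the only subtlety to be bookkeeping the constant $\frac1{\sqrt2}$ normalizations and confirming that "no pair-PGST from $(a,f(a))$" is meant as "no pair-PGST to any pair linearly independent from $(a,f(a))$," which is the standard convention in the paper.
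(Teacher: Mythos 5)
Your first step is exactly the paper's: the computation inside the proof of Theorem \ref{2c2} gives $\tfrac12\ob{\e_a-\e_{\hat f(a)}}^TU_G(t)\ob{\e_a-\e_{\hat f(a)}}=\check\e_a^TU_{X_1}(t)\check\e_a$, so the pair state inherits the $C$-sedentariness of $a$. That part is fine.

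The gap is in how you rule out a target. Your linchpin is the claim that a strongly cospectral partner of a pair state, being linearly independent from it, must be orthogonal to it; this is false. Strong cospectrality $E_\lambda\ob{\e_a-\e_b}=\pm E_\lambda\ob{\e_c-\e_d}$ gives $\ob{\e_a-\e_b}^T\ob{\e_c-\e_d}=\sum_\lambda\pm\norm{E_\lambda\ob{\e_a-\e_b}}^2$, which need not vanish: in $P_3$ the pairs $(1,2)$ and $(2,3)$ are strongly cospectral, are genuine PPST partners, and their normalized pair states have inner product $-\tfrac12$. So sedentariness of the pair state alone does not preclude pair-PGST to a non-orthogonal pair, and your contradiction does not close. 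What saves the statement is the structure you already set up but did not use: $W=\text{span}\ob{\mathcal B}$ is $U_G(t)$-invariant and closed, so any limit of $U_G(t_k)\ob{\e_a-\e_{\hat f(a)}}$ lies in $W$, and the only pair states in $W$ are those of the form $\tfrac{1}{\sqrt2}\ob{\e_b-\e_{\hat f(b)}}$ with $b\in X_1$ (equivalently, apply the automorphism $\hat f$ as in the argument preceding Lemma \ref{2l4}). For $b\neq a$ the supports $\cb{a,\hat f(a)}$ and $\cb{b,\hat f(b)}$ are disjoint, so this target \emph{is} orthogonal to the initial state, and then either your overlap bound $\mb{v^Tw}\geq C>0$ or, equivalently, the restriction to $W$ giving PGST from the $C$-sedentary vertex $a$ in $X_1$ (contradicting \cite[Proposition 2]{mon}) finishes the proof. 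With that substitution for the false orthogonality claim, your argument matches the paper's intended one.
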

In a complete graph $K_n$ on $n$ vertices where $n\geq 3,$ one observes $\mb{\e_a^TU_{K_n}(t)\e_a}\geq 1-\frac{2}{n},$ for all vertices $a.$ Hence, each vertex in $K_n$ is $\ob{1-\frac{2}{n}}$-sedentary. Now, consider the graph $G=K_n+K_1+K_n,$ for $n\geq 3,$ where all vertices of two disjoint copies of $K_n$ are joined with an isolated vertex by edges. Since the pair state associated to $(a,b)$ becomes an eigenvector of $G$ whenever $a$ and $b$ lies in the same copy of $K_n,$ there is no PPST from $(a,b)$ (see \cite{kim}). In the remaining cases, where both $a$ and $b$ have degree $n,$ Corollary \ref{3c3} applies to conclude that $G$ does not exhibit PPST. Finally, if the vertex $x$ of degree $2n$ is paired with another vertex $b$ in $G,$ then the eigenvalue support of $(x,b)$ contains the eigenvalues $n-1$ and $\frac{(n-1)\pm\sqrt{(n+1)^2+4n}}{2}.$ Since periodicity is necessary for the existence of PPST, we find using Theorem \ref{2t_1} that there is no PPST in $G.$

A graph $G$ is said to have fractional revival from a vertex $a$ whenever the transition matrix of $G$ maps the state associated with $a$ to a linear combination of the states of a collection of vertices containing the initial one \cite{bern1, cha2, gan1, mon1}. One may consider fractional revival on $G$ from a pair state as well. A graph $G$ admits fractional revival at time $\tau$ between pair states $\frac{1}{\sqrt{2}}\ob{\e_a-\e_b}$ and $\frac{1}{\sqrt{2}}\ob{\e_c-\e_d}$ associated to $(a,b)$ and $(c,d),$ respectively, if for some $\alpha, \beta \in\mathbb{C}$ with $\beta\neq 0$, we have
\[U_G(\tau)\ob{\e_a-\e_b}=\alpha\ob{\e_a-\e_b}+\beta \ob{\e_c-\e_d}.\]
It is well known that $P_4$ exhibits $\ob{-\cos{\frac{\pi}{\sqrt{5}}}, -i\sin{\frac{\pi}{\sqrt{5}}}}$-revival at $\frac{2\pi}{\sqrt{5}}$ between the end vertices. Then Theorem \ref{2c2} applies to conclude that $P_9$ exhibits fractional revival between $(1,9)$ and $(4,6)$ at the same time. Using Theorem \ref{2c2}, one may construct families of graphs having fractional revival from a pair state to another as below.

\begin{cor}
Suppose the premise of Theorem \ref{2c2} holds. If $X_1$ admits fractional revival between vertices $a$ and $b,$ then there is fractional revival between $\frac{1}{\sqrt{2}}\ob{\e_a-\e_{f(a)}}$ and $\frac{1}{\sqrt{2}}\ob{\e_b-\e_{f(b)}}$ in $G.$ 
 \end{cor}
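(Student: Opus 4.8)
The plan is to reduce the statement about fractional revival of pair states in $G$ to the corresponding statement about vertex states in $X_1$, using the conjugation identity already established in Theorem~\ref{2c2}. Recall that fractional revival in $X_1$ between $a$ and $b$ at time $\tau$ means $U_{X_1}(\tau)\check{\e}_a=\alpha\check{\e}_a+\beta\check{\e}_b$ for some $\alpha,\beta\in\mathbb{C}$ with $\beta\neq 0$, where $\check{\e}_a,\check{\e}_b$ denote the characteristic vectors in $X_1$. The first step is to observe that, under the isometric identification given by the matrix $Q$ of Theorem~\ref{2c2}, the column of $\mathcal{B}$ indexed by $a\in X_1$ is exactly $\frac{1}{\sqrt 2}(\e_a-\e_{\hat f(a)})$, and these vectors play the role of $\check{\e}_a$ in the block $U_{X_1}(t)$.

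Concretely, I would write $\frac{1}{\sqrt2}(\e_a-\e_{\hat f(a)})=Q\mathbf{v}_a$ where $\mathbf{v}_a$ is the standard basis vector selecting the $a$-th column of $Q$ among the first $|V(X_1)|$ columns (those coming from $\mathcal{B}$). Applying $U_G(\tau)$ and using $Q^TU_G(\tau)Q=\mathrm{diag}(U_{X_1}(\tau),U_{G/\Pi}(\tau))$ from Theorem~\ref{2c2}, together with $QQ^T$ acting as the identity on $W=\mathrm{span}(\mathcal{B})$ (since the columns of $Q$ form an orthonormal basis and $\frac{1}{\sqrt2}(\e_a-\e_{\hat f(a)})\in W$), one gets
\[
U_G(\tau)\,\tfrac{1}{\sqrt2}(\e_a-\e_{\hat f(a)})=Q\begin{bmatrix}U_{X_1}(\tau)&\mathbf{0}\\\mathbf{0}&U_{G/\Pi}(\tau)\end{bmatrix}\mathbf{v}_a=Q\begin{bmatrix}U_{X_1}(\tau)\check{\e}_a\\\mathbf{0}\end{bmatrix}.
\]
Now substitute the fractional revival hypothesis $U_{X_1}(\tau)\check{\e}_a=\alpha\check{\e}_a+\beta\check{\e}_b$; by linearity the right-hand side becomes $\alpha\,Q\mathbf{v}_a+\beta\,Q\mathbf{v}_b=\frac{\alpha}{\sqrt2}(\e_a-\e_{\hat f(a)})+\frac{\beta}{\sqrt2}(\e_b-\e_{\hat f(b)})$, which (after clearing the $\frac{1}{\sqrt2}$) is exactly the defining relation for fractional revival in $G$ between the pair states $\frac{1}{\sqrt2}(\e_a-\e_{f(a)})$ and $\frac{1}{\sqrt2}(\e_b-\e_{f(b)})$ at time $\tau$, with the same coefficients $\alpha,\beta$ and $\beta\neq0$. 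Note $\hat f(a)=f(a)$ on $V(X_1)$ by construction of $\hat f$.

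I do not expect a serious obstacle here — the content is entirely carried by Theorem~\ref{2c2}. The only point requiring a line of care is the bookkeeping that $\frac{1}{\sqrt2}(\e_a-\e_{\hat f(a)})$ lies in the span $W$ of $\mathcal B$ and hence is genuinely reproduced by $QQ^T$, so that conjugating by $Q$ loses no information; this is immediate since $\frac{1}{\sqrt2}(\e_a-\e_{\hat f(a)})$ is literally one of the columns of $Q$. One should also remark that linear independence of the two target pair states (needed for the statement to be non-vacuous) follows from $a\neq b$ in $X_1$ together with the fact that $\{\frac{1}{\sqrt2}(\e_a-\e_{\hat f(a)}):a\in X_1\}$ is an orthonormal set. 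This completes the argument.
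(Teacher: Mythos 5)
Your argument is correct and is precisely the route the paper takes: the corollary is stated as an immediate consequence of Theorem \ref{2c2}, whose block-diagonalization $Q^TU_G(t)Q=\mathrm{diag}\ob{U_{X_1}(t),U_{G/\Pi}(t)}$ transports the fractional-revival relation $U_{X_1}(\tau)\check{\e}_a=\alpha\check{\e}_a+\beta\check{\e}_b$ verbatim to the corresponding pair states in $G$. The paper gives no further detail beyond ``apply Theorem \ref{2c2},'' so your bookkeeping --- that each pair state $\frac{1}{\sqrt{2}}\ob{\e_a-\e_{\hat{f}(a)}}$ is literally a column of $Q$ and that $W=\text{span}\ob{\mathcal{B}}$ is $U_G(t)$-invariant --- is exactly the implicit content of the paper's proof.
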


\section{Pair state transfer on special classes of graphs}\label{2s3}
The results uncovered in Section \ref{2s_3} play a significant role in characterizing pair state transfer on special classes of graphs, including paths, cycles, and others. First we present a complete characterization of PPST in paths. Then, we unfold a sufficient condition for pair-PGST in paths, which leads to a complete characterization of pair-PGST in cycles. These insights guide us to discover new families of graphs exhibiting pair state transfer.

\subsection{Paths}

After the pioneering work of Bose \cite{bose}, which demonstrates that PST occurs between the end vertices of $P_2$, significant progress has been made in the characterization of PST and PGST on paths. It is observed in \cite{chr1, god1} that $P_n$ on $n$ vertices exhibits PST if and only if $n=2, 3.$ Later, Chen et al. \cite{che1} showed that Laplacian PPST occurs on $P_n$ from a pair of adjacent vertices if and only if $n=3, 4.$ We examine the existence of PPST on paths with respect to the adjacency matrix including the case where both pairs are not edges. The eigenvalues \cite{bro} of $P_n$ are $\lambda_j=2\cos{\left(\frac{j\pi}{n+1}\right)},$ for $1\leq j \leq n,$ and the corresponding eigenvectors are $\ob{\beta_1,\beta_2,\ldots,\beta_n}^T,$ where $\beta_k=\sin\left(\frac{jk\pi}{n+1}\right). $ If $\lambda_2\notin\sigma_{ab},$ the support of a pair $(a,b),$ then $E_{\lambda_2}(\e_a-\e_b)=0.$ Equivalently,  
\[\sin{\ob{\frac{2a\pi}{n+1}}}-\sin{\ob{\frac{2b\pi}{n+1}}}=0,\]
which implies that $\cos{\ob{\frac{(a+b)\pi}{n+1}}}\sin{\ob{\frac{(a-b)\pi}{n+1}}}=0.$  Since $\frac{a-b}{n+1}$ is not an integer, $\frac{2(a+b)}{n+1}$ must be an odd integer. The maximum value of $a+b$ is $2n-1$, and hence the only possibility is the case when $a+b\in\cb{\frac{n+1}{2},~\frac{3(n+1)}{2}}$ and $n$ is odd. Therefore, we have the following observation.
\begin{lem}\label{2l1}
    Let $a$ and $b$ be two vertices in $P_n$ and $\lambda_2=2\cos\frac{2\pi}{n+1}.$ Then
    \begin{enumerate}
        \item $\lambda_2\in\sigma_{ab}$ for all pairs $(a,b),$ whenever $n$ is even. 
        \item $\lambda_2\in\sigma_{ab}$  whenever $n$ is odd and $a+b\notin \cb{\frac{n+1}{2},~\frac{3(n+1)}{2}}.$
    \end{enumerate}
\end{lem}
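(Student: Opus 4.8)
The plan is to exploit the fact that $P_n$ has a simple spectrum, so that the eigenprojector $E_{\lambda_2}$ is a rank-one matrix and the question of whether $\lambda_2\in\sigma_{ab}$ collapses to a single scalar equation. Writing $v$ for the eigenvector of $P_n$ associated with $\lambda_2=2\cos\frac{2\pi}{n+1}$, whose $k$-th coordinate is $\sin\frac{2k\pi}{n+1}$, one has $E_{\lambda_2}=\frac{1}{\|v\|^2}\,vv^T$. Hence $\lambda_2\notin\sigma_{ab}$ is equivalent to $E_{\lambda_2}(\e_a-\e_b)=0$, and because $E_{\lambda_2}$ has rank one this holds precisely when $v^T(\e_a-\e_b)=0$, i.e. when the $a$-th and $b$-th entries of $v$ agree:
\[ \sin\frac{2a\pi}{n+1}=\sin\frac{2b\pi}{n+1}. \]

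First I would turn this into a product using $\sin X-\sin Y=2\cos\tfrac{X+Y}{2}\sin\tfrac{X-Y}{2}$, which gives
\[ \cos\frac{(a+b)\pi}{n+1}\,\sin\frac{(a-b)\pi}{n+1}=0. \]
Since $a$ and $b$ are distinct vertices of $P_n$, we have $1\le|a-b|\le n-1<n+1$, so $\frac{a-b}{n+1}\notin\Zl$ and the sine factor is nonzero. Therefore $\cos\frac{(a+b)\pi}{n+1}=0$, equivalently $\frac{2(a+b)}{n+1}$ is an odd integer.

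The final step is a size estimate. As $1\le a,b\le n$ with $a\neq b$, we have $3\le a+b\le 2n-1$, hence $0<\frac{2(a+b)}{n+1}<4$, and the only odd integers in this interval are $1$ and $3$. Thus $\lambda_2\notin\sigma_{ab}$ can occur only when $a+b\in\{\tfrac{n+1}{2},\tfrac{3(n+1)}{2}\}$, and in either case $\tfrac{n+1}{2}$ must be an integer, which forces $n$ to be odd. Reading this contrapositively yields both parts simultaneously: if $n$ is even, no pair $(a,b)$ meets the exceptional condition, so $\lambda_2\in\sigma_{ab}$ for every pair; if $n$ is odd, then $\lambda_2\in\sigma_{ab}$ unless $a+b\in\{\tfrac{n+1}{2},\tfrac{3(n+1)}{2}\}$.

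The argument is elementary and I do not foresee a genuine obstacle; the only place needing a little care is the range estimate in the last step, where both endpoints of the range of $a+b$ must be used so as to exclude any odd integer other than $1$ and $3$ and to extract correctly the parity restriction on $n$. It is also worth recording explicitly that the simplicity of the spectrum of $P_n$ is exactly what legitimizes the reduction to a scalar condition, since for a graph with a repeated eigenvalue the corresponding eigenprojector would not have rank one.
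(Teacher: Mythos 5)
Your proposal is correct and follows essentially the same route as the paper: both reduce $\lambda_2\notin\sigma_{ab}$ to the scalar equation $\sin\frac{2a\pi}{n+1}=\sin\frac{2b\pi}{n+1}$, factor it via the sum-to-product identity, rule out the sine factor since $\frac{a-b}{n+1}\notin\Zl$, and then use the bound $a+b\leq 2n-1$ to conclude that $\frac{2(a+b)}{n+1}$ can only be the odd integer $1$ or $3$, forcing $n$ odd. Your explicit remarks on the rank-one eigenprojector and the lower end of the range of $a+b$ are sound elaborations of steps the paper leaves implicit.
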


If there is an automorphism $\eta$ of $G$ such that $\eta(a)=c$ and $\eta(b)=d,$ then both $(a,b)$ and $(c,d)$ have the same eigenvalue support. As a consequence, if $a, b, c, d$ are vertices in the path $P_n$ on odd number of vertices such that $a+b=\frac{n+1}{2}$ and $c+d=\frac{3(n+1)}{2},$ then the support of $(a,b)$ and $(c,d)$ are identical. Next, we find the size of the eigenvalue support of $(a,b),$ whenever $a+b=\frac{n+1}{2}.$
\begin{lem}\label{2l2}
Let $n$ be an odd positive integer with $n>8.$ If $a$ and $b$ are two vertices of $P_n$ satisfying $a+b=\frac{n+1}{2},$ then cardinality of the eigenvalue support of $(a,b)$ is at least $5.$
\end{lem}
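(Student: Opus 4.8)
The plan is to convert the support condition into an elementary divisibility statement about the index $j$ of the eigenvalue $\lambda_j=2\cos\frac{j\pi}{n+1}$, and then to exhibit six such indices among which at most one can fail. Since $P_n$ has simple eigenvalues, $\lambda_j\in\sigma_{ab}$ precisely when the eigenvector $v_j$ with entries $(v_j)_k=\sin\frac{jk\pi}{n+1}$ is not orthogonal to $\e_a-\e_b$, that is, when $\sin\frac{ja\pi}{n+1}\neq\sin\frac{jb\pi}{n+1}$. A sum-to-product identity rewrites this as $\cos\!\big(\frac{j(a+b)\pi}{2(n+1)}\big)\sin\!\big(\frac{j(a-b)\pi}{2(n+1)}\big)\neq 0$. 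Substituting the hypothesis $a+b=\frac{n+1}{2}$, so that $n+1=2(a+b)$, and writing $m=a+b$, the two arguments collapse to $\frac{j\pi}{4}$ and $\frac{j(a-b)\pi}{4m}$. Hence $\lambda_j\in\sigma_{ab}$ if and only if $j\not\equiv 2\pmod 4$ and $4m\nmid j(a-b)$.

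Next I would record the arithmetic of the second condition. Put $g=\gcd(|a-b|,4m)$; dividing the divisibility $4m\mid j(a-b)$ through by $g$ and using $\gcd\!\big(\frac{4m}{g},\frac{|a-b|}{g}\big)=1$ shows that $4m\mid j(a-b)$ if and only if $\frac{4m}{g}\mid j$. Because $a,b\ge 1$, $a\neq b$, and $a+b=m$, we have $1\le |a-b|\le m-2$, so $g\le m-2$ and therefore $\frac{4m}{g}\ge\frac{4m}{m-2}>4$; being a positive integer, $\frac{4m}{g}\ge 5$.

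Finally I would choose the indices $1,3,4,5,7,8$. Since $n>8$ and $n$ is odd we have $n\ge 9$, so these are all at most $n$ and yield six distinct eigenvalues of $P_n$. For $j\in\{1,3,4\}$ both defining conditions hold automatically, as $j\not\equiv 2\pmod 4$ and $j<5\le\frac{4m}{g}$; thus $\lambda_1,\lambda_3,\lambda_4\in\sigma_{ab}$. For $j\in\{5,7,8\}$ one has $j\not\equiv 2\pmod 4$, so $\lambda_j\notin\sigma_{ab}$ would force $\frac{4m}{g}\mid j$; since $\frac{4m}{g}$ is a single integer $\ge 5$ and no integer $\ge 5$ divides two of $5,7,8$, at most one of $\lambda_5,\lambda_7,\lambda_8$ is absent from $\sigma_{ab}$. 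Consequently $|\sigma_{ab}|\ge 3+2=5$.

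The computations here are all routine. The only steps that need care are the trigonometric reduction (tracking exactly which eigenvector entries vanish after using $a+b=\frac{n+1}{2}$) and the bound $\frac{4m}{g}\ge 5$, which is precisely what makes the small indices $1,3,4$ unconditionally good and what forces the hypothesis $n>8$ (so that the index $8$ is available); beyond this bookkeeping I do not foresee a real obstacle.
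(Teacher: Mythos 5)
Your proposal is correct and follows essentially the same route as the paper's proof: reduce membership of $\lambda_j$ in $\sigma_{ab}$ to the vanishing of $\cos\bigl(\tfrac{j\pi}{4}\bigr)\sin\bigl(\tfrac{j(a-b)\pi}{2(n+1)}\bigr)$, observe that the cosine factor is nonzero for $j\in\{1,3,4,5,7,8\}$, conclude $\lambda_1,\lambda_3,\lambda_4\in\sigma_{ab}$ unconditionally, and show at most one of $\lambda_5,\lambda_7,\lambda_8$ can be missing. The only difference is cosmetic: you rule out the bad indices via the gcd bound $\tfrac{4m}{g}\ge 5$, while the paper bounds $a-b\le\tfrac{n-3}{2}$ to force the offending fraction to equal $1$ for at most one $j$; both yield the same count of $3+2=5$.
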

\begin{proof}
Without loss of generality let $a>b.$ Since $a+b=\frac{n+1}{2},$ we must have $a,b\in\cb{1,2,\ldots,\frac{n-1}{2}},$ and the maximum value of $a-b$ is $\frac{n-3}{2}.$ If $\lambda_j=2\cos{\ob{\frac{j\pi}{n+1}}}\notin\sigma_{ab}$ for some $j\in\cb{1, 3, 4, 5, 7, 8}$ then \[\cos{\ob{\frac{(a+b)j\pi}{2(n+1)}}}\sin{\ob{\frac{(a-b)j\pi}{2(n+1)}}}=0.\]
Since $\cos{\ob{\frac{(a+b)j\pi}{2(n+1)}}}=\cos{\ob{\frac{j\pi}{4}}}\neq 0,$ we must have $\frac{(a-b)j}{2(n+1)}\in\Zl.$ Consequently, we have $\lambda_1, \lambda_3, \lambda_4 \in \sigma_{ab}.$ If $\frac{(a-b)j}{2(n+1)}$ is an integer for some $j\in\cb{5,7,8},$ then we necessarily have $\frac{(a-b)j}{2(n+1)}=1.$ Hence $\sigma_{ab}$ contains at least two among $\lambda_5, \lambda_7, \lambda_8$, and the result follows.
\end{proof}

The Euler totient function $\phi{(n)}$ counts positive integers that are less than and co-prime to $n.$ One may observe that $\phi{(n)}\geq \sqrt{\frac{n}{2}},$ for all $n.$ The eigenvalue $\lambda_2$ of the path $P_n$ is an algebraic integer of degree $\frac{\phi(n+1)}{2}$ over the rational numbers \cite[Theorem 1]{leh}. Consequently, $\frac{\phi(n+1)}{2}>2$ except for a few initial values of $n.$ We use Theorem \ref{2t_1} to deduce the following result.

\begin{thm}
    A path $P_n$ on $n$ vertices admits perfect pair state transfer if and only if $n\in\cb{3,5,7}.$
\end{thm}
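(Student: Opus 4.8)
Since periodicity is necessary for PPST (by \cite[Theorem 2.5]{kim}, as noted after Proposition~\ref{2c1}), I would split the argument into two directions. For the ``if'' direction, I would exhibit explicit PPST in $P_3$, $P_5$, and $P_7$: for each of these small paths one checks directly (or invokes Theorem~\ref{2c2} with a suitable choice of isomorphic branches, or a short spectral computation) that some pair state is mapped to another pair state by $U_{P_n}(\tau)$ at an appropriate time $\tau$. Concretely, $P_3$ has PST between its end vertices viewed as a degenerate pair, and for $P_5$ and $P_7$ one can use the known PST on short subpaths as branches; alternatively one just verifies the strong cospectrality plus the phase condition on the (small) eigenvalue support by hand.

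\emph{For the ``only if'' direction}, which is the substantive part, I would argue that for $n \notin \{3,5,7\}$ no pair state of $P_n$ can be periodic, so a fortiori there is no PPST. The eigenvalues of $P_n$ are $\lambda_j = 2\cos\frac{j\pi}{n+1}$. The key number-theoretic input, already quoted in the excerpt, is that $\lambda_2$ is an algebraic integer of degree $\frac{\phi(n+1)}{2}$ over $\Ql$ \cite[Theorem 1]{leh}, together with $\phi(n+1) \geq \sqrt{(n+1)/2}$, so $\frac{\phi(n+1)}{2} > 2$ for all but finitely many $n$. I would then run the periodicity test of Theorem~\ref{2t_1}: if a pair $(a,b)$ is periodic, its support $\sigma_{ab}$ must consist either entirely of integers, or entirely of numbers of the form $\frac{c + d\sqrt{\Delta}}{2}$ with $\Delta > 1$ square-free — in either case every eigenvalue in $\sigma_{ab}$ has degree at most $2$ over $\Ql$. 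So I must show that for $n \notin \{3,5,7\}$, the support of every pair contains an eigenvalue of degree $> 2$, and $\lambda_2$ is the natural candidate.

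By Lemma~\ref{2l1}, $\lambda_2 \in \sigma_{ab}$ for \emph{every} pair $(a,b)$ when $n$ is even, and when $n$ is odd it fails only for pairs with $a+b \in \{\frac{n+1}{2}, \frac{3(n+1)}{2}\}$. Thus for even $n$ (outside the small exceptions where $\frac{\phi(n+1)}{2} \le 2$) we are immediately done. For odd $n$ with $a+b = \frac{n+1}{2}$ (or $\frac{3(n+1)}{2}$, handled by the automorphism remark preceding Lemma~\ref{2l2}), Lemma~\ref{2l2} tells us that for $n > 8$ the support of $(a,b)$ has size at least $5$ and in fact contains $\lambda_1, \lambda_3, \lambda_4$ and at least two of $\lambda_5, \lambda_7, \lambda_8$; I would then argue that this large support cannot all lie in a degree-$\le 2$ set. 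The cleanest route is: the support contains $\lambda_1 = 2\cos\frac{\pi}{n+1}$, and $\lambda_1$ has the same degree $\frac{\phi(n+1)}{2}$ over $\Ql$ as $\lambda_2$ (being $2\cos$ of a primitive-type angle), so once $\frac{\phi(n+1)}{2} > 2$ periodicity fails through $\lambda_1$ just as it would through $\lambda_2$. This reduces everything to checking the finitely many $n$ with $\frac{\phi(n+1)}{2} \le 2$, i.e.\ $\phi(n+1) \le 4$, which forces $n+1 \in \{2,3,4,5,6,8,10,12\}$, i.e.\ $n \in \{1,2,3,4,5,7,9,11\}$; these I would dispose of by direct inspection, confirming PPST exactly for $n \in \{3,5,7\}$ (and ruling out $n = 1,2,4,9,11$ — e.g.\ $n=9,11$ by a direct support/periodicity check analogous to Lemma~\ref{2l2}).

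\emph{The main obstacle} is handling the odd paths with a ``balanced'' pair $a+b = \frac{n+1}{2}$: there $\lambda_2$ drops out of the support, so the degree-$2$ obstruction must be extracted from other eigenvalues in the support, and one needs to know that at least one eigenvalue $\lambda_j$ with $\gcd(j, 2(n+1))$ small — hence of degree $> 2$ — actually appears. Lemma~\ref{2l2} supplies enough eigenvalues in the support; the remaining care is in confirming that among $\lambda_1, \lambda_3, \lambda_4, \lambda_5, \lambda_7, \lambda_8$ at least one has degree exceeding $2$ for every $n > 8$, which again comes down to $\phi(n+1)/2 > 2$ plus tracking which $j$ are coprime to $n+1$. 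The finite residual check for small $n$ is routine but must be done carefully to pin down the exact set $\{3,5,7\}$.
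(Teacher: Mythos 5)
Your skeleton for the reduction matches the paper's: periodicity is necessary, $\lambda_2=2\cos\frac{2\pi}{n+1}$ has degree $\frac{\phi(n+1)}{2}$ over $\Ql$ and lies in every pair's support except for the ``balanced'' pairs of odd paths (Lemma~\ref{2l1}), the balanced pairs are dispatched via Lemma~\ref{2l2}, and everything reduces to a finite list of $n$. Two remarks on that part: the paper kills the balanced pairs by combining Lemma~\ref{2l2} with the cited fact that a periodic pair state in $P_n$ has eigenvalue support of size at most $4$, whereas you propose extracting $\lambda_1$ from the support and using its degree; that route is viable, but your degree count is off --- for odd $n$ one has $\deg\lambda_1=\frac{\phi(2n+2)}{2}=\phi(n+1)$, not $\frac{\phi(n+1)}{2}$ --- which happens only to strengthen your bound, so no harm is done there.

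The genuine gap is in the residual finite check, which you frame as showing that ``no pair state of $P_n$ can be periodic'' for $n\notin\{3,5,7\}$. That statement is false, so a ``direct support/periodicity check'' cannot close those cases. In $P_4$ the pairs $(1,4)$ and $(2,3)$ both have eigenvalue support $\left\{\frac{-1\pm\sqrt{5}}{2}\right\}$ and are periodic by Theorem~\ref{2t_1}(2); the paper excludes PPST between them via Proposition~\ref{2c1}, since the neighbourhood counts ($2$ versus $4$) show they are not strongly cospectral. In $P_{11}$ the pair $(3,9)$ has support $\{0,\pm\sqrt{3}\}$ and is likewise periodic; the paper rules it out (together with the other reflection-symmetric pairs of $P_9$ and $P_{11}$) through the isomorphic-branch machinery of Theorem~\ref{2c2}: such a pair evolves exactly as a vertex state of the branch $P_4$ or $P_5$, and those paths admit no PST. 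Your plan never brings in strong cospectrality or the branch reduction for the residual cases, so as written it stalls precisely at these periodic pairs; you would need to add one of those two tools (or an explicit computation of the transition matrices of $P_4$, $P_9$, $P_{11}$) to complete the proof.
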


\begin{proof}
    Since periodicity is a necessary condition for the existence of PPST, Lemma \ref{2l1} implies that only $P_4$ may have PPST among all paths on even number of vertices. In \cite[Corollary 3.4]{kim}, we find that if a pair state in $P_n$ is periodic, then the size of its eigenvalue support is at most $4.$ Now, Lemma \ref{2l1} and Lemma \ref{2l2} together implies that there is no periodic pair state in $P_n$ whenever $n$ is odd and $n\geq 13.$ Therefore, it is enough to consider the cases when $n\in\cb{3, 4, 5, 7, 9, 11}.$\\  
{\bf Case  $n\in\cb{3, 5, 7}$:} The path $P_3$ has PST between the end vertices at $\frac{\pi}{\sqrt{2}}$, and is periodic at the internal vertex at the
same time. Therefore, $P_3$ exhibits PPST at  $\frac{\pi}{\sqrt{2}}$ between $(1,2)$ and $(2,3).$ In Section \ref{2s_3}, we find that $P_5$ exhibits PPST at  $\frac{\pi}{2}$ between $(1,5)$ and $(2,4).$ However, there is no PPST from the remaining pairs, as their support contains $\sqrt{3}$ along with one of the eigenvalues $-1$ or $1.$ In the case of $P_7$, we use Theorem \ref{2c2} to conclude that $P_7$ admits PPST at $\frac{\pi}{\sqrt{2}}$ between $(1,7)$ and $(3,5).$ The eigenvalue supports of the remaining pairs except $(2,6)$ contain the eigenvalue $2\cos{\frac{\pi}{8}},$ which results no periodicity in these pairs. Since PPST is monogamous, there is no PPST from $(2,6)$ as well.
 \\
 {\bf Case $n\in \cb{4, 9, 11}$:} One may use Proposition \ref{2c1} to conclude that there is no PPST in $P_4$ between $(1,4)$ and $(2,3).$ Then it is enough to analyze PPST in $P_4$ from $(1,2)$, $(1,3)$, $(2,4)$ and $(3,4).$ Since 
 \[\sigma_{12}=\left\{\frac{-1\pm\sqrt{5}}{2}, \frac{1\pm\sqrt{5}}{2}\right\}=\sigma_{13}=\sigma_{24}=\sigma_{34},\]
there is no PPST in $P_4.$ In the case of $P_9,$ one may observe from Theorem \ref{2c2} that there is no PPST from $(1,9),$ $(2,8),$ $(3,7)$ and $(4,6)$ as there is no PST in $P_4.$ The eigenvalue supports of remaining pairs contain $2\cos{\frac{3\pi}{10}}$ resulting no PPST in $P_9$. Now we consider $n=11,$ and observe that $P_{11}$ admits no PPST from $(1,11),$ $(2,10),$  $(3,9),$ $(4,8),$ $(5,7),$ as there is no PST in $P_5.$ The eigenvalue support of the remaining pairs contain $2\cos{\frac{5\pi}{12}}$ resulting no PPST in $P_{11}.$
 \end{proof}

\begin{figure}
\begin{multicols}{3}
                    \begin{tikzpicture}[scale=0.4,auto=left]
                       \tikzstyle{every node}=[circle, thick, black!90, fill=white, scale=0.65]
               \node[draw,minimum size=0.55cm, 
                inner sep=0 pt] (1) at (1,0) {};
			\node[draw,minimum size=0.55cm, inner sep=0 pt] (2) at (3,0) {$a$};
				\node[draw,minimum size=0.55cm, inner sep=0 pt] (3) at (5,0) {$c$};
    \node[draw,minimum size=0.55cm, 
                inner sep=0 pt] (4) at (7,0) {};
			\node[draw,minimum size=0.55cm, inner sep=0 pt] (5) at (9,0) {$d$};
				\node[draw,minimum size=0.55cm, inner sep=0 pt] (6) at (11,0) {$b$};
			\draw[thick, black!90](1)-- (2)-- (3)--(4)-- (5)-- (6);	
			
			\draw[thick,black!70, dashed={on 5pt off 2.5pt}] (1.2,0.4) .. controls (4,2) and (8,2) .. (10.8,0.4);
                \end{tikzpicture}

                  \begin{tikzpicture}[scale=0.4,auto=left]
                       \tikzstyle{every node}=[circle, thick, black!90, fill=white, scale=0.65]
               \node[draw,minimum size=0.55cm, 
                inner sep=0 pt] (1) at (1,0) {$a$};
			\node[draw,minimum size=0.55cm, inner sep=0 pt] (2) at (3,0) {$c$};
				\node[draw,minimum size=0.55cm, inner sep=0 pt] (3) at (5,0) {};
    \node[draw,minimum size=0.55cm, 
                inner sep=0 pt] (4) at (7,0) {};
			\node[draw,minimum size=0.55cm, inner sep=0 pt] (5) at (9,0) {$d$};
				\node[draw,minimum size=0.55cm, inner sep=0 pt] (6) at (11,0) {$b$};
			\draw[thick, black!90](1)-- (2)-- (3)--(4)-- (5)-- (6);	
			
			\draw[thick,black!70, dashed={on 5pt off 2.5pt}] (5.4,0.2) .. controls (6.5,2) and (7.5,2) .. (8.6,0.2);
   \draw[thick,black!70, dashed={on 5pt off 2.5pt}] (3.4,-0.2) .. controls (4.5,-2) and (5.5,-2) .. (6.6,-0.2);
                \end{tikzpicture}

                   \begin{tikzpicture}[scale=0.4,auto=left]
                       \tikzstyle{every node}=[circle, thick, black!90, fill=white, scale=0.65]
               \node[draw,minimum size=0.55cm, 
                inner sep=0 pt] (1) at (1,0) {};
			\node[draw,minimum size=0.55cm, inner sep=0 pt] (2) at (3,0) {$a$};
				\node[draw,minimum size=0.55cm, inner sep=0 pt] (3) at (5,0) {$c$};
    \node[draw,minimum size=0.55cm, 
                inner sep=0 pt] (4) at (7,0) {};
			\node[draw,minimum size=0.55cm, inner sep=0 pt] (5) at (9,0) {$b$};
				\node[draw,minimum size=0.55cm, inner sep=0 pt] (6) at (11,0) {$d$};
			\draw[thick, black!90](1)-- (2)-- (3)--(4)-- (5)-- (6);	
			
			\draw[thick,black!70, dashed={on 5pt off 2.5pt}] (1.2,0.35) .. controls (3.5,2) and (6.5,2) .. (9.2,0.35);
            
   \draw[thick,black!70, dashed={on 5pt off 2.5pt}] (3.2,-0.35) .. controls (4.5,-1) and (5.5,-1) .. (6.8,-0.35);
   
    \draw[thick,black!70, dashed={on 5pt off 2.5pt}] (7.2,-0.35) .. controls (8.5,-1) and (9.5,-1) .. (10.8,-0.35);
                \end{tikzpicture}

                \end{multicols}
		\caption{PPST in $P_6$ with additional edges.}
  \label{fig2}
	\end{figure}
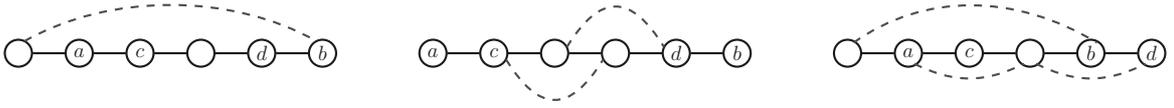

 Although $P_6$ does not exhibit PPST, we use Theorem \ref{2c2} to observe that introducing additional edges to $P_6$ (see Figure \ref{fig2}) gives PPST between $(a,b)$ and $(c,d)$ in the resulting graph. Of course, one may apply Theorem \ref{2c2} to note that there are several ways to have PPST in $P_n$ with a few additional edges. 
 
  \begin{thm}
 Let $P_n$ be a path on $n$ vertices with $n\geq 6,$ where disjoint union of two isomorphic copies of either $P_2$ or $P_3$ appearing as an induced subgraph. Then PPST can be achieved in $P_n$ by adding at most $4$ edges. 
 \end{thm}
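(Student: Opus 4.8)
\emph{Proof proposal.} The plan is to turn the two prescribed copies into a pair of \emph{isomorphic branches} of a graph obtained from $P_n$ by adjoining a handful of edges, and then to invoke Theorem~\ref{2c2}. First I would record that in a path every induced copy of $P_2$ or $P_3$ is a block of two or three \emph{consecutive} vertices, so, writing $P_n$ as $v_1-v_2-\cdots-v_n$, I may assume $X_1$ lies to the left of $X_2$; since $X_1\cup X_2$ is induced, at least one vertex lies strictly between the two blocks. For an end vertex $a$ of a block (its leftmost or rightmost vertex) let its \emph{stub} $s(a)$ be the neighbour of $a$ in $P_n$ on the side away from the block, when such a neighbour exists. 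Every stub lies outside $X_1\cup X_2$, since a neighbour of $X_1$ inside $X_2$ (or conversely) would violate inducedness; and at most two vertices of $X_1$ carry a stub (both vertices when $X_1\cong P_2$, the two ends when $X_1\cong P_3$), likewise for $X_2$.

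Next I would fix an isomorphism $f\colon X_1\to X_2$ and let $\hat f$ be the permutation of $V(P_n)$ agreeing with $f$ on $X_1$, with $f^{-1}$ on $X_2$, and fixing all remaining vertices. I would then form a graph $G$ by adjoining to $P_n$ the following edges, discarding any already present: for each $a\in X_1$ with stub $s(a)$, the edge $\{f(a),s(a)\}$; and for each $a\in X_1$ such that $f(a)$ has stub $s(f(a))$, the edge $\{a,s(f(a))\}$. Since $X_1$ and $X_2$ each have at most two stubbed vertices and $f$ carries ends to ends, this adjoins at most $2+2=4$ edges, each joining a vertex of $X_1\cup X_2$ to a vertex outside $X_1\cup X_2$; hence $X_1\cup X_2$ stays induced in $G$ and $G$ stays connected (it contains the spanning path $P_n$). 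In $G$ the neighbours of $a$ outside $X_1\cup X_2$ and those of $f(a)$ coincide for every $a\in X_1$ (they equal $\{s(a),s(f(a))\}$, omitting a stub that does not exist, and both are empty for the middle vertex of a $P_3$-block); combined with the facts that $\hat f$ carries $E(X_1)$ onto $E(X_2)$, fixes every edge among the outside vertices, and has nothing to check between $X_1$ and $X_2$, this shows $\hat f\in\text{Aut}(G)$. Thus $X_1,X_2$ are isomorphic branches of $G$ in the sense of Lemma~\ref{2t1}.

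Then Theorem~\ref{2c2} would apply with this $G$, with $\hat f$ as the switching automorphism and $\Pi$ the orbit partition of $\langle\hat f\rangle$: setting $\mathcal{B}=\{\tfrac1{\sqrt2}(\e_a-\e_{\hat f(a)}):a\in X_1\}$, the restriction of $U_G(t)$ to $\text{span}(\mathcal{B})$ is, in the basis $\mathcal{B}$, the matrix $U_{X_1}(t)$. Since $X_1$ is $P_2$ or $P_3$ it has PST between two of its vertices $a_0,b_0$ (the two vertices of $P_2$ at $\tfrac\pi2$, the two end vertices of $P_3$ at $\tfrac{\pi}{\sqrt2}$), so $U_{X_1}(\tau)\check{\e}_{a_0}=\gamma\,\check{\e}_{b_0}$ with $|\gamma|=1$; transporting this back through $\mathcal{B}$ gives $U_G(\tau)(\e_{a_0}-\e_{\hat f(a_0)})=\gamma(\e_{b_0}-\e_{\hat f(b_0)})$, i.e.\ PPST in $G$ between the pair states of $(a_0,\hat f(a_0))$ and $(b_0,\hat f(b_0))$, which are linearly independent because $a_0,b_0\in X_1$ and $\hat f(a_0),\hat f(b_0)\in X_2$ are four distinct vertices.

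The hard part will be the bookkeeping in the degenerate configurations rather than anything conceptual: when $X_1$ or $X_2$ abuts an end of $P_n$ a stub disappears (so fewer edges are added), and when the two blocks are separated by a single vertex $g$, that $g$ is at once the right stub of $X_1$ and the left stub of $X_2$, so one must check that the two halves of the recipe stay compatible there (taking $f$ to be the order-reversing isomorphism makes the corresponding block-ends match on their own and needs no edge for them). Running this case analysis also reproduces the exact counts in Figure~\ref{fig2} for $P_6$, and an interior $P_3\sqcup P_3$ with a gap of at least two vertices shows that four added edges are genuinely necessary for some admissible placements, so the constant $4$ is sharp.
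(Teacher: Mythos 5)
Your construction is correct and follows exactly the route the paper intends (and leaves unproved beyond Figure~\ref{fig2}): symmetrize the outside-neighbourhoods of the two induced copies by adding at most $2+2$ edges so that they become isomorphic branches, then apply Theorem~\ref{2c2} together with PST in $P_2$ or $P_3$; your edge count and the degenerate cases check out against the examples in Figure~\ref{fig2}. The only quibble is your closing claim that $4$ is sharp, which you have only shown for your particular recipe rather than for arbitrary edge additions, but the theorem does not require sharpness.
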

 
 A characterization of Laplacian pretty good pair state transfer can be found in \cite{wang2}. Here we provide a sufficient condition for the existence of pair-PGST in $P_n$ relative to the adjacency matrix. In \cite{god4}, Godsil et al. showed that PGST occurs between the end vertices of $P_n$ if and only if $n+1 = p,~2p,$ where $p$ is a prime, or $n+1 = 2^m.$ Moreover, when PGST occurs between the end vertices of $P_n$, then it occurs between vertices $a$ and $n+1-a$ for all $a\neq \frac{n+1}{2}.$ In \cite{bom}, van Bommel complete the characterization of PGST on paths as follows.

\begin{thm}\cite{bom}\label{bom}
 There is pretty good state transfer on $P_n$ between vertices $a$ and $b$ if and only if $a+b=n+1$ and either:
  \begin{enumerate}
  \item $n+1=2^k$, where $k$ is a positive integer, or
  \item $n+1=2^kp,$ where $k$ is a non-negative integer and $p$ is an odd prime, and $a$ is a multiple
 of $2^{k-1}.$
  \end{enumerate}
\end{thm}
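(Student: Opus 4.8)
The plan is to run the standard three-step route to a pretty-good-state-transfer characterisation, specialised to the path. Step one: show PGST forces a \emph{symmetric} pair, $a+b=n+1$. Step two: under that hypothesis, read off the strong-cospectrality signs and the eigenvalue support. Step three: invoke the Kronecker-theorem criterion for PGST and reduce everything to a statement about $\Zl$-linear relations among the numbers $2\cos\!\big(\tfrac{j\pi}{n+1}\big)$, which is then settled by cyclotomic arithmetic. Since the end-vertex case $a=1$ has already been recalled above, the genuinely new content is the divisibility refinement $2^{k-1}\mid a$ for interior pairs when $n+1=2^kp$ with $k\ge 2$, together with the ``only if'' direction that rules out all other shapes of $n+1$.

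For step one I would use the fact, recalled above, that PGST between two states forces them to be strongly cospectral. All eigenvalues of $P_n$ are simple, with $\lambda_j=2\cos\!\big(\tfrac{j\pi}{n+1}\big)$ and eigenvector whose $k$-th entry is proportional to $\sin\!\big(\tfrac{jk\pi}{n+1}\big)$; hence strong cospectrality of $\e_a$ and $\e_b$ is equivalent to $\big|\sin\!\big(\tfrac{ja\pi}{n+1}\big)\big|=\big|\sin\!\big(\tfrac{jb\pi}{n+1}\big)\big|$ for every $j$, and already $j=1$ (which always lies in the support) forces $a+b=n+1$. Fixing $b=n+1-a$, the identity $\sin\!\big(\tfrac{j(n+1-a)\pi}{n+1}\big)=(-1)^{j+1}\sin\!\big(\tfrac{ja\pi}{n+1}\big)$ yields $E_{\lambda_j}\e_b=(-1)^{j+1}E_{\lambda_j}\e_a$, so the sign attached to $\lambda_j$ is $(-1)^{j+1}$ and the support of $(a,b)$ is $S=\{\,j\in\{1,\dots,n\}:(n+1)\nmid ja\,\}$.

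For step three, PGST from $\e_a$ to $\e_{n+1-a}$ holds if and only if (Kronecker's theorem on the torus, using that $\pi$ is transcendental) every integer tuple $(\ell_j)_{j\in S}$ with $\sum_{j\in S}\ell_j\lambda_j=0$ and $\sum_{j\in S}\ell_j=0$ also satisfies $\sum_{j\in S,\ j\text{ even}}\ell_j\equiv 0\pmod 2$. Writing $m=n+1$ and $\zeta=e^{\pi i/m}$, a primitive $2m$-th root of unity, we have $\lambda_j=\zeta^j+\zeta^{-j}$, so the lattice of $\Zl$-relations among $\{\lambda_j:j\in S\}$ is governed by $\Ql(\zeta)$ and its maximal real subfield $\Ql(\zeta+\zeta^{-1})$ of degree $\tfrac{\phi(2m)}{2}$, with Galois conjugacy acting on the $\lambda_j$ through $(\Zl/2m\Zl)^{\times}$ via $j\mapsto tj$. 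I would then split into cases on $m$: if $m=2^k$ then $\phi(2m)=2^k$ and one verifies that no zero-sum relation can carry the forbidden parity for any admissible $a\neq m/2$; if $m=2^kp$ with $p$ an odd prime, separating the indices divisible by $p$ from the rest and tracking the $2$-adic valuations shows the parity obstruction disappears exactly when $2^{k-1}\mid a$; and if $m$ has two distinct odd prime divisors (or otherwise fits neither template) one exhibits an explicit zero-sum relation with $\sum_{j\text{ even}}\ell_j$ odd, which kills PGST for every admissible $a$.

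The hard part is this last step: the cyclotomic-and-parity bookkeeping that pins down precisely which zero-sum relations among $\{\zeta^j+\zeta^{-j}:j\in S\}$ exist and what residue modulo $2$ the quantity $\sum_{j\text{ even}}\ell_j$ can realise. The delicate points are that $\zeta^j$ for \emph{even} $j$ already lies in $\Ql(\zeta^2)=\Ql(\zeta_m)$, so even and odd indices contribute relations of different types and must be handled separately; that the $2$-part and the odd part of $m$ interact through which minimal polynomials of primitive $d$-th roots ($d\mid 2m$) can appear; and that one must check $2^{k-1}$ is the \emph{exact} divisibility threshold, not merely sufficient. This is where the precise form of the statement comes from, and it is essentially the core of van Bommel's argument, building on the earlier number-theoretic analysis of the end-vertex case.
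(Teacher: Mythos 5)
This theorem is quoted in the paper from van Bommel's work \cite{bom} with no proof supplied, so there is no in-paper argument to compare against; your proposal has to stand on its own. Your first two steps are correct and standard: strong cospectrality (which PGST forces) together with the simple spectrum of $P_n$ and the eigenvector entries $\sin\!\big(\tfrac{jk\pi}{n+1}\big)$ does pin down $a+b=n+1$ already from $j=1$, the sign attached to $\lambda_j$ is indeed $(-1)^{j+1}$, and the support is $S=\{j:(n+1)\nmid ja\}$. The reduction to the parity condition on integer relations $\sum_j\ell_j\lambda_j=0$, $\sum_j\ell_j=0$ via Kronecker's theorem is also the correct criterion and is exactly the route van Bommel takes.

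The genuine gap is that the entire content of the theorem lives in the step you defer. Everything you actually carry out (symmetry of the pair, the signs, the support, the Kronecker reformulation) is independent of the arithmetic of $n+1$ and of $a$; the conditions $n+1\in\{2^k,2^kp\}$ and the threshold $2^{k-1}\mid a$ can only emerge from the cyclotomic case analysis that you describe but do not perform. In particular, you assert without argument that for $m=2^k$ no zero-sum relation carries odd parity, that for $m=2^kp$ the obstruction vanishes exactly when $2^{k-1}\mid a$, and that for all other $m$ an explicit bad relation exists --- but producing such relations (e.g.\ by summing $\lambda_{jd}$ over suitable cosets when $d\mid 2m$, and tracking which indices are even) and proving none exist in the good cases is the proof. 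As written, the proposal is a correct and well-organised plan that identifies where the difficulty sits, but it is not a proof: the ``only if'' direction is entirely unestablished, and the exactness of the divisibility threshold $2^{k-1}\mid a$ is asserted rather than derived. To close the gap you would need to exhibit the generators of the relation lattice of $\{\zeta^j+\zeta^{-j}:j\in S\}$ for each shape of $m$ and compute the parity $\sum_{j\ \mathrm{even}}\ell_j$ on each generator, which is the substance of van Bommel's paper.
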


If PGST occurs in a path between $a$ and $b,$ and also between $c$ and $d$ with respect to same time sequence, then one has pair-PGST between $(a,c)$ and $(b,d).$ Theorem \ref{bom} in combination with Theorem \ref{2c2} gives more pair of vertices exhibiting pair-PGST, where a pair of $P_n$ can be realized to appear as isomorphic branches of $P_{2n+1}.$ 

 \begin{thm}\label{4c1}
   A path on $n$ vertices, $P_n$ exhibits pretty good pair state transfer between $(a,n+1-a)$ and $(\frac{n+1}{2}-a,\frac{n+1}{2}+a),$ whenever $a<\frac{n+1}{2}$ with $a\neq\frac{n+1}{4}$ and either:
 \begin{enumerate}
     \item $n+1= 2^{k},$ where $k>2$ is a positive integer, or
     \item $n+1=2^k p,$ where $k$ is a positive integer and $p$ is an odd prime, and $a$ is a multiple
 of $2^{k-2}.$
 \end{enumerate}
 \end{thm}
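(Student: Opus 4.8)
The plan is to exhibit $P_n$ itself as a graph with isomorphic branches and then pull van Bommel's characterisation of PGST on paths (Theorem \ref{bom}) across the correspondence provided by Theorem \ref{2c2}. Both hypotheses force $n+1$ to be even, so I would write $n=2m+1$ with $m=\tfrac{n-1}{2}$ and label the vertices of $P_n$ as $1,2,\dots,n$ in path order. Take $X_1$ to be the subpath induced on $\{1,\dots,m\}$, $X_2$ the subpath induced on $\{m+2,\dots,n\}$, and $f\colon X_1\to X_2$ the order-reversing isomorphism $f(j)=n+1-j$. Verifying the hypotheses of Lemma \ref{2t1} is quick: $m$ and $m+2$ are non-adjacent, so the subgraph induced on $V(X_1)\cup V(X_2)$ is the disjoint union $X_1\cup X_2$; the only vertex outside the two branches is the centre $m+1$, and for $1\le j\le m$ one has $w(j,m+1)=w(f(j),m+1)$ (each side equals $1$ precisely when $j=m$ and is $0$ otherwise). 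Hence $X_1$ and $X_2$ are isomorphic branches of $P_n$, and the automorphism $\hat f$ of Lemma \ref{2t1} is simply the reflection $v\mapsto n+1-v$ of $P_n$.

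Next I would invoke Theorem \ref{2c2} with $\mathcal{B}=\{\tfrac{1}{\sqrt2}(\e_j-\e_{n+1-j}):1\le j\le m\}$: the transition matrix $U_{P_n}(t)$ leaves $W=\operatorname{span}(\mathcal B)$ invariant and, in the ordered orthonormal basis $\mathcal B$, restricts to $U_{X_1}(t)=U_{P_m}(t)$. Since $\mathcal B$ is a fixed finite orthonormal set, the convergence $U_{P_m}(t_k)\e_a\to\gamma\e_b$ is equivalent to $U_{P_n}(t_k)\tfrac{1}{\sqrt2}(\e_a-\e_{n+1-a})\to\gamma\,\tfrac{1}{\sqrt2}(\e_b-\e_{n+1-b})$; in particular, PGST in $P_m$ between $a$ and $b$ implies pair-PGST in $P_n$ between $(a,n+1-a)$ and $(b,n+1-b)$, which is the only direction needed.

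It then remains to feed Theorem \ref{bom} the correct parameters. Set $b=\tfrac{n+1}{2}-a=m+1-a$; the hypotheses $1\le a<\tfrac{n+1}{2}$ and $a\ne\tfrac{n+1}{4}$ give $1\le b\le m$, $a+b=m+1$, and $a\ne b$, while the target pair $(b,n+1-b)$ is exactly $(\tfrac{n+1}{2}-a,\tfrac{n+1}{2}+a)$. Because $m+1=\tfrac{n+1}{2}$, case (1) becomes $m+1=2^{k-1}$ with $k-1\ge 2$, placing $P_m$ in the first clause of Theorem \ref{bom} (PGST between $a$ and $m+1-a$ for every $a\ne\tfrac{m+1}{2}=\tfrac{n+1}{4}$); case (2) becomes $m+1=2^{k-1}p$ with $a$ a multiple of $2^{k-2}=2^{(k-1)-1}$, which is precisely the divisibility condition of the second clause of Theorem \ref{bom} (with the usual convention that $k=1$ imposes no restriction). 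Either way $P_m$ has PGST between $a$ and $b$, and the second paragraph delivers the claimed pair-PGST.

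The main obstacle here is bookkeeping rather than mathematics: one must track three dictionaries simultaneously — vertex $j$ of $X_1\cong P_m$ corresponds to the pair $(j,n+1-j)$ of $P_n$; the parameter shift $k\mapsto k-1$ is forced by $m+1=\tfrac{n+1}{2}$; and the excluded value $a=\tfrac{n+1}{4}$ is exactly the one making the two pair states linearly dependent (equivalently $a=b$ in $P_m$), so it must be removed for the statement to be non-vacuous. No spectral information beyond Theorem \ref{2c2} and Theorem \ref{bom} enters the argument.
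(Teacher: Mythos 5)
Your proposal is correct and follows essentially the same route as the paper: the paper's (very terse) justification is precisely that a pair of copies of $P_{(n-1)/2}$ appear as isomorphic branches of $P_n$, so Theorem \ref{2c2} transports van Bommel's characterisation (Theorem \ref{bom}) on $P_{(n-1)/2}$ to pair-PGST on $P_n$, with the same parameter shift $k\mapsto k-1$ and the same exclusion $a\neq\frac{n+1}{4}$ that you work out. Your write-up simply makes explicit the bookkeeping the paper leaves implicit.
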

It may be of some interest to find whether there is pair-PGST in $P_n$ except the cases mentioned above. Observe that the Cartesian products $P_3\square P_n$ and $C_4\square P_n$ have two copies of $P_n$ as isomorphic branches. Using Theorem \ref{2c2} and Theorem \ref{bom}, one obtains pair-PGST in the Cartesian products. Moreover, let $G$ be an arbitrary graph. Consider $G+P_n,$ the join of  $G$ with $P_n,$ $G\circ P_n,$ the corona product \cite{bar1} of $G$ with $P_n,$ and other such possible cases where the isomorphic branches of $P_n$ are also isomorphic branches of the graph. Theorem \ref{4c1} applies to provide more class of graphs having pair-PGST, including the join $G+P_n,$ the corona $G\circ P_n,$ and others.

 \subsection{Cycles}
  Let $C_n$ be a cycle with vertex set $\mathbb{Z}_n$, where two vertices $j$ and $k$ are adjacent if and only if $j-k\equiv\pm 1\text{ mod }n.$ Observe that a pair of $P_{n-1}$ appear as isomorphic branches of $C_{2n}.$
  It follows from Theorem \ref{2c2} that both $C_6$ and $C_8$ exhibit PPST. In \cite [Theorem 6.5]{kim}, we find that $C_4, C_6,$ and $C_8$ are the only cycles that exhibit PPST. Although there is no PPST in $C_{10},$ we observe using Theorem \ref{2c2} that introducing additional edges to $C_{10}$ (see figure \ref{2fig3})  gives PPST between pairs $(a,b)$ and $(c,d)$ in the resulting graph. The next result finds cycles with a few additional edges exhibiting PPST.

 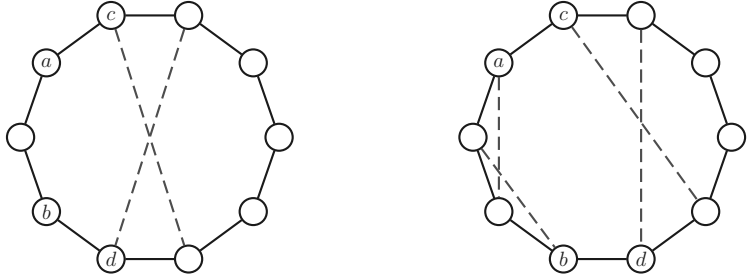
\begin{figure}
		\centering
                    \begin{tikzpicture}[scale=1.7,auto=left]
                       \tikzstyle{every node}=[circle, thick, black!90, fill=white, scale=0.65]

                \node[draw,minimum size=0.55cm, 
                inner sep=0 pt] (1) at (1, 0) {};
				 \node[draw,minimum size=0.55cm, inner sep=0 pt] (2) at (0.80, 0.58) {};
				 \node[draw,minimum size=0.55cm, inner sep=0 pt] (3) at (0.30, 0.95) {};
				 \node[draw,minimum size=0.55cm, 
                 inner sep=0 pt] (4) at (-0.30, 0.95) {$c$};
				 \node[draw,minimum size=0.55cm, inner sep=0 pt] (5) at (-0.80, 0.58) {$a$};
				 \node[draw,minimum size=0.55cm, inner sep=0 pt] (6) at (-1,0) {};
                 \node[draw,minimum size=0.55cm, 
                 inner sep=0 pt] (7) at (-0.80,-0.58) {$b$};
				 \node[draw,minimum size=0.55cm, inner sep=0 pt] (8) at (-0.30,-0.95) {$d$};
				 \node[draw,minimum size=0.55cm, inner sep=0 pt] (9) at (0.30, -0.95) {};
                 \node[draw,minimum size=0.55cm, 
                 inner sep=0 pt] (10) at (0.80, -0.58) {};
                \draw[thick, black!90] (1)-- (2)--(3)--(4)-- (5)-- (6)--(7)--(8)--(9)--(10)--(1);
                  \draw[thick,black!70, dash pattern={on 5pt off 2.5pt}](3)--(8);
      \draw[thick,black!70, dash pattern={on 5pt off 2.5pt}](4)--(9);

      \node[draw,minimum size=0.55cm, 
                inner sep=0 pt] (11) at (4.5, 0) {};
				 \node[draw,minimum size=0.55cm, inner sep=0 pt] (12) at (4.30, 0.58) {};
				 \node[draw,minimum size=0.55cm, inner sep=0 pt] (13) at (3.80, 0.95) {};
				 \node[draw,minimum size=0.55cm, 
                 inner sep=0 pt] (14) at (3.2, 0.95) {$c$};
				 \node[draw,minimum size=0.55cm, inner sep=0 pt] (15) at (2.7, 0.58) {$a$};
				 \node[draw,minimum size=0.55cm, inner sep=0 pt] (16) at (2.5,0) {};
                 \node[draw,minimum size=0.55cm, 
                 inner sep=0 pt] (17) at (2.7,-0.58) {};
				 \node[draw,minimum size=0.55cm, inner sep=0 pt] (18) at (3.2,-0.95) {$b$};
				 \node[draw,minimum size=0.55cm, inner sep=0 pt] (19) at (3.80, -0.95) {$d$};
                 \node[draw,minimum size=0.55cm, 
                 inner sep=0 pt] (20) at (4.30, -0.58) {};
                \draw[thick, black!90] (11)-- (12)--(13)--(14)-- (15)-- (16)--(17)--(18)--(19)--(20)--(11);
                  \draw[thick,black!70, dash pattern={on 5pt off 2.5pt}](13)--(19);
      \draw[thick,black!70, dash pattern={on 5pt off 2.5pt}](14)--(20);
       \draw[thick,black!70, dash pattern={on 5pt off 2.5pt}](15)--(17);       
 \draw[thick,black!70, dash pattern={on 5pt off 2.5pt}](16)--(18);

\end{tikzpicture}	
		\caption{PPST in $C_{10}$ with additional edges.}
  \label{2fig3}
	\end{figure}
	
 \begin{thm}
 Let $C_n$ be a cycle on $n$ vertices with $n>6,$ where disjoint union of two isomorphic copies of $P_2$ (or $P_3$) appearing as an induced subgraph. Then PPST can be achieved in $C_n$ by adding at most $4$ edges. 
 \end{thm}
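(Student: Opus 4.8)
The plan is to mirror the construction used for $P_n$ in the preceding theorem: realise the two prescribed copies of $P_2$ (or $P_3$) as isomorphic branches of a cycle perturbed by a few edges, and then read off PPST from Theorem \ref{2c2} using the classical facts that $P_2$ has PST between its two vertices at time $\pi/2$ and $P_3$ has PST between its two end vertices at time $\pi/\sqrt2$.

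First I would set up notation. Inside a cycle the only induced copies of $P_2$ or $P_3$ are arcs of consecutive vertices, so let $X_1$ and $X_2$ be the two vertex-disjoint arcs on $k\in\{2,3\}$ vertices forming the given induced $2P_k$. Since $X_1$ and $X_2$ are non-adjacent, deleting their vertices from $C_n$ leaves two non-empty arcs $R$ and $S$ of lengths $r,s\ge 1$ with $n=2k+r+s$. Write $X_1=u_1u_2\cdots u_k$, $X_2=v_1v_2\cdots v_k$, $R=\rho_1\cdots\rho_r$, $S=\sigma_1\cdots\sigma_s$, ordered so that the cyclic order of $C_n$ reads $u_1,\dots,u_k,\rho_1,\dots,\rho_r,v_1,\dots,v_k,\sigma_1,\dots,\sigma_s$ and back to $u_1$; thus $u_k\sim\rho_1$, $\rho_r\sim v_1$, $v_k\sim\sigma_1$, $\sigma_s\sim u_1$ in $C_n$.

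Next I would describe the perturbed graph $G$ and the relevant symmetry. Let $\hat f$ be the involution swapping $u_i\leftrightarrow v_{k+1-i}$ for every $i$ and fixing each $\rho_j$ and $\sigma_j$. To make $\hat f$ an automorphism, add on the $R$-side the edges $u_k\rho_r$ and $v_1\rho_1$ when $r\ge 2$ (nothing when $r=1$, since then $\rho_1=\rho_r$ is already adjacent to both $u_k$ and $v_1$), and symmetrically on the $S$-side the edges $v_k\sigma_s$ and $u_1\sigma_1$ when $s\ge 2$; this adds at most $2+2=4$ edges. I would then check directly that $\hat f$ preserves the edge set of $G$ — the edges inside $X_1\cup X_2$ and inside $R\cup S$ are clearly respected, while every boundary edge $u_k\rho_1,\rho_r v_1,v_k\sigma_1,\sigma_s u_1$ and every newly added edge maps to another edge of the same type. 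Since none of the added edges joins $V(X_1)$ to $V(X_2)$, the induced subgraph of $G$ on $V(X_1)\cup V(X_2)$ is still $2P_k$, and each vertex of $X_1$ has exactly the same neighbours outside $V(X_1)\cup V(X_2)$ as its $\hat f$-image. Hence $X_1$ and $X_2$ are isomorphic branches of $G$, and the orbits of $\hat f$ furnish the equitable partition required by Theorem \ref{2c2}.

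Finally I would apply Theorem \ref{2c2} to $G$, $X_1$ and $\hat f$: the pair states $\tfrac1{\sqrt2}(\e_u-\e_{\hat f(u)})$ for $u\in V(X_1)$ evolve under $U_G(t)$ exactly as the vertex states of $X_1\cong P_k$ evolve under $U_{P_k}(t)$. Transporting the PST of $P_k$ between its end vertices $u_1$ and $u_k$, we get PPST in $G$ between $(u_1,\hat f(u_1))$ and $(u_k,\hat f(u_k))$ at the corresponding time; these two pairs are exactly the pairs $(a,b)$ and $(c,d)$ of the statement (cf. Figure \ref{2fig3}). Since at most $4$ edges were added, the theorem follows. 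The only fiddly point I anticipate is the small case analysis on the gap lengths $r,s$ (one edge-pair per gap of length at least $2$, none for a gap of length $1$) and the bookkeeping needed to confirm $\hat f$ remains an automorphism after the edges are inserted; there is no analytic obstacle, as all the quantum-walk content is already contained in Theorem \ref{2c2} and in the elementary PST of $P_2$ and $P_3$.
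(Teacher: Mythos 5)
Your proposal is correct and follows essentially the same route the paper intends: realise the two arcs as isomorphic branches by adding at most two edges per gap (none for a gap of length one) so that the switching involution $\hat f$ becomes an automorphism, then invoke Theorem \ref{2c2} together with PST in $P_2$ or $P_3$. The paper leaves this argument implicit (illustrating it only via Figure \ref{2fig3}), and your edge-count bookkeeping matches those examples exactly.
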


Before we begin the investigation of pair-PGST in cycles, we have the following observation. The map which sends $j$ to $-j$ is an automorphism of $C_{2n},$ fixing only the vertices $0$ and $n.$ The following result is now immediate from Lemma \ref{al1}.
\begin{lem}\label{2l3}
There is no pair-PGST from a pair of antipodal vertices in an even cycle.
\end{lem}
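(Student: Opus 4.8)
The plan is to exploit Lemma \ref{al1}, which says that if a graph has pair-PGST between $(a,b)$ and $(c,d)$, then the stabilizer of $(a,b)$ in $\text{Aut}(G)$ equals the stabilizer of $(c,d)$, and moreover every pair in the orbit of $(a,b)$ under $\text{Aut}(G)$ also enjoys pair-PGST. So I would start by pinning down the relevant automorphism: on $C_{2n}$ with vertex set $\Zl_{2n}$, the reflection $j\mapsto -j$ is an automorphism whose fixed-point set is exactly $\{0,n\}$, since $-j\equiv j\pmod{2n}$ forces $2j\equiv 0$, i.e. $j\in\{0,n\}$.

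Next, consider a pair of antipodal vertices; after applying a rotation automorphism of $C_{2n}$ we may assume without loss of generality that this antipodal pair is $(0,n)$. Suppose for contradiction that $C_{2n}$ admits pair-PGST from $(0,n)$ to some pair $(c,d)$. The reflection $\rho: j\mapsto -j$ fixes both $0$ and $n$, hence fixes the pair state $\frac{1}{\sqrt2}(\e_0-\e_n)$ up to sign — in fact the associated permutation matrix $P$ satisfies $P\e_0=\e_0$ and $P\e_n=\e_n$, so $P$ fixes $\e_0-\e_n$ exactly. Therefore $\rho$ lies in the stabilizer of $(0,n)$. By Lemma \ref{al1}, $\rho$ must also lie in the stabilizer of $(c,d)$, meaning $P(\e_c-\e_d)=\pm(\e_c-\e_d)$, so $\{c,d\}$ is either fixed pointwise or swapped by $\rho$. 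If $\rho$ fixes $c$ and $d$ pointwise then $\{c,d\}\subseteq\{0,n\}$, forcing $(c,d)=(0,n)$ (up to the sign of the pair state), which is the periodic/self-transfer situation rather than transfer to a genuinely different pair; if $\rho$ swaps $c$ and $d$, then $d=-c$, so $(c,d)$ is again a pair of antipodal vertices.

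At this point I have to rule out the remaining scenario: pair-PGST from one antipodal pair to another antipodal pair in $C_{2n}$. I would argue that all antipodal-pair states form a single orbit under rotations, and then use the rotation automorphism more carefully, or directly use strong cospectrality together with the eigenvalue structure of the cycle. Concretely, the pair state $\frac{1}{\sqrt2}(\e_c-\e_{-c})$ for antipodal $(c,-c)$ is, in the eigenbasis of $C_{2n}$, supported only on eigenvectors that change sign under $j\mapsto j+n$ (the "odd" eigenvectors, corresponding to eigenvalues $2\cos(\pi(2\ell+1)/n)$), and using that the rotation $j\mapsto j+1$ is an automorphism whose permutation matrix commutes with $U_{C_{2n}}(t)$, one sees that pair-PGST from $(c,-c)$ to $(c',-c')$ would, by the same limit-uniqueness argument used to derive \eqref{equ4}, force strong cospectrality with very restrictive phase conditions across these eigenvalues; I expect these to be incompatible unless the two antipodal pairs coincide. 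The cleanest route, which I would take first, is simply: since rotations act transitively on antipodal pairs and each antipodal pair $(c,-c)$ is itself fixed (as a pair state, up to sign) by the reflection through $c$, Lemma \ref{al3}-type reasoning (an automorphism fixing essentially that pair) combined with Lemma \ref{al1} yields that pair-PGST from $(0,n)$ could only go to $(0,n)$ itself, i.e. at best periodicity, not transfer to a distinct pair — contradicting the definition of pair-PGST, which requires the target pair state to be linearly independent from the source.

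The main obstacle is the last step: carefully excluding pair-PGST between two \emph{distinct} antipodal pairs. The reflection argument shows the target must be antipodal, but to finish one needs either (i) a transitivity/stabilizer argument showing that the only antipodal pair whose stabilizer contains $\rho: j\mapsto -j$ is $(0,n)$ itself (true: the reflection $j\mapsto -j$ swaps $\{c,-c\}$ only trivially when $c=-c$, i.e. $c\in\{0,n\}$, and fixes both only when $c\in\{0,n\}$), so Lemma \ref{al1}'s stabilizer-equality immediately forces $(c,d)=(0,n)$; or (ii) a spectral argument. I would present route (i): the stabilizer of $(0,n)$ contains $\rho$, the stabilizer of any other antipodal pair $(c,-c)$ with $c\notin\{0,n\}$ does not contain $\rho$ (it is swapped, not fixed — one must check whether swapping is allowed by "stabilizer of a pair", and in the sense of Lemma \ref{al1} a group element fixing the pair state must fix $\e_c-\e_d$ up to sign; swapping sends it to its negative, which is allowed, so one instead uses that $\rho$ fixes $\e_0-\e_n$ but only negates $\e_c-\e_{-c}$, and by the limit-uniqueness argument a negating element cannot be consistent with the positive-fixing element unless... ). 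To avoid this delicacy I would instead invoke the second half of Lemma \ref{al1} directly: $\rho$ fixes $(0,n)$, hence by Lemma \ref{al1} the orbit of $(0,n)$ under $\text{Aut}(C_{2n})$ consists of pairs all having pair-PGST; but also $\rho$ must stabilize $(c,d)$; writing out that $P$ commutes with $U_{C_{2n}}(t)$ and subtracting the two limit equations \eqref{ppgst} and \eqref{aut} exactly as in the derivation of \eqref{equ4}, one obtains $\lim_k U_{C_{2n}}(t_k)(\e_n - P\e_n)=\gamma[P(\e_c-\e_d)-(\e_c-\e_d)]$; since $P\e_n=\e_n$ the left side is $0$, so $P$ fixes $\e_c-\e_d$, forcing $\{c,d\}$ fixed pointwise by $\rho$, hence $\{c,d\}\subseteq\{0,n\}$, contradicting that the target pair state be linearly independent from $\frac{1}{\sqrt2}(\e_0-\e_n)$. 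This is the argument I would write up.
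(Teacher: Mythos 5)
Your final committed argument---using the reflection $j\mapsto -j$ on $C_{2n}$ (which fixes exactly the vertices $0$ and $n$, hence fixes $\e_0-\e_n$ exactly) together with the limit-uniqueness/stabilizer reasoning behind Lemma \ref{al1} to force $P(\e_c-\e_d)=\e_c-\e_d$, hence $\{c,d\}\subseteq\{0,n\}$, contradicting linear independence of the two pair states---is precisely the paper's proof, which it records as ``immediate from Lemma \ref{al1}'' after noting that this reflection fixes only $0$ and $n$. The exploratory detour in the middle of your write-up is harmless since you discard it, though note one slip there: if $\rho$ swaps $c$ and $d$ then $d=-c$ makes $(c,d)$ symmetric about the $0$--$n$ axis, not antipodal.
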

We find in \cite{pal4} that a cycle $C_n$ exhibits PGST if and only if $n=2^k$ for $k\geq 2,$ and it occurs between the pair of antipodal vertices $a$ and $a+\frac{n}{2}.$ As a natural consequence, we have the following.
\begin{thm}\label{2c4}
Let $n=2^k$ where $k\geq 2,$ and consider a cycle $C_n$ having two vertices $a$ and $b.$ Pretty good pair state transfer occurs in $C_n$ between $(a,b)$ and $(a+\frac{n}{2}, b+\frac{n}{2})$ if and only if $a-b\neq \frac{n}{2}.$ 
\end{thm}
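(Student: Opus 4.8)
The plan is to split the claim into its two implications. The ``only if'' direction will follow directly from Lemma~\ref{2l3}, and the ``if'' direction from the known vertex-PGST result for $C_n$ combined with the rotational symmetry of the cycle.

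For the ``only if'' direction I would argue as follows: if $a-b=\tfrac n2$, then $b=a+\tfrac n2$, so $(a,b)$ is a pair of antipodal vertices in an even cycle, and Lemma~\ref{2l3} rules out pair-PGST from $(a,b)$ to any pair — in particular to $(a+\tfrac n2,b+\tfrac n2)$. (One may also simply note that in this case $\e_{a+n/2}-\e_{b+n/2}=-(\e_a-\e_b)$, so the two pair states fail to be linearly independent and the definition does not even apply.)

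For the ``if'' direction, assume $a\neq b$ and $a-b\neq\tfrac n2$, and set $(c,d)=(a+\tfrac n2,b+\tfrac n2)$; then $\e_a-\e_b$ and $\e_c-\e_d$ are linearly independent, since $\{a,b\}=\{c,d\}$ would force $a-b\equiv\tfrac n2$. By \cite{pal4}, $n=2^k$ with $k\ge 2$ guarantees that $C_n$ has vertex-PGST between the antipodal vertices $0$ and $\tfrac n2$: there are $t_\ell\in\Rl$ and a unit $\gamma\in\Cl$ with $U_{C_n}(t_\ell)\e_0\to\gamma\,\e_{n/2}$. The key step is then to promote this to all vertices with a common phase: since the permutation matrix of any automorphism of $C_n$ commutes with the adjacency matrix, and hence with $U_{C_n}(t)$, applying the rotation $\tau_r:v\mapsto v+r$ to both sides and using uniqueness of limits gives $U_{C_n}(t_\ell)\e_r\to\gamma\,\e_{r+n/2}$ with the \emph{same} $\gamma$ for every $r$. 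Subtracting the instances $r=a$ and $r=b$ yields
\[U_{C_n}(t_\ell)\ob{\e_a-\e_b}\longrightarrow\gamma\,\e_{a+n/2}-\gamma\,\e_{b+n/2}=\gamma\ob{\e_c-\e_d},\]
which is exactly pair-PGST between $(a,b)$ and $(c,d)$.

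The main obstacle is the common-phase statement: PGST from each vertex to its antipode holds, but one needs a single time sequence along which the limiting phases agree. This is precisely what vertex-transitivity of $C_n$ supplies. (Alternatively, one could run the argument spectrally, using that rotation by $\tfrac n2$ scales the eigenspace of $\lambda_j=2\cos(2\pi j/n)$ by $(-1)^j$ — the boundary cases $j=0,\tfrac n2$ working out because $\tfrac n2=2^{k-1}$ is even, which is where $k\ge 2$ is used — so that $E_{\lambda_j}\e_{v+n/2}=(-1)^jE_{\lambda_j}\e_v$ and $e^{it_\ell\lambda_j}\to\gamma(-1)^j$ for all $j$; this route also makes transparent that only the eigenvalues in the support of $(a,b)$ matter.)
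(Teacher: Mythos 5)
Your proposal is correct and matches the paper's (implicit) argument: the paper derives Theorem~\ref{2c4} as a direct consequence of the vertex-PGST result for $C_{2^k}$ between antipodal vertices together with Lemma~\ref{2l3} for the excluded case $a-b=\frac{n}{2}$, exactly as you do. Your explicit justification of the common limiting phase via the rotation automorphisms (and the linear-independence check) fills in the only detail the paper leaves unstated.
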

  Suppose $\omega_n = \exp{(\frac{2\pi i}{n})}$ is the primitive $n$-th root of unity. The eigenvalues of $C_n$ are given by $\lambda_j = 2\cos \frac{2j\pi}{n},$ and the corresponding eigenvectors are
 $\left[1, \omega^j_n, {\omega^{2j}_n},\ldots, {\omega^{(n-1)j}_n}\right]^T,$ where $j\in\Zl_n.$ The characterization of PGST on paths in Theorem \ref{bom} together with Theorem \ref{2c2} leads to the following conclusion.
 \begin{thm}\label{2c9}
    The cycle $C_n$ exhibits pretty good pair state transfer between $(a,n-a)$ and $(\frac{n}{2}-a,\frac{n}{2}+a),$ whenever $0<a<\frac{n}{2}$ with $a\neq\frac{n}{4}$ and either:
    \begin{enumerate}
    \item $n = 2^{k},$ where $k$ is a positive integer greater than two or,
    \item $n=2^k p,$ where $k$ is a positive integer and $p$ is an odd prime, and $a=2^{k-2}r,$ for some positive integer $r.$
    \end{enumerate}   
\end{thm}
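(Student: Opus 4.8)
The plan is to realise a path on $n/2-1$ vertices as a pair of isomorphic branches of $C_n$, and then to transport the known characterization of vertex PGST on paths (Theorem \ref{bom}) to pair states via Theorem \ref{2c2}. Note first that in both alternatives of the statement $n$ is even. Identify $V(C_n)$ with $\Zl_n$ and delete the two antipodal vertices $0$ and $\frac{n}{2}$; the remaining vertices split into $X_1=\cb{1,2,\dots,n/2-1}$ and $X_2=\cb{n/2+1,\dots,n-1}$, each inducing a path on $n/2-1$ vertices. The automorphism $\hat f\colon j\mapsto -j$ of $C_n$ fixes $0$ and $\frac{n}{2}$, restricts to a path isomorphism $f\colon X_1\to X_2$ with $f(a)=n-a$, and satisfies $w(a,y)=w\ob{f(a),y}$ for every $a\in X_1$ and $y\in\cb{0,\frac{n}{2}}$ (indeed $0$ is adjacent only to $1$ and $n-1=f(1)$, and $\frac{n}{2}$ only to $n/2-1$ and $f(n/2-1)$). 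Hence $X_1\cong P_{n/2-1}$ and $X_2$ are isomorphic branches of $C_n$ in the sense of Section \ref{2s_3}.

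Next I would apply Theorem \ref{2c2} with $\Pi$ the orbit partition of $\hat f$ — the cells $\cb{a,-a}$ for $a\in X_1$ together with the singletons $\cb{0}$ and $\cb{n/2}$ — and $\mathcal{B}=\cb{\frac{1}{\sqrt2}\ob{\e_a-\e_{-a}}:a\in X_1}$. The span $W$ of $\mathcal{B}$ is exactly the $(-1)$-eigenspace of the permutation matrix of $\hat f$, hence $U_{C_n}(t)$-invariant, and Theorem \ref{2c2} identifies the matrix of the restriction of $U_{C_n}(t)$ to $W$, written in the basis $\mathcal{B}$, with $U_{P_{n/2-1}}(t)$. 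Consequently $\frac{1}{\sqrt2}\ob{\e_b-\e_{-b}}^{T}U_{C_n}(t)\frac{1}{\sqrt2}\ob{\e_a-\e_{-a}}=\check{\e}_b^{T}U_{P_{n/2-1}}(t)\check{\e}_a$ for all $a,b\in X_1$; since $W$ is $U_{C_n}(t)$-invariant, pair-PGST from $(a,-a)$ can only reach pairs of the form $(b,-b)$, and it reaches $(b,-b)$ precisely when $P_{n/2-1}$ has vertex PGST from $a$ to $b$.

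It then remains to match the indices. Put $b=\frac{n}{2}-a$; then $-a\equiv n-a$ and $-b\equiv\frac{n}{2}+a$, so the target pair in the statement is exactly $(b,-b)$, the hypothesis $0<a<\frac{n}{2}$ places $a,b\in X_1$, and $a\neq\frac{n}{4}$ forces $a\neq b$ (it also keeps $(a,n-a)$ off the antipodal pair, in accordance with Lemma \ref{2l3}). Applying Theorem \ref{bom} to $P_{n/2-1}$ amounts to replacing ``$n+1$'' there by $\frac{n}{2}$ and the exclusion ``$a\neq\frac{n+1}{2}$'' by $a\neq\frac{n}{4}$: its first clause, $\frac{n}{2}=2^{k-1}$ with $n/2-1\geq 2$, reads $n=2^{k}$ with $k>2$, and its second clause, $\frac{n}{2}=2^{k-1}p$ with $a$ a multiple of $2^{k-2}$, reads $n=2^{k}p$ with $a=2^{k-2}r$ (the constraint on $a$ being vacuous when $k=1$, matching the $k=0$ case of Theorem \ref{bom}). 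Since $a+b=\frac{n}{2}$ holds automatically, either alternative yields vertex PGST in $P_{n/2-1}$ between $a$ and $b$, hence pair-PGST in $C_n$ between $(a,n-a)$ and $(\frac{n}{2}-a,\frac{n}{2}+a)$.

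I expect the only delicate point to be this last translation: aligning the two number-theoretic alternatives of Theorem \ref{bom} for $P_{n/2-1}$ with those in the statement through the shifts $n+1\mapsto\frac{n}{2}$ and ``multiple of $2^{k-1}$'' $\mapsto$ ``multiple of $2^{k-2}$'', and discarding the degenerate paths $P_0$ and $P_1$ — which is precisely what forces $k>2$ rather than $k\geq 2$ in the first alternative. Everything else is a direct consequence of the branch construction and Theorem \ref{2c2}, exactly as in the proof of Theorem \ref{4c1}.
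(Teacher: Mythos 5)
Your proposal is correct and follows exactly the route the paper intends: realize two copies of $P_{n/2-1}$ as isomorphic branches of $C_n$ by removing the antipodal pair $\cb{0,\frac{n}{2}}$, invoke Theorem \ref{2c2} to identify the dynamics on the pair states $\frac{1}{\sqrt{2}}\ob{\e_a-\e_{-a}}$ with the vertex dynamics of $P_{n/2-1}$, and then translate the arithmetic conditions of Theorem \ref{bom} via $m+1\mapsto\frac{n}{2}$. The paper gives no further detail beyond citing these two theorems, and your index bookkeeping (including the $k>2$ degeneracy check and the role of $a\neq\frac{n}{4}$) is accurate.
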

Using the automorphisms of $C_n,$ one can observe in Theorem \ref{2c9} that if pair-PGST occurs from $(a,n-a)$ then it must occur from each pair of vertices at a distance $2a,$ where $0<a<\frac{n}{4}.$ Before we investigate pair-PGST in the remaining cycles, we include the following observation. Suppose the premise of Lemma \ref{2t1} holds, and pair-PGST occurs  between $\ob{a,\hat{f}(a)}$ and $(b,c)$ in $G.$ Then there exists $t_k\in\mathbb{R}$ and $\gamma\in\Cl$ such that 
\[\displaystyle\lim_{k\to\infty} U_G\ob{t_k}\ob{\e_a-\e_{\hat{f}(a)}}=\gamma\ob{\e_b-\e_c}.\]
Using the permutation matrix $P$ corresponding to the automorphism $\hat{f},$ we have
\[\gamma \ob{\e_{\hat{f}(b)}-\e_{\hat{f}(c)}}=\displaystyle\lim_{k\to\infty} U_G\ob{t_k}P\ob{\e_a-\e_{\hat{f}(a)}}=-\gamma\ob{\e_b-\e_c}.\]
Hence $c=\hat{f}(b)\neq b,$ and the orbits of $\hat{f}$ containing $b$ and $c$ are identical. Then either $b$ or $c$ is a vertex of $X_1,$ and the following result holds.  
\begin{lem}\label{2l4}
Suppose the premise of Theorem \ref{2c2} holds. If pretty good pair state transfer occurs from $\ob{a,\hat{f}(a)},$ for some $a\in X_1,$ then there exists $b\in X_1$ such that it occurs between $\ob{a,\hat{f}(a)}$ and $\ob{b,\hat{f}(b)}.$ Moreover, the graph $X_1$ admits pretty good state transfer between vertices $a$ and $b.$     
\end{lem}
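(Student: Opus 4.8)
The plan is to extract the first assertion directly from the automorphism argument recorded just before the statement, and then to obtain the PGST in $X_1$ by transporting the limiting relation through the block decomposition of Theorem \ref{2c2}.

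For the first assertion, suppose pair-PGST occurs from $(a,\hat{f}(a))$ to some pair $(b,c)$ of linearly independent pair states, so $U_G(t_k)(\e_a-\e_{\hat{f}(a)})\to\gamma(\e_b-\e_c)$ for suitable $t_k\in\Rl$ and $\gamma\in\Cl$. Since $\hat{f}$ is an involution, its permutation matrix $P$ satisfies $P(\e_a-\e_{\hat{f}(a)})=-(\e_a-\e_{\hat{f}(a)})$, and $P$ commutes with $U_G(t)$; applying $P$ to the limit forces $\e_{\hat{f}(b)}-\e_{\hat{f}(c)}=-(\e_b-\e_c)$, hence $c=\hat{f}(b)$. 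Linear independence gives $b\neq c$, so the $\hat{f}$-orbit through $b$ is a two-element orbit, necessarily of the form $\{b',\hat{f}(b')\}$ with $b'\in V(X_1)$; after interchanging the names of $b$ and $c$ if needed (which only changes the sign of $\gamma$), we may assume $b\in V(X_1)$, and pair-PGST occurs between $(a,\hat{f}(a))$ and $(b,\hat{f}(b))$.

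For the moreover part, write $W=\mathrm{span}(\mathcal{B})$. By Theorem \ref{2c2} the subspace $W$ is $U_G(t)$-invariant, and the computation in its proof shows that the linear isometry $\iota\colon W\to\Cl^{V(X_1)}$ sending $\tfrac{1}{\sqrt2}(\e_x-\e_{\hat{f}(x)})$ to the characteristic vector $\check{\e}_x$ of $x$ in $X_1$ intertwines the restriction $U_G(t)|_W$ with $U_{X_1}(t)$ for every $t\in\Rl$. Since $\e_a-\e_{\hat{f}(a)}\in W$, the vectors $U_G(t_k)(\e_a-\e_{\hat{f}(a)})$ stay in $W$; applying $\iota$ to the limit $U_G(t_k)(\e_a-\e_{\hat{f}(a)})\to\gamma(\e_b-\e_{\hat{f}(b)})$ and using that $\iota$ is continuous yields $U_{X_1}(t_k)\check{\e}_a\to\gamma\check{\e}_b$. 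Each $U_{X_1}(t_k)$ is unitary, so the limit has unit norm and $\mb{\gamma}=1$; therefore $\mb{\check{\e}_b^T U_{X_1}(t_k)\check{\e}_a}\to1$, which is precisely PGST between $a$ and $b$ in $X_1$.

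The computations involved are routine once Theorem \ref{2c2} is available; the only step needing a little care is the orbit bookkeeping in the first part, namely checking that the receiving pair must have the form $(b,\hat{f}(b))$ and that $b$ can be chosen in $X_1$ rather than in $X_2$. Everything else is transport along the isometry $\iota$ together with the observation that $\mb{\gamma}=1$ because $U_G(t_k)$ preserves norms.
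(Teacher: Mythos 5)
Your proof is correct and follows essentially the same route as the paper: the first assertion is exactly the paper's automorphism argument (applying the commuting involution $P$ to the limit to force $c=\hat{f}(b)$ and place one of $b,c$ in $X_1$), and the \emph{moreover} part is the intended consequence of Theorem \ref{2c2}. You merely make explicit the transport along the isometry identifying $W$ with $\Cl^{V(X_1)}$, which the paper leaves implicit.
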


Now we provide a complete characterization of pair-PGST on cycles. We show that $C_4$ along with the cycles stated in Theorem \ref{2c9} are the only possible cycles exhibiting pair-PGST.

\begin{thm}\label{ct1}
    A cycle $C_n$ on $n$ vertices admits pretty good pair state transfer if and only if either $n=2^k$ or $n=2^k p,$ where $k$ is a positive integer and $p$ is an odd prime.
\end{thm}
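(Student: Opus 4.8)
I would prove the two implications separately; sufficiency is immediate and the ``only if'' direction carries the weight. When $n=2^{k}$ with $k>2$, or $n=2^{k}p$ with $p$ an odd prime, Theorem~\ref{2c9} already produces pretty good pair state transfer. The one remaining value of the stated form with $n\ge 3$ is $n=4$, and $C_{4}$ enjoys \emph{perfect} pair state transfer between a pair of opposite edges --- this is immediate from Theorem~\ref{2c2} applied with the two single-vertex branches of $C_{4}$, or from a one-line computation with the idempotents $E_{2},E_{0},E_{-2}$ of $C_{4}$ --- and perfect transfer is in particular pretty good. So every $n$ of the claimed form occurs.

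\textbf{Necessity: first reductions.} Suppose $C_{n}$ has pretty good pair state transfer from a pair $(u,v)$. If $n$ is odd then $C_{n}=\mathrm{Cay}(\Zl_{n},\{\pm1\})$ and Lemma~\ref{al2} forbids this, so $n=2m$. Since $\mathrm{Aut}(C_{n})$ is transitive on pairs at a fixed distance, Lemma~\ref{al1} lets me take $(u,v)=(0,d)$ with $d=\mathrm{dist}(u,v)$, $1\le d\le m$; the value $d=m$ is excluded by Lemma~\ref{2l3} (antipodal pair), so $1\le d<m$. If $d=2j$ is even, the reflection $x\mapsto 2j-x$ of $C_{2m}$ fixes exactly the vertices $j$ and $j+m$, so deleting them displays the two resulting copies of $P_{m-1}$ as isomorphic branches of $C_{2m}$ with $(0,d)$ a branch-difference pair $(a,\hat f(a))$; Lemma~\ref{2l4} then forces $P_{m-1}$ to admit vertex pretty good state transfer, whence Theorem~\ref{bom} gives $m=(m-1)+1\in\{2^{k},2^{k}p\}$, so $n=2m\in\{2^{k+1},2^{k+1}p\}$ is of the required form.

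\textbf{Necessity: the odd-distance case, the main obstacle.} The hard case is $d$ odd, $1\le d<m$, because the reflection $\rho\colon x\mapsto d-x$ that swaps $0$ and $d$ has no fixed vertex, so $(0,d)$ is not a branch-difference pair and Lemma~\ref{2l4} does not apply. I would first pin down the target. Because $\rho$ commutes with the adjacency matrix and $\rho(\e_{0}-\e_{d})=-(\e_{0}-\e_{d})$, any $(c,c')$ strongly cospectral to $(0,d)$ satisfies $\rho(\e_{c}-\e_{c'})=-(\e_{c}-\e_{c'})$, i.e. $c+c'\equiv d\pmod n$; a short computation with the $\lambda_{1}$-eigenvectors in $E_{\lambda_{1}}(\e_{c}-\e_{c'})=\pm E_{\lambda_{1}}(\e_{0}-\e_{d})$ (valid since $\lambda_{1}\in\sigma_{0d}$ for $1\le d<m$) then forces $c\in\{d,\,d+m\}$, so $(c,c')$ equals $(0,d)$ or $(m,\,m+d)$. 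As the former is mere periodicity, transfer must go to $(m,m+d)$; and since the rotation $\theta\colon x\mapsto x+m$ commutes with the adjacency matrix and acts on the $\lambda_{j}$-eigenspace of $C_{n}$ as $(-1)^{j}I$ (with $\lambda_{j}=2\cos(2\pi j/n)$), the accompanying sign function is $\lambda_{j}\mapsto(-1)^{j}$. Hence pretty good pair state transfer from $(0,d)$ occurs iff there is $\gamma\in\Cl$ with $|\gamma|=1$ such that $\gamma^{\sum_{j}c_{j}}=(-1)^{\sum_{j\ \mathrm{odd}}c_{j}}$ for every integer relation $\sum_{j}c_{j}\lambda_{j}=0$ among the distinct eigenvalues in $\sigma_{0d}=\{\lambda_{j}:jd\not\equiv0\ (\mathrm{mod}\ n)\}$, a support that contains $-2=\lambda_{m}$ but not $2=\lambda_{0}$.

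\textbf{Closing the odd case.} The final step is to show the displayed condition fails unless $2m$ is a power of two. Since $-2\in\sigma_{0d}$ always and $2\notin\sigma_{0d}$, dropping $\lambda_{0}=2$ from the full vertex support $\{\lambda_{0},\dots,\lambda_{m}\}$ imposes no new freedom on $\gamma$ --- the relation $\lambda_{0}+\lambda_{m}=0$ is replaced by $\lambda_{1}+\lambda_{m-1}=0$, which constrains $\gamma$ in the same way --- and one checks similarly that the remaining eigenvalues absent from $\sigma_{0d}$ contribute no genuinely new solutions. The condition therefore coincides with the Kronecker condition for vertex pretty good state transfer between antipodal vertices of $C_{n}$, which by \cite{pal4} holds precisely when $n=2^{k}$, a value of the stated form; this completes the proof. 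The genuinely delicate point is this reduction, i.e. showing that for every $n$ with an odd prime factor there is an integer relation among the $2\cos(2\pi j/n)$ that the sign function $\lambda_{j}\mapsto(-1)^{j}$ cannot accommodate (for example, combining $\lambda_{1}+\lambda_{m-1}=0$ with a relation expressing $\lambda_{m}=-2$ through a rational basis of $\mathrm{span}_{\Ql}(\sigma_{0d})$) --- this is the pair-state analogue of the cyclic vertex pretty good state transfer theorem, and carrying it out uniformly is the main obstacle.
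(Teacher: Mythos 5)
Your sufficiency argument and your even-distance case coincide with the paper's: removing the two fixed vertices of the reflection swapping $0$ and $d=2j$ exhibits two copies of $P_{\frac{n}{2}-1}$ as isomorphic branches, Lemma \ref{2l4} forces vertex PGST in $P_{\frac{n}{2}-1}$, and Theorem \ref{bom} then forces $\frac{n}{2}\in\cb{2^k, 2^kp}$. Up to that point the proposal is sound and is essentially the paper's Case 1.

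The odd-distance case, however, contains a genuine gap, and you acknowledge it yourself. Your identification of the forced target $(m,m+d)$ and of the sign function $\lambda_j\mapsto(-1)^j$ is plausible, but the entire burden of the case then rests on the claim that the resulting Kronecker-type condition on integer relations among the $2\cos\ob{\frac{2j\pi}{n}}$ ``coincides with'' the antipodal vertex condition of \cite{pal4}; this is asserted via ``one checks similarly'' and then explicitly flagged as ``the main obstacle,'' so nothing is actually proved when $n$ has an odd prime factor. The paper avoids this analysis entirely: in its Case 2 it takes the rotation $f$ of $C_n$ carrying $a$ to $b$ and uses the commutation of the corresponding permutation matrix with $U_G(t)$ (as in Lemma \ref{al1}) to pass from the pair $(a,b)$ at odd distance $d$ to the pair $\ob{a,f(b)}$ at even distance $2d$, thereby reducing the odd case to the already-settled even case in one step. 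That reduction is the idea your proposal is missing; with it, the heavy spectral machinery you set up for odd $d$ becomes unnecessary. (As a side remark, your explicit determination of the target pair via the fixed-point-free reflection $x\mapsto d-x$ could be used to make the paper's rather terse Case 2 fully explicit, but it cannot substitute for the unproved Kronecker reduction on which your version of the case currently depends.)
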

\begin{proof} Since Lemma \ref{al2} holds, there is no pair-PGST in $C_n$ whenever $n$ is odd. It is enough to consider the cases where $n$ is even and $n\notin\cb{ 2^k,2^k p},$ where $k$ is a positive integer and $p$ is an odd prime.\\ 
\textbf{Case 1.} Let $a$ and $c$ be two vertices in $C_n$ at a distance even. Two paths $P$ and $Q$ each having $\frac{n}{2}-1$ vertices appear as isomorphic branches of $C_n,$ where $a\in V(P),$ $c\in V(Q)$ and the switching automorphism $\hat{f}$ of $C_n$ satisfying $c=\hat{f}(a).$ If there is pair-PGST from $(a,c)$, then by Lemma \ref{2l4}, there exists $b\in V(P)$ such that PGST occurs between $a$ and $b.$ Since the path $P$ has $\frac{n}{2}-1$ vertices, there is no PGST in $P$ as in Theorem \ref{bom}. As a consequence, $C_n$ has no pair-PGST in this case.\\
\textbf{Case 2.} Let $C_n$ have pair-PGST from $(a,b),$ where $a$ and $b$ are at a distance odd. There is a rotation $f$ of $C_n$ satisfying $f(a)=b.$ By Lemma \ref{al1}, there exists pair-PGST from $\ob{a,f(b)}$ as well. The distance between $a$ and $f(b)$ is now even, a contradiction. \end{proof}

 The Cartesian products $P_3\square C_n$ and $C_4\square C_n$ have two copies of $C_n$ as isomorphic branches. Using Theorem \ref{2c2} and \cite[Theorem 13]{pal4}, one obtains pair-PGST in the Cartesian products. Moreover, let $G$ be an arbitrary graph. Consider $G+C_n,$ the join of $G$ with $C_n,$ $G\circ C_n,$ the corona product of $G$ with $C_n,$ and other such possible cases where the isomorphic branches of $C_n$ are also isomorphic branches of the graph. Theorem \ref{4c1} applies to provide more classes of graphs having pair-PGST, including the join $G+C_n,$ the corona $G\circ C_n,$ and others.

\section{Multi-state Transfer}\label{2s5}

Let $G$ be a graph on $n$ vertices having the characteristic vectors $\e_1,\e_2,\ldots,\e_n\in\Cl^n$. An $(m,L)$-state of $G$ is a linear combination $l_1\e_{j_1}+l_2\e_{j_2}+\cdots+l_m\e_{j_m},$
where $L=\ob{l_1, l_2, \ldots, l_m}\in\Cl^m$ with $\displaystyle\sum_{k=1}^m |l_k|^2=1.$ One may consider perfect $(m,L)$-state transfer in graphs where vertex states are replaced by $(m,L)$-states in the definition of PST. Let $X_1,X_2,\ldots, X_m$ be pairwise isomorphic branches of a graph $G$ with isomorphisms $f_k:X_1\to X_k,$ where $k=2, 3,\ldots,m.$ The pair states $\frac{1}{\sqrt{2}}\ob{\e_a-\e_{f_k(a)}}$ span the $(m,L)$-state $l_1\e_a+l_2\e_{f_2(a)}+\cdots+l_m\e_{f_m(a)}$ for every choice of $l_1, l_2, \ldots, l_m\in\Cl$ with $\displaystyle\sum_{k=1}^m l_k=0.$ 

   \begin{figure}
		\centering
                    \begin{tikzpicture}[scale=1.5,auto=left]
                       \tikzstyle{every node}=[circle, thick, black!90, fill=white, scale=0.65]

                 \node[draw,minimum size=0.55cm, inner sep=0 pt] (1) at (0, 0) {$u$};
				 \node[draw,minimum size=0.55cm, inner sep=0 pt] (2) at (0, 1.5) {$v$};

                 \node[draw,minimum size=0.55cm, inner sep=0 pt] (3) at (-1.15, 0.5) {$a_1$};
				 \node[draw,minimum size=0.55cm, 
                 inner sep=0 pt] (4) at (-1.15, 2) {$b_1$};

                 \node[draw,minimum size=0.55cm, inner sep=0 pt] (5) at (-0.8, 0.75) {$a_2$};
				 \node[draw,minimum size=0.55cm, inner sep=0 pt] (6) at (-0.8,2.25) {$b_2$};

                 \node[draw,minimum size=0.55cm, inner sep=0 pt] (7) at (-0.45, 1) {$a_3$};
				 \node[draw,minimum size=0.55cm, inner sep=0 pt] (8) at (-0.45,2.5) {$b_3$};

                 \node[draw,minimum size=0.55cm, 
                 inner sep=0 pt] (9) at (1.15,0.5) {$a_l$};
				 \node[draw,minimum size=0.55cm, inner sep=0 pt] (10) at (1.15,2) {$b_l$};
				
                \draw[thick, black!90] (1)-- (2)--(6)--(5)-- (1);
 \draw[thick, black!90]  (2)--(4)-- (3)-- (1)--(9)--(10)--(2);
                \draw[thick, black!90] (1)-- (7)--(8)--(2);

                \draw[thick,black!70, dotted={on 5pt off 2.5pt}] (7).. controls (-0.2,1) and (0.5,1.2)  .. (9);
                \draw[thick,black!70, dotted={on 5pt off 2.5pt}] (8).. controls (-0.2,2.5) and (0.5,2.7)  .. (10);

\end{tikzpicture}	
		\caption{Perfect multi-state transfer in book graph.}
  \label{Fig2}
	\end{figure}
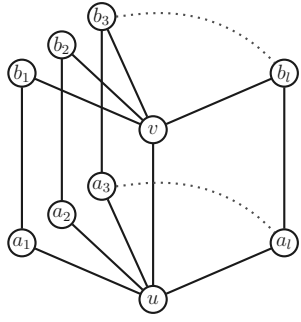

 Consider the book graph $K_{1,l}\square P_2,$ as shown in Figure \ref{Fig2}, where $K_{1,l}$ is a star on $l+1$ vertices. Here the paths with end vertices $a_k$ and $b_k$ are pairwise isomorphic branches of $K_{1,l}\square P_2.$ Let $U_l(t)$ be the transition matrix of $K_{1,l}\square P_2.$ Assuming $l\geq 3$ and using Theorem \ref{2c2}, we have $U_l\ob{\frac{\pi}{2}}(\e_{a_1}-\e_{a_3})=i(\e_{b_1}-\e_{b_3})$ and $U_l\ob{\frac{\pi}{2}}(\e_{a_2}-\e_{a_3})=i(\e_{b_2}-\e_{b_3}).$ Hence, the book graph exhibits perfect $(m,L)$-state transfer with $m=3$ and $L=(1,1,-2),$ since
 $U_l\ob{\frac{\pi}{2}}(\e_{a_1}+\e_{a_2}-2\e_{a_3})=i(\e_{b_1}+\e_{b_2}-2\e_{b_3}).$ The next result follows from Theorem \ref{2c2}.  
\begin{thm}\label{2t3} 
 Let $X_1,X_2,\ldots, X_m$ be pairwise isomorphic branches of a graph $G$ with isomorphisms $f_k:X_1\to X_k,$ where $k=2, 3,\ldots,m.$ If perfect state transfer occurs in $X_1$ between $a$ and $b,$ then $G$ exhibits perfect $(m,L)$-state transfer at the same time between $l_1\e_a+\displaystyle\sum_{k=2}^m l_k\e_{f_k(a)}$ and $l_1\e_b+\displaystyle\sum_{k=2}^m l_k\e_{f_k(b)}$ whenever $\displaystyle\sum_{k=1}^m l_k=0.$
\end{thm}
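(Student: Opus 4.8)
The plan is to reduce the assertion to $m-1$ applications of Theorem \ref{2c2}, one for each pair of branches $(X_1,X_k)$, and then recombine the resulting pair-state identities by linearity. To begin with, recall the elementary observation made just before the statement: since $\sum_{k=1}^m l_k=0$ we have $l_1=-\sum_{k=2}^m l_k$, so that
\[
l_1\e_a+\sum_{k=2}^m l_k\e_{f_k(a)}=-\sum_{k=2}^m l_k\ob{\e_a-\e_{f_k(a)}},
\]
and likewise with $a$ replaced by $b$. Because $X_1,X_2,\ldots,X_m$ are pairwise disjoint induced subgraphs, the characteristic vectors $\e_a,\e_{f_2(a)},\ldots,\e_{f_m(a)}$ are mutually orthogonal, so the state on the left-hand side has squared norm $\sum_{k=1}^m\mb{l_k}^2=1$ and is a legitimate $(m,L)$-state, as is its counterpart at $b$.

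Next, for each fixed $k\in\cb{2,\ldots,m}$ I would apply Theorem \ref{2c2} to the isomorphic branches $X_1$ and $X_k$, with the switching automorphism $\hat f_k$ of $G$ that interchanges $X_1$ with $X_k$ via $f_k$ and fixes every other vertex, including all vertices of the remaining branches. Theorem \ref{2c2} shows that $U_G(t)$ leaves the span of $\cb{\tfrac{1}{\sqrt2}\ob{\e_c-\e_{f_k(c)}}:c\in V(X_1)}$ invariant and acts on it as $U_{X_1}(t)$ relative to that basis; in coordinates this reads $U_G(t)\ob{\e_a-\e_{f_k(a)}}=\sum_{c\in V(X_1)}\tb{U_{X_1}(t)}_{c,a}\ob{\e_c-\e_{f_k(c)}}$. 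Since PST in $X_1$ from $a$ to $b$ at time $\tau$ means $U_{X_1}(\tau)\e_a=\gamma\e_b$ for a single $\gamma\in\Cl$ with $\mb{\gamma}=1$, equivalently $\tb{U_{X_1}(\tau)}_{c,a}=\gamma\,\delta_{c,b}$, the sum collapses to
\[
U_G(\tau)\ob{\e_a-\e_{f_k(a)}}=\gamma\ob{\e_b-\e_{f_k(b)}}\qquad\text{for every }k=2,\ldots,m.
\]

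Multiplying this identity by $-l_k$, summing over $k$, and using the first display together with the linearity of $U_G(\tau)$ then gives
\[
U_G(\tau)\ob{l_1\e_a+\sum_{k=2}^m l_k\e_{f_k(a)}}=\gamma\ob{l_1\e_b+\sum_{k=2}^m l_k\e_{f_k(b)}},
\]
and since $\mb{\gamma}=1$ this is precisely perfect $(m,L)$-state transfer at time $\tau$, completing the argument.

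The step I expect to require the most care is the verification that each $\hat f_k$ really is an automorphism of $G$ and that $X_1,X_k$ satisfy the definition of isomorphic branches inside $G$ even though the other branches are also present; concretely one needs $w(c,y)=w\ob{f_k(c),y}$ for all $c\in V(X_1)$ and all $y\in V(G)\setminus V(X_1\cup X_k)$, including those $y$ lying in some $X_j$ with $j\neq 1,k$. This is exactly what the hypothesis of \emph{pairwise} isomorphic branches should be read to provide. Everything else --- the passage to pair states, the $m-1$ invocations of Theorem \ref{2c2}, and the final linear recombination --- is routine bookkeeping.
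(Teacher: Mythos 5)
Your proposal is correct and follows essentially the same route as the paper, which proves the theorem by observing (in the paragraph preceding the statement) that the pair states $\frac{1}{\sqrt{2}}\ob{\e_a-\e_{f_k(a)}}$ span the $(m,L)$-states with $\sum_k l_k=0$ and then invoking Theorem \ref{2c2} for each pair of branches. Your explicit decomposition $l_1\e_a+\sum_{k=2}^m l_k\e_{f_k(a)}=-\sum_{k=2}^m l_k\ob{\e_a-\e_{f_k(a)}}$, the collapse of $U_G(\tau)\ob{\e_a-\e_{f_k(a)}}$ to $\gamma\ob{\e_b-\e_{f_k(b)}}$ via PST in $X_1$, and the final linear recombination are exactly the intended argument.
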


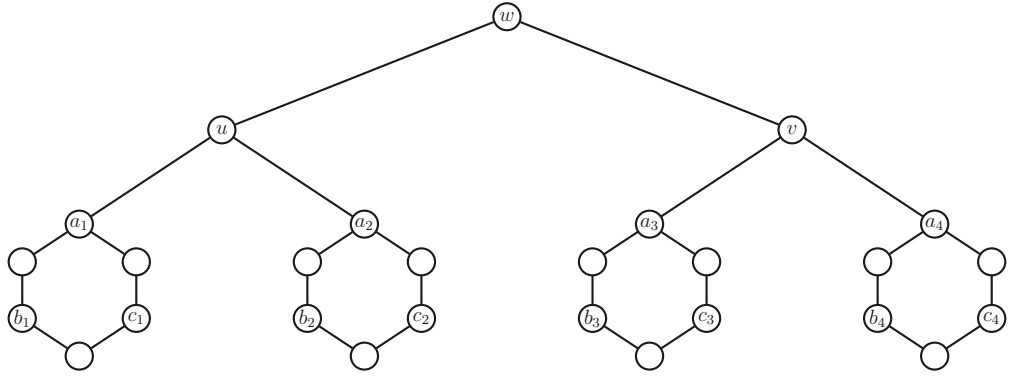
\begin{figure}
		\centering
                    \begin{tikzpicture}[scale=.5,auto=left]
                       \tikzstyle{every node}=[circle, thick, black!90, fill=white, scale=0.65]
                    
                \node[draw,minimum size=0.55cm, 
                inner sep=0 pt] (1) at (0,0) {$c_1$};
				  \node[draw,minimum size=0.55cm, inner sep=0 pt] (2) at (0,1.5) {};
				  \node[draw,minimum size=0.55cm, 
                  inner sep=0 pt] (3) at (-1.5,2.5)  {$a_1$};
                  \node[draw,minimum size=0.55cm, 
                  inner sep=0 pt] (4) at (-3,1.5) {};
                  \node[draw,minimum size=0.55cm, inner sep=0 pt] (5) at (-3, 0) {$b_1$};
                  \node[draw,minimum size=0.55cm, inner sep=0 pt] (6) at (-1.5, -1) {};

                  \draw[thick, black!90] (1)-- (2)--(3)--(4)--(5)--(6)--(1);

                   \node[draw,minimum size=0.55cm, 
                inner sep=0 pt] (7) at (7.5,0) {$c_2$};
				  \node[draw,minimum size=0.55cm, inner sep=0 pt] (8) at (7.5,1.5) {};
				  \node[draw,minimum size=0.55cm, 
                  inner sep=0 pt] (9) at (6,2.5)  {$a_2$};
                  \node[draw,minimum size=0.55cm, 
                  inner sep=0 pt] (10) at (4.5,1.5) {};
                  \node[draw,minimum size=0.55cm, inner sep=0 pt] (11) at (4.5, 0) {$b_2$};
                  \node[draw,minimum size=0.55cm, inner sep=0 pt] (12) at (6, -1) {};
                   \draw[thick, black!90] (7)-- (8)--(9)--(10)--(11)--(12)--(7);

                     \node[draw,minimum size=0.55cm, 
                inner sep=0 pt] (13) at (15,0) {$c_3$};
				  \node[draw,minimum size=0.55cm, inner sep=0 pt] (14) at (15,1.5) {};
				  \node[draw,minimum size=0.55cm, 
                  inner sep=0 pt] (15) at (13.5,2.5)  {$a_3$};
                  \node[draw,minimum size=0.55cm, 
                  inner sep=0 pt] (16) at (12,1.5) {};
                  \node[draw,minimum size=0.55cm, inner sep=0 pt] (17) at (12, 0) {$b_3$};
                  \node[draw,minimum size=0.55cm, inner sep=0 pt] (18) at (13.5, -1) {};

                  \draw[thick, black!90] (13)--(14)--(15)--(16)--(17)--(18)--(13);

                   \node[draw,minimum size=0.55cm, 
                inner sep=0 pt] (19) at (22.5,0) {$c_4$};
				  \node[draw,minimum size=0.55cm, inner sep=0 pt] (20) at (22.5,1.5) {};
				  \node[draw,minimum size=0.55cm, 
                  inner sep=0 pt] (21) at (21,2.5)  {$a_4$};
                  \node[draw,minimum size=0.55cm, 
                  inner sep=0 pt] (22) at (19.5,1.5) {};
                  \node[draw,minimum size=0.55cm, inner sep=0 pt] (23) at (19.5, 0) {$b_4$};
                  \node[draw,minimum size=0.55cm, inner sep=0 pt] (24) at (21, -1) {};
                   \draw[thick, black!90] (19)-- (20)--(21)--(22)--(23)--(24)--(19);

9
                  \node[draw,minimum size=0.55cm, 
                  inner sep=0 pt] (25) at (2.25,5)  {$u$};
                  \node[draw,minimum size=0.55cm, 
                  inner sep=0 pt] (26) at (17.25,5)  {$v$};
                  \node[draw,minimum size=0.55cm, 
                  inner sep=0 pt] (27) at (9.75,8)  {$w$};
                 \draw[thick, black!90] (3)--(25)-- (9);
                 \draw[thick, black!90] (25)-- (27)--(26);
                 \draw[thick, black!90] (15)--(26)-- (21);

\end{tikzpicture}	
		\caption{A graph with perfect multi-state transfer.}
  \label{2Fig7}
	\end{figure}
  Let $G$ be the graph as given in Figure \ref{2Fig7}. Two copies of $C_6$ are appearing as isomorphic branches in $G\setminus\cb{w}.$ The cycle $C_6$ has perfect $s$-pair state transfer \cite[Theorem 6.5]{kim} at time $\pi$ between $\e_{a_1}+\frac{1}{2}\e_{b_1}$ and $\e_{c_1}+\frac{1}{2}\e_{b_1}.$ By Theorem \ref{2c2}, the graph $G\setminus\cb{w}$ exhibits perfect state transfer in  between $\e_{a_1}-\e_{a_2}+\frac{1}{2}\ob{\e_{b_1}-\e_{b_2}}$ and $\e_{c_1}-\e_{c_2}+\frac{1}{2}\ob{\e_{b_1}-\e_{b_2}}.$ Again we apply Theorem \ref{2c2} to obtain perfect state transfer in $G$ between
  $\e_{a_1}-\e_{a_2}-\e_{a_3}+\e_{a_4}+$ $\frac{1}{2}\ob{\e_{b_1}-\e_{b_2}-\e_{b_3}+\e_{b_4}}$ and $\e_{c_1}-\e_{c_2}-\e_{c_3}+\e_{c_4}+\frac{1}{2}\ob{\e_{b_1}-\e_{b_2}-\e_{b_3}+\e_{b_4}}.$ The following result provides a framework for constructing infinite family of graphs exhibiting multi-state transfer. The proof is immediate from Theorem \ref{2c2}.
\begin{thm}\label{2t9}
    Suppose the premise of Theorem \ref{2c2} holds. If $X_1$ has perfect state transfer between $\mu$ and $\zeta$, then $G$ exhibits perfect state transfer between $(\mu-f(\mu))$ and $(\zeta-f(\zeta)).$  
\end{thm}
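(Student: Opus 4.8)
The plan is to transport the perfect state transfer relation through the block decomposition of Theorem~\ref{2c2}; no new idea is required beyond careful bookkeeping. First I would recall that Theorem~\ref{2c2} shows the subspace $W=\mathrm{span}(\mathcal{B})$, with $\mathcal{B}=\bigl\{\frac{1}{\sqrt{2}}(\e_a-\e_{\hat{f}(a)}):a\in X_1\bigr\}$, is invariant under $U_G(t)$ for every $t$, and that the matrix of the restriction $U_G(t)\bigl\vert_W$ with respect to the ordered orthonormal basis $\mathcal{B}$ equals $U_{X_1}(t)$. Equivalently, the linear isometry $\iota\colon\Cl^{V(X_1)}\to W$ determined by $\iota(\check{\e}_a)=\frac{1}{\sqrt{2}}(\e_a-\e_{\hat{f}(a)})$ satisfies the intertwining relation $U_G(t)\,\iota=\iota\,U_{X_1}(t)$ for all $t\in\Rl$.

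Next I would make $\iota$ explicit on an arbitrary state. If $\mu=\sum_{a\in V(X_1)}\mu_a\check{\e}_a$ is a state of $X_1$, then by linearity
\[
\iota(\mu)=\frac{1}{\sqrt{2}}\sum_{a\in V(X_1)}\mu_a\bigl(\e_a-\e_{\hat{f}(a)}\bigr)=\frac{1}{\sqrt{2}}\bigl(\mu-\hat{f}(\mu)\bigr),
\]
where on the right $\mu$ is regarded as the vector of $\Cl^{V(G)}$ supported on $V(X_1)$ and $\hat{f}(\mu)$ is its image under the switching automorphism. Since $\hat{f}$ carries $X_1$ onto $X_2$ via $f$, the vector $\hat{f}(\mu)$ is exactly $f(\mu)$, the $f$-copy of the state on the branch $X_2$; hence $\iota(\mu)=\frac{1}{\sqrt{2}}(\mu-f(\mu))$, and likewise $\iota(\zeta)=\frac{1}{\sqrt{2}}(\zeta-f(\zeta))$.

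Finally, assume $X_1$ has perfect state transfer between the states $\mu$ and $\zeta$, that is $U_{X_1}(\tau)\mu=\gamma\zeta$ for some $\tau\in\Rl$ and $\gamma\in\Cl$ with $\mb{\gamma}=1$. Applying $\iota$ and the intertwining relation gives
\[
U_G(\tau)\,\tfrac{1}{\sqrt{2}}\bigl(\mu-f(\mu)\bigr)=U_G(\tau)\,\iota(\mu)=\iota\bigl(U_{X_1}(\tau)\mu\bigr)=\gamma\,\iota(\zeta)=\gamma\,\tfrac{1}{\sqrt{2}}\bigl(\zeta-f(\zeta)\bigr),
\]
and cancelling $\frac{1}{\sqrt{2}}$ yields $U_G(\tau)(\mu-f(\mu))=\gamma(\zeta-f(\zeta))$, which is perfect state transfer in $G$ between $(\mu-f(\mu))$ and $(\zeta-f(\zeta))$; linear independence of these two states in $G$ follows from that of $\mu$ and $\zeta$ in $X_1$ since $\iota$ is injective. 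The only step deserving genuine care is identifying $\iota(\mu)$ with $\mu-f(\mu)$ — verifying that extending the vertex-to-pair-state dictionary of Theorem~\ref{2c2} linearly reproduces precisely the difference of an embedded state and its $f$-image — and that is the computation I would write out in full; everything else is a direct transcription, so I do not anticipate a real obstacle.
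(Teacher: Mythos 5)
Your proposal is correct and follows exactly the route the paper intends: the paper states the result is ``immediate from Theorem~\ref{2c2},'' and your argument simply makes explicit the intertwining relation $U_G(t)\,\iota=\iota\,U_{X_1}(t)$ encoded in the block decomposition of that theorem and applies it linearly to the state $\mu$. The careful identification of $\iota(\mu)$ with $\tfrac{1}{\sqrt{2}}(\mu-f(\mu))$ is the only nontrivial bookkeeping, and you handle it correctly.
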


 \section*{Disclosure statement}
  No potential conflict of interest was reported by the author(s).
 
\section*{Acknowledgements}
H. Pal is funded by the Science and Engineering Research Board (Project: SRG/2021/000522).
S. Mohapatra is supported by the Department of Science and Technology (INSPIRE: IF210209).


\bibliographystyle{abbrv}
\bibliography{references}

\end{document}